\documentclass[ejsv2,preprint,noshowframe]{imsart}

\usepackage{amsmath,amssymb}
\RequirePackage[colorlinks,citecolor=blue,urlcolor=blue, linkcolor=magenta]{hyperref}
\usepackage{graphicx}

\usepackage{subfigure}
\usepackage{multirow}
\usepackage{booktabs}
\let\hat\widehat
\let\tilde\widetilde

\startlocaldefs
\newtheorem{theorem}{Theorem}

\newtheorem{note}{Note}
\newtheorem{example}[note]{Example}

\newtheorem{proposition}[theorem]{Proposition}
\newtheorem{remark}[note]{Remark}

\DeclareMathOperator*{\argmax}{argmax}

\newcommand{\E}{\mbox{$\mathbb{E}$}}

\endlocaldefs

\begin{document}
\begin{frontmatter}
\title{On efficiency gains via augmenting a tiny sample with
a massive auxiliary sample}
\runtitle{Efficiency gains via augmenting a tiny sample}

\begin{aug}
\author[A]{\fnms{Yen-Chi}~\snm{Chen}\ead[label=e1]{yenchic@uw.edu}}
\address[A]{Department of Statistics, University of Washington\printead[presep={\\}]{e1}}
\runauthor{Y.-C. Chen}
\end{aug}

\begin{abstract}
In this paper, we study the problem of augmenting 
a tiny target sample with a massive 
auxiliary sample.
Utilizing Tukey's factorization, there are two
popular approaches: the inverse probability weight (IPW)
and the full-likelihood (FL) methods. 
We show that the IPW approach suffers from 
the limited target sample problem
while the FL method may estimate some model parameters
at the rate of the massive auxiliary sample size, 
a phenomenon we call full efficiency gain. 
We study the theory behind the full efficiency gain
for exponential families and mixtures of exponential 
families. 
We also study the efficiency gain for 
the IPW method under a nonparametric procedure
and show how it can achieve a parametric rate
of the target sample size. 
As a side note, we also discuss how 
one may use FL to train neural network models
simultaneously for both the target distribution
and the odds model.
\end{abstract}


\begin{keyword}
\kwd{data augmentation}
\kwd{inverse probability weighting}
\kwd{efficiency}
\kwd{exponential family}
\kwd{Ising model}
\end{keyword}
\end{frontmatter}

\section{Introduction}

In many modern scientific studies, 
researchers face a data 
imbalance problem: they possess a small
sample from a target population of interest, 
alongside a massive but biased auxiliary sample. 
For instance,
in Astronomy, we may have a massive
galaxy sample from a large-scale survey
such as the Sloan Digital Sky Survey \cite{york2000sloan},
but our object of interest may
be a tiny target sample
(such as special types of galaxies).
In epidemiology, 
a researcher might 
have access to a massive database of Electronic 
Health Records \cite{stuart2011use},
but the population of interest may be 
from a special subpopulation under the conducted
clinical trial that is rare relative to 
the general population.



This leads to the following question:
\begin{quote}
{\normalsize\emph{Can we extract information from a massive, 
biased auxiliary sample to improve 
statistical inference on the target population?}}
\end{quote}


To bridge the two populations and correct for 
the bias, there are two popular estimation 
approaches:
\begin{itemize}
\item \textbf{Inverse probability weighting (IPW)} 
\cite{robins1994estimation, seaman2013review}: A two-stage 
procedure where we first fit an odds model to estimate 
weights,
and then maximize the weighted likelihood 
to estimate the target distribution.
\item \textbf{Full-likelihood (FL)} 
\cite{qin2002estimation, little2019statistical}: 
A single-stage procedure where we jointly specify the odds 
model and the target distribution model, optimizing the full 
joint likelihood.
\end{itemize}
While both methods are statistically consistent 
under correct model specification as
the target sample size goes to infinity,
their asymptotic behavior can differ dramatically.

\subsection{Motivating example: Gaussian-logistic model}  
\label{sec::GL}

\begin{figure}
  \centering
\includegraphics[width=0.6\textwidth]{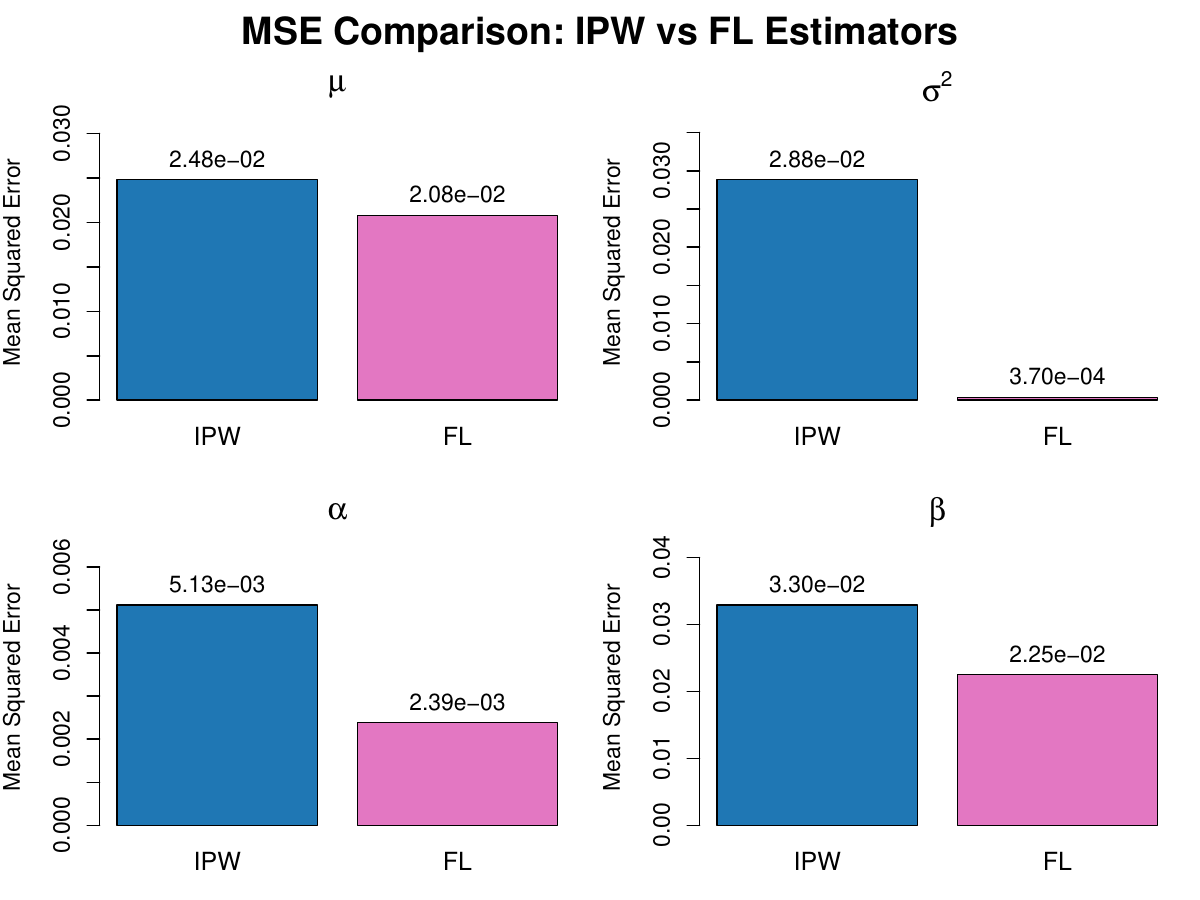}
\caption{MSE comparison between inverse probability weighting
(IPW) and full-likelihood (FL) estimators over 1,000 
iterations when both models are correctly specified. The target
sample size $n_0 = 50$ and the auxiliary sample size 
$n_1 = 5000$. We call the 
dramatic reduction in MSE for $\sigma^2$ under the FL approach as 
the \emph{full efficiency gain}.}
\label{fig::mse}
\end{figure}

Consider a simple univariate scenario where we place a Gaussian model 
on the target density $p(x|A=0)$, i.e., 
$X|A=0\sim N(\mu, \sigma^2)$ and $A=0$ denotes that
the data is from target population.
And we assume a logistic regression model
$\log \frac{P(A=1|x)}{P(A=0|x)} = \alpha+\beta x$,
where $A=1$ means that the observation is from auxiliary
population.
Using Tukey's factorization (Section \ref{sec::tukey}),
this leads to an IPW (Section \ref{sec::IPW}) and a 
FL (Section \ref{sec::FL}) method for estimating 
all four parameters.



Figure~\ref{fig::mse} shows an example of comparing the IPW
and the FL methods for estimating parameters for the
Gaussian-logistic model above with $n_0=50$ and $n_1 = 5000$. We see that FL has a uniformly 
smaller mean squared error (MSE) due to the fact that FL approach reaches the
Cram\'er-Rao lower bound \cite{vanderVaart1998asymptotic}.
Moreover, we notice a striking difference in the variance
parameter $\sigma^2$--the MSE of $\sigma^2$ under FL is 
nearly 100 times smaller than that under IPW! The
details of this simulation are provided in Appendix \ref{sec::sim1}.

This huge reduction prompts us to the question: 
\begin{quote}
{\normalsize\emph{How can the variance parameter be estimated at such a high 
precision (which we call full efficiency gain) while other 
parameters only have mild improvements?
And is this a general phenomenon?}}
\end{quote}

In this paper, we will study this problem
via investigating convergence rates and efficiency theory
of the FL methods. 
We will uncover scenarios where the full efficiency
gain occurs, which is driven by parameter orthogonality \cite{cox1987parameter}.

\subsection{Related work and main contributions}

\emph{Related work.}
The use of auxiliary sample
for inference on a target population 
is connected to several 
streams of literature, but our specific 
problem remains largely unsolved. Most 
related work in statistics frames this as 
``data integration'' within survey sampling, 
where the goal is to adjust non-probability 
samples using probability reference 
samples \cite{elliott2017inference, chen2020doubly, 
mercer2017how, kim2021data, 
wang2025information, golini2024integrating}. 
However, the central topic of survey sampling
focused on robustness and reduction in bias,
so the IPW approach is their primary tool.
In this paper, our major focus is on the efficiency,
so we will primarily target our analysis 
on the FL approach.


The problem we are working on mirrors the use of 
summary-level statistics from external big data 
\cite{chatterjee2016constrained}, but our method 
exploits the full individual-level auxiliary sample. 
Furthermore, while the fundamental density ratio 
factorization we use is well-known in classical 
case-control studies \cite{anderson1972separate, prentice1979logistic, qin1997goodness} and 
machine learning for covariate shift 
\cite{sugiyama2012density, shimodaira2000improving, gretton2009covariate, chen2013quantile, franks2020flexible}, 
the exact conditions under which a massive 
auxiliary sample yields a fast convergence rate 
for target parameters remain theoretically 
obscure in the statistical literature.

\emph{Main contributions.}
\begin{itemize}
\item 
We explain how to use Tukey's factorization
to construct an IPW and a FL model and
compare them on the simple Gaussian-logistic
regression model (Sections \ref{sec::IPW} and \ref{sec::FL}
and Proposition \ref{prop::FL}).

\item We study the full efficiency gain of FL under exponential 
families and show the orthogonality
between the parameter and the `unidentifiable curve'
(Theorems~\ref{thm::exp} and \ref{thm::rep})
is vital to full efficiency gain.

\item The full efficiency gain
also occurs for a mixture of exponential family
models (Theorem~\ref{thm::mixture}).
Notably, the odds model also changes the 
component weights. 

\item 
We also show that the network structure in an Ising model can be estimated
with a full efficiency gain 
(Example \ref{sec::ising} and Section \ref{sec::sim::ising}).

\item We study the efficiency gain for
the IPW approach under a nonparametric procedure
(Sections \ref{sec::ipw::kde} and
\ref{sec::ipw::re})
and show that it can achieve a parametric rate
of the target sample size (Theorem \ref{thm::RE}).

\item We also explain how to use neural net models for the
FL approach. The gradient can be computed easily as long as 
under a conventional deep generative model
(Section \ref{sec::DL}).

\end{itemize}

\section{Background and problem setup}

To make the problem more concrete, we consider a very simple univariate setting where the variable of interest $X\in\mathbb{R}$. All the analysis can be easily generalized to multivariate case. And we include a binary indicator $A\in\{0,1\}$ such that 
$A=0$ if the observation is from the target population. 
$A=1$ if the observation is from the auxiliary population.

We denote $n_0$ and $n_1$ to be the sample size of the target and the auxiliary sample. In our setup, 
$$
n_0\ll n_1
$$ 
and total sample size is $n = n_0+n_1$. 
The (combined) data can be written as IID random variables
$
(X_1,A_1),\cdots, (X_n,A_n).
$
In this data augmentation setting, 
we allow both $n_0,n_1\to\infty$
but the ratio 
$\epsilon = \frac{n_0}{n_1+n_0}\rightarrow0$,
so we are particularly interested
in the scenario where the rate
depends on $n$ (or equivalently, $n_1$)
rather than $n_0$.


\subsection{Tukey's factorization}
\label{sec::tukey}

A simple approach to incorporate the auxiliary 
sample is via \emph{Tukey's factorization}
\cite{tukey1986discussion, franks2020flexible, suen2026modeling}. 
Consider the odds
$$
O(x) = \frac{P(A=1|x)}{P(A=0|x)}. 
$$
The odds has the following nice property:
$$
O(x) = \frac{P(A=1|x)}{P(A=0|x)} = \frac{p(x|A=1)P(A=1)}{p(x|A=0)P(A=0)}.
$$
Therefore, we obtain the following equality
\begin{equation}
\begin{aligned}
p(x|A=0) &= \frac{p(x|A=1) P(A=1)}{P(A=0) O(x)} \\
&\approx  \frac{n_1}{n_0} \cdot \frac{p(x|A=1)}{O(x)}.
\end{aligned}
\label{eq::tukey}
\end{equation}

The above equality/approximation implies a powerful property: we can apply a weighting approach to the massive auxiliary samples to learn the distribution under the target population!

\section{Inverse probability weighting (IPW) approach}  
\label{sec::IPW}

To utilize the Tukey's factorization, 
we first estimate the odds via an odds estimator $\hat O(x)$.
And then we use the estimated odds 
in Equation \eqref{eq::tukey} to 
estimate $p(x|A=0)$.

To illustrate the idea, we consider the Gaussian-logistic example
in Section \ref{sec::GL}.
In this case, our model of the odds is 
$
O(x) = e^{\alpha+\beta x}.
$
Fitting a logistic regression leads to 
$\hat \alpha_{IPW},\hat \beta_{IPW}$, which further implies an 
estimated odds
$$
\hat O(x) = e^{\hat \alpha_{IPW} + \hat \beta_{IPW} x}.
$$

Once we have obtained the estimated odds, we plug it into the Tukey's factorization and obtain
$$
p(x|A=0) \approx  \frac{n_1}{n_0} \cdot \frac{p(x|A=1)}{\hat O(x)}.
$$
Now let $\hat f_O(x) = \frac{n_1}{n_0} \cdot \frac{p(x|A=1)}{\hat O(x)}$ be the right-hand-side of the above approximation and 
$
\phi(x;\mu,\sigma^2) = \frac{1}{\sqrt{2\pi \sigma^2}} e^{-\frac{x^2}{2\sigma^2}}
$
be the PDF of $N(\mu,\sigma^2)$.

Consider the minimization of the Kullback-Leibler (KL) divergence \cite{kullback1951information} between $\hat f_O(x)$ and $\phi(x;\mu,\sigma^2)$:
\begin{align*}
{\sf KL}(\hat f_O\| \phi(\cdot;\mu,\sigma^2)) &= - \int \hat f_O(x) \log \phi(x;\mu,\sigma^2)dx + C \\
&= - \int \frac{1}{\hat O(x)}\frac{n_1}{n_0} [\log \phi(x;\mu,\sigma^2)] p(x|A=1)dx +C,
\end{align*}
where $C$ does not depend on $\mu,\sigma^2$.
Thus, dropping the constant $C$ and replacing $p(x|A=1) dx = dF(x|A=1)$ by the empirical distribution over auxiliary observations, we obtain
$$
{\sf KL}(\hat f_O\| \phi(\cdot;\mu,\sigma^2))\approx - \frac{1}{n_0} \sum_{i=1}^n  \frac{A_i}{\hat O(X_i)} \log \phi(X_i;\mu,\sigma^2).
$$
Combining this with the KL divergence of $p(x|A=0)$ versus the same Gaussian model leads to the following weighted log-likelihood:
\begin{equation}
\ell_{IPW}(\mu, \sigma^2) = \sum_{i=1}^n \left[(1-A_i) +  \frac{A_i}{\hat O(X_i)}\right] \log \phi(X_i; \mu,\sigma^2).
\label{eq::ipw_lik}
\end{equation}
We then estimate $\mu,\sigma^2$ via the maximum likelihood estimator (MLE):
$$
\hat \mu_{IPW}, \hat \sigma^2_{IPW} = \argmax_{\mu, \sigma^2}\,\, \ell_{IPW}(\mu, \sigma^2),
$$
which is in fact just the weighted sample mean and variance.
The above procedure can be easily generalized
to other parametric models 
\cite{wang2025information, golini2024integrating}; we just replace
the Gaussian likelihood $\phi(X_i;\mu,\sigma^2)$ with 
other model's likelihood $\ell(\theta|X_i)$
in equation \eqref{eq::ipw_lik}.

\subsection{Inefficiency of IPW}
While the above approach seems to be very reasonable and is a very common procedure that people would have applied, it does little improvement on the sample size problem ($n_0\ll n_1$)! Although the weighted log-likelihood $\ell_{IPW}(\mu, \sigma^2)$ sums over all $n$ observations so it has a low uncertainty, the majority uncertainty is in fact from \emph{estimating odds models}. 
Unfortunately, when fitting an odds model, the effective sample size depends on the smaller sample size between the two datasets. 
In our scenario $n_0\ll n_1$, the uncertainty of $\hat \beta_{IPW}$ is at the order of $O_P(1/\sqrt{n_0})$:
$$
\hat \alpha_{IPW} - \alpha_* = O_P\left(\sqrt{\frac{1}{n_0}}\right),\quad
\hat \beta_{IPW} - \beta_* = O_P\left(\sqrt{\frac{1}{n_0}}\right).
$$
Therefore, even if all models are correct, i.e., $p(x|A=0)\sim N(\mu_*,\sigma_*^2)$ and $O(x) = e^{\alpha_{*} + \beta_{*}x}$, we still have 
\begin{align*}
\hat \mu_{IPW} - \mu_* = O_P\left(\sqrt{\frac{1}{n_0}}\right),\quad
\hat \sigma^2_{IPW} - \sigma^2_* = O_P\left(\sqrt{\frac{1}{n_0}}\right).
\end{align*}

The above result indicates unfortunate news
to the IPW approach: it is not statistically 
consistent unless $n_0 \to \infty$. 
If the target sample size is fixed while $n_1\to \infty$,
we cannot consistently estimate any parameter
of the target distribution.

\section{Full-likelihood (FL) model}
\label{sec::FL}

While the IPW does not improve the effective sample size, a simple modification to create a full likelihood (FL) model allows some improvements on the effective sample size. 

In the IPW approach, we first estimate the odds and then freeze the odds to estimate the final model parameters $\mu,\sigma^2$. Interestingly, when both the odds model and the target distribution model are given, we immediately have a full-data distribution for $(X,A)$ via the following equality:
\begin{align*}
p(x,a) &= [p(x|A=0) P(A=0)]^{1-a} \cdot [p(x|A=1)P(A=1)]^a\\
&= \left[p(x|A=0) P(A=0)\right]^{1-a} \cdot \left[P(A=0) O(x)p(x|A=0)\right]^a\\
& = p(x|A=0)P(A=0)\cdot O(x)^a,
\end{align*}
where the second equality follows from Tukey's factorization
in equation \eqref{eq::tukey}. 

The above equality shows an elegant form of the likelihood function after re-arrangements:
\begin{equation}
\log p(x,a) = \log p(x|A=0) + a \log O(x)+\log P(A=0).
\label{eq::full}
\end{equation}

Equation \eqref{eq::full}
implies that the likelihood function can
be decomposed into the follow three components:
\begin{itemize}
\item \textbf{Target distribution $p(x|A=0)$.} 
In the Gaussian example, $p(x|A=0) \sim N(\mu,\sigma^2)$ so 
$$
\log p(x|A=0) = \log \phi(x; \mu,\sigma^2).
$$
\item \textbf{Odds model $O(x)$.} In logistic regression model, we immediately have 
$$
\log O(x) = \alpha+\beta x.
$$
\item \textbf{Marginal probability $P(A=0)$.} While it takes some algebra, the marginal probability based on the above two models is 
\begin{equation*}
P(A=0) =\frac{1}{1+\int p(x|A=0) O(x)dx}= \frac{1}{1+e^{\alpha + \mu \beta + \frac{1}{2}\beta^2\sigma^2}}.
\end{equation*}
\end{itemize}

An appealing feature of the decomposition in equation \eqref{eq::full}
is that 
the third component, the marginal probability, is
independent of the data.
Therefore, when calculating the Fisher's Information matrix, 
we can drop this term and focus on the score function 
from the first two components;
see the proofs of Theorem~\ref{thm::exp} in Appendix \ref{sec::proof_exp}.

Therefore,
the full log-likelihood of
the Gaussian-logistic model in Section \eqref{sec::GL} 
is
$$
\ell(\mu,\sigma^2, \alpha,\beta\vert{}x,a) = \log \phi(x; \mu,\sigma^2) + a(\alpha + \beta x) - \log\left(1+e^{\alpha + \mu \beta + \frac{1}{2}\beta^2\sigma^2}\right)
$$




We then estimate all four parameters \emph{simultaneously} by 
maximizing the full-likelihood:
\begin{equation}
\hat \mu_{F}, \hat \sigma^2_F, \hat \alpha_F, \hat \beta_F= \argmax_{\mu,\sigma^2,\alpha,\beta} \,\,\frac{1}{n}\sum_{i=1}^n \ell(\mu,\sigma^2, \alpha,\beta\vert{}X_i,A_i).
\label{eq::GMLE}
\end{equation}

Here is a striking result about the accuracy of the estimators. 
\begin{proposition}[Improvement of the Gaussian full-likelihood]
\label{prop::FL}
Assume that
$p(x|A=0)\sim N(\mu_*,\sigma^2_*)$ and $O(x)=\exp(\alpha_*+\beta_*x)$,
i.e., the model is correctly specified.
Let $n_1/n\rightarrow \pi_* = P(A=1)$ 
be a fixed probability.
The MLE in equation \eqref{eq::GMLE}
is asymptotically normal with 
the asymptotic covariance matrix for 
$(\hat{\mu}_F, \hat{\sigma}^2_F, \hat{\beta}_F)$ being
$$
n\cdot {\sf Var}\begin{pmatrix} \hat{\mu}_F \\ \hat{\sigma}_F^2 \\ \hat{\beta}_F 
\end{pmatrix} \xrightarrow{p} 
\begin{bmatrix} \frac{\sigma_*^2}{1-\pi_*} & 0 & -\frac{1}{1-\pi_*} \\ 
  0 & 2\sigma_*^4 & -2\beta_*\sigma_*^2 \\ 
  -\frac{1}{1-\pi_*} & -2\beta_*\sigma_*^2 & \frac{1}{\pi_*(1-\pi_*)\sigma_*^2} + 2\beta_*^2 \end{bmatrix}
$$
and $n\cdot{\sf Var}(\hat \alpha_F) \xrightarrow{p} \frac{1}{\pi_*(1-\pi_*)}+\zeta_*$, 
where $\zeta_*>0$ is a constant.

Because $n_0/n\xrightarrow{p}1-\pi_* $, we further have 
\begin{align*}
\hat \mu_{F} - \mu_* &= O_P\left(\sqrt{\frac{1}{n_0}}\right),
\hat \sigma^2_{F} - \sigma^2_* = O_P\left(\sqrt{\frac{1}{n}}\right),\\
\hat \alpha_{F} - \alpha_* &= O_P\left(\sqrt{\frac{1}{n_0}+\frac{1}{n_1}}\right),
\hat \beta_{F} - \beta_* = O_P\left(\sqrt{\frac{1}{n_0}+\frac{1}{n_1}}\right).
\end{align*}
\end{proposition}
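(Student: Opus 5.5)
The plan is to avoid inverting the $4\times 4$ Fisher information in $(\mu,\sigma^2,\alpha,\beta)$ directly. Instead I will reparametrize the model so that it becomes a product of simple, essentially independent pieces, and then transfer the result back by the delta method together with the invariance of the MLE.

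\textbf{Reparametrization.} From equation \eqref{eq::full} the joint density is
$$p(x,a)=\phi(x;\mu,\sigma^2)\,e^{a(\alpha+\beta x)}\big/\big(1+e^{\alpha+\mu\beta+\frac12\beta^2\sigma^2}\big).$$
Completing the square gives $\phi(x;\mu,\sigma^2)e^{\beta x}\propto\phi(x;\mu+\beta\sigma^2,\sigma^2)$. Hence the model is equivalent to the following:
\begin{itemize}
\item $A\sim\mathrm{Bernoulli}(\pi)$;
\item $X\mid A=0\sim N(\mu,\sigma^2)$;
\item $X\mid A=1\sim N(\nu,\sigma^2)$, with $\nu=\mu+\beta\sigma^2$.
\end{itemize}
The parameters are related by $\mathrm{logit}(\pi)=\alpha+\mu\beta+\frac12\beta^2\sigma^2$. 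The map $(\mu,\sigma^2,\alpha,\beta)\mapsto(\pi,\mu,\nu,\sigma^2)$ is a smooth bijection, with inverse $\beta=(\nu-\mu)/\sigma^2$ and $\alpha=\mathrm{logit}(\pi)-\mu\beta-\frac12\beta^2\sigma^2$. Therefore the FL estimator in \eqref{eq::GMLE} is the image of the MLE in the new parametrization.

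\textbf{Fisher information in the new coordinates.} Here the MLE is explicit:
\begin{itemize}
\item $\hat\pi=n_1/n$;
\item $\hat\mu$ and $\hat\nu$ are the two group means;
\item $\hat\sigma^2$ is the pooled within-group variance.
\end{itemize}
The per-observation Fisher information is diagonal: $\frac{1}{\pi(1-\pi)}$ for $\pi$, $\frac{1-\pi}{\sigma^2}$ for $\mu$, $\frac{\pi}{\sigma^2}$ for $\nu$, and $\frac{1}{2\sigma^4}$ for $\sigma^2$. The last entry holds because every observation, from either group, carries information about the common variance once its group mean is profiled out. The cross terms vanish by the usual Gaussian mean–variance orthogonality and because $A$ is ancillary for the conditional parameters. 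Standard MLE theory for this regular exponential family then gives asymptotic normality, with
$$n\cdot\mathrm{Var}(\hat\pi,\hat\mu,\hat\nu,\hat\sigma^2)\to\mathrm{diag}\big(\pi(1-\pi),\ \sigma^2/(1-\pi),\ \sigma^2/\pi,\ 2\sigma^4\big).$$

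\textbf{Delta method.} I transfer this to $(\mu,\sigma^2,\beta)$. The component for $\mu$ is unchanged, and so is the one for $\sigma^2$. For $\beta=(\nu-\mu)/\sigma^2$ the gradient with respect to $(\mu,\nu,\sigma^2)$ is $(-1/\sigma^2,\ 1/\sigma^2,\ -\beta/\sigma^2)$. This yields:
\begin{itemize}
\item $\mathrm{Var}(\beta)\approx\frac{1}{n}\big[\frac{1}{\pi(1-\pi)\sigma^2}+2\beta^2\big]$;
\item $\mathrm{Cov}(\mu,\beta)\approx-\frac{1}{n(1-\pi)}$;
\item $\mathrm{Cov}(\sigma^2,\beta)\approx-\frac{2\beta\sigma^2}{n}$;
\item $\mathrm{Cov}(\mu,\sigma^2)=0$.
\end{itemize}
This is exactly the displayed matrix. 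For $\alpha$, its gradient has entry $\frac{1}{\pi(1-\pi)}$ in the $\pi$ coordinate, and $\hat\pi$ is asymptotically independent of the rest. So its variance is $\frac{1}{\pi(1-\pi)}$ plus a nonnegative term $\zeta_*$ coming from $(\mu,\nu,\sigma^2)$. I will check that $\zeta_*>0$ by noting that $\partial\alpha/\partial\mu=-\beta+\beta=0$ from the $\mu$-path only partly cancels, and that the $\nu$-derivative $-\mu/\sigma^2-\beta$ is generically nonzero. Finally I plug in $1-\pi_*\asymp n_0/n$ and $\pi_*(1-\pi_*)\asymp n_0n_1/n^2$, which gives the stated rates. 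In particular $\hat\sigma^2_F$ attains the $n^{-1/2}$ rate because its variance entry $2\sigma_*^4$ does not blow up as $1-\pi_*$ shrinks.

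\textbf{Main obstacle.} The main subtlety is the rate statement under $\epsilon=n_0/n\to0$. The proposition fixes $\pi_*$, so the fixed-$\pi_*$ asymptotics only justify the rates heuristically. To make this rigorous I would use the explicit closed forms rather than the general MLE theory:
\begin{itemize}
\item the group means give $\hat\mu-\mu=O_P(n_0^{-1/2})$ and $\hat\nu-\nu=O_P(n_1^{-1/2})$;
\item the pooled variance gives $\hat\sigma^2-\sigma^2=O_P(n^{-1/2})$ directly;
\item for $\beta$ and $\alpha$ (via $\hat\pi$), these rates propagate through the smooth maps.
\end{itemize}
Taken together, this bypasses the degenerating Fisher information.
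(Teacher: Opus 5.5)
Your proof is correct, and it takes a genuinely different route from the paper.

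\textbf{How the two routes differ.} The paper stays in the original coordinates. By the law of total variance it writes the information as $\mathcal{I}_0+\pi(1-\pi)vv^T$. In the Schur complement with respect to $\alpha$ the rank-one between-group term cancels exactly, so the paper then inverts only the $3\times 3$ within-group block $\mathcal{I}_0$ for $(\mu,\sigma^2,\beta)$. It reads off ${\sf Var}(\hat\alpha_F)$ as $\frac{1}{n}\big[\frac{1}{\pi_*(1-\pi_*)}+v_{-a}^T\mathcal{I}_0^{-1}v_{-a}\big]$. You instead use conjugacy of the Gaussian under the exponential tilt to move to $(\pi,\mu,\nu,\sigma^2)$. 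There the likelihood factorizes, the information is diagonal, and the MLE is explicit. The delta method then reproduces every entry of the displayed matrix; I checked each one.

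\textbf{What your route gains.} First, the closed forms make the rates rigorous even when $n_0/n\to 0$ or $n_0$ is fixed. For example, given the $A_i$, $\hat\sigma^2_F$ is distributed exactly as $\sigma_*^2\chi^2_{n-2}/n$. The paper only substitutes $\pi_*\approx n_1/n$ into a fixed-$\pi_*$ limit. Second, you get $\zeta_*$ explicitly.

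\textbf{Correcting the $\zeta_*$ step.} Drop the partial-cancellation remark and write directly $\alpha=\mathrm{logit}(\pi)-(\nu^2-\mu^2)/(2\sigma^2)$. This gives $\zeta_*=\frac{\mu_*^2}{\sigma_*^2(1-\pi_*)}+\frac{\nu_*^2}{\sigma_*^2\pi_*}+\frac{(\nu_*^2-\mu_*^2)^2}{2\sigma_*^4}$, which agrees with the paper's $v_{-a}^T\mathcal{I}_0^{-1}v_{-a}$. However, it vanishes when $\mu_*=\beta_*=0$. So you can only claim $\zeta_*\ge 0$, with equality exactly in that case. The statement's strict positivity needs $(\mu_*,\beta_*)\neq(0,0)$; the paper's proof also passes over this point.

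\textbf{What the paper's route gains.} Reusability. Your diagonalization relies on Gaussian mean--variance orthogonality and on $k=d_2$. The paper's computation (the intercept absorbs the between-group term, then one expands the within-group block) is the template reused in Theorem \ref{thm::exp}. There the shared $\lambda_1$ is correlated with $S_2$ and $k<d_2$ is allowed, so your factorization is not available.
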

Note that 
in Proposition \ref{prop::FL},
we assume that $n_0/n_1$ is a fixed ratio
for simplicity, which yields
an elegant asymptotic covariance matrix. 
A similar result in terms of 
convergence rates can be obtained under
$n_0/n_1\rightarrow 0 $;
see Theorem~\ref{thm::exp} in the next section.

In the joint covariance matrix, we do not include
$\hat \alpha_F$ since the inclusion of it will
erase the elegant form of the current $3\times 3$ matrix. 
The \emph{convergence rate of the variance is 
improved} drastically. It now depends on the total sample size
so it enjoys the \emph{full efficiency gain}. 

The full efficiency gain
can be viewed as another form of statistical
consistency: the target sample
size $n_0$ is fixed but the auxiliary sample size
$n_1\to\infty$. 
Proposition \ref{prop::FL} states that 
in the Gaussian-logistic model, 
while the IPW is inconsistent, 
the FL approach can 
consistently estimate the variance 
parameter.


\section{The geometry of full efficiency gain}

In this section, we study the scenarios where
the full efficiency gain (i.e. the effective sample size
is $n$, not just $n_0$) of the FL approach method could occur.
We will consider an exponential family model for the target distribution $p(x|A=0)$ and
a generalized linear model for the odds $O(x)$. 
Specifically, we assume that
\begin{equation}
    p(x|A=0) = h(x)\exp(\lambda^T S(x)+\Psi(\lambda)) 
    \label{eq::exp1}
\end{equation}
is an exponential family model with parameter $\lambda$
and sufficient statistics $S(x)$. 
And we assume a linear model for the odds:
\begin{equation}
    \log O(x) = \alpha + \beta^T T(x),
    \label{eq::glm1}
\end{equation}
where $T(x)$ is a known function. 
The conventional logistic regression corresponds to 
$T(x) = x$.
In addition to the Gaussian example,
we also offer an example on Ising model
in Section \ref{sec::ising} and
a Gamma distribution and 
a Beta distribution in Appendix 
\ref{sec::other_examples}.
Let $\hat \lambda, \hat \alpha,\hat \beta$
be the MLE of the likelihood
model from equations \eqref{eq::full}, \eqref{eq::exp1}, and \eqref{eq::glm1}.

\subsection{Full efficiency gain}
Before presenting the theorems, we formalize the 
asymptotic regime. Let $\theta_*$ be the true 
data-generating parameter vector (or the 
KL-divergence projection if misspecified). 
Under standard regularity conditions, the maximum 
likelihood estimator (MLE) is asymptotically normal. 
For simplicity, we assume the exponential family and odds 
models are correctly specified. 
We discuss the generalization to model mis-specification in 
Remark \ref{rm::mis}.
To capture the imbalanced data integration setting, 
we define $\epsilon = n_0 / n$. We consider the 
asymptotic regime where $n \to \infty$ while the 
target population fraction $\epsilon \to 0$. 
We will show that despite this singular limit, 
certain parameters decouple from the missingness 
penalty and maintain a fast $\mathcal{O}(1/n)$ 
convergence rate (or equivalently, an 
$\mathcal{O}(1)$ scaled asymptotic variance limit).

{\bf Assumptions.}
\begin{itemize}
\item[\bf (A1)] We can partition $S(x) = (S_1(x), S_2(x))\in\mathbb{R}^{d_1+d_2}$ 
such that
$T(x) \in \mathbb{R}^k$ can be written 
as $T(x) = Q S_2(x)$ for some fixed matrix 
$Q \in \mathbb{R}^{k \times d_2}$ with 
$k \le d_2$.
\item[\bf (A2)] The sufficient statistics 
$S(x) = (S_1(x), S_2(x))$ is affinely independent,
i.e., there is no vector $v$ and constant $c$ such 
that $v^TS(x)= c$ with a probability $1$.
\item[\bf (A3)] 
Let $\lambda_{1,*}, \lambda_{2,*}, \beta_*$ be 
the true parameter values and they fall within
the interior of the parameter space. Also,
$(\lambda_{1,*},\lambda_{2,*}+Q^T \beta_*)$ is inside
the interior of the parameter space of $\lambda$.
\item[\bf (A4)]
The Fisher's information matrix 
\begin{equation}
  {\sf Cov}(S(X) | A=a) = 
  \begin{bmatrix} V_{11}^{(a)} & V_{12}^{(a)} 
    \\ V_{21}^{(a)} & V_{22}^{(a)} \end{bmatrix}
    \in \mathbb{R}^{(d_1+d_2)\times (d_1+d_2)}
    \label{eq::fisher}
\end{equation}
is strictly positive definite for $a=0,1$
at a small neighborhood of the true parameter,
where $V_{\ell m}^{(a)} = {\sf Cov}(S_\ell(X), S_m(X)|A=a)$.
\end{itemize}

Assumption (A1) is the key condition
that links the target distribution
model to the odds model. 
It holds for a wide range of 
models such as Gaussian-logistic and Ising-logistic;
we offer more examples in Appendix \ref{sec::other_examples}.
Assumptions (A2-4) are conventional
assumptions in the  MLE theory
so they are very mild conditions \cite{vanderVaart1998asymptotic}. 
Note that the extra condition in (A3)
is due to the fact
that $p(x|A=1)$ will be the same exponential family
model with $(\lambda_{1,*},\lambda_{2,*}+Q^T \beta_*)$ 
being the true parameter.

\begin{theorem}
Under assumptions (A1-4),
as $n_0,n_1\to\infty$ with  $n_0/n_1\to0$,
the MLE 
for $\lambda_1$ is asymptotically normal at the fast total-sample rate:
$$
\sqrt{n}(\hat \lambda_1 - \lambda_{1,*}) \xrightarrow{d} N(0, \Sigma_{\text{fast}}),
$$
where $\Sigma_{\text{fast}} = \left( V_{11}^{(1)} - V_{12}^{(1)} (V_{22}^{(1)})^{-1} V_{21}^{(1)} \right)^{-1}$ .
Moreover,
$$
\sqrt{n_0} \begin{pmatrix} \hat{\lambda}_2 - \lambda_{2,*} \\ \hat{\beta} - \beta_* \end{pmatrix} \xrightarrow{d} N\left(0, \begin{bmatrix} Q^T (Q V_{22}^{(0)} Q^T)^{-1} Q & -Q^T (Q V_{22}^{(0)} Q^T)^{-1} \\ -(Q V_{22}^{(0)} Q^T)^{-1} Q & (Q V_{22}^{(0)} Q^T)^{-1} \end{bmatrix} \right).
$$
\label{thm::exp}
\end{theorem}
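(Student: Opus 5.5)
The plan is to reparametrize so that the likelihood splits into a target-sample part and an auxiliary-sample part. Then I would compute the Fisher information with the two different sample sizes kept explicit, and read off the rates by a scaled block inversion.

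\textbf{Reparametrization.} Using \eqref{eq::full}, \eqref{eq::exp1}, \eqref{eq::glm1} and (A1), the joint density factorizes as $p(x,a)=P(A=a)\,p_a(x)$. Here $A\sim\mathrm{Bernoulli}(\pi)$, $p_0$ is the exponential family with natural parameter $(\lambda_1,\lambda_2)$, and $p_1$ is the same family with natural parameter $\eta=(\lambda_1,\eta_2)$, where $\eta_2=\lambda_2+Q^T\beta$. Given the other parameters, the map $\alpha\mapsto\pi$ is a bijection. I would therefore work with $\theta=(\pi,\lambda_1,\eta_2,\beta)$ and set $\lambda_2=\eta_2-Q^T\beta$. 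The Bernoulli factor depends only on $\pi$ and the $X$-factors do not depend on $\pi$, so the information for $\pi$ is orthogonal to the rest and can be dropped.

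\textbf{Information matrix.} Conditioning on the $A_i$'s, the observed information for $\vartheta=(\lambda_1,\eta_2,\beta)$ is
\[
\mathcal I_n=n_1\begin{bmatrix}V^{(1)}&0\\0&0\end{bmatrix}+n_0\,J^TV^{(0)}J,\qquad J=\begin{bmatrix}I&0&0\\0&I&-Q^T\end{bmatrix}.
\]
The $\beta$ direction receives information only through the $n_0$ term. There it equals $n_0\,QV_{22}^{(0)}Q^T$, which is invertible by (A4) once $Q$ has full row rank $k$; I would note that this is implicitly needed for identifiability of $\beta$.

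\textbf{Scaled block inversion.} Let $D_n=\mathrm{diag}(\sqrt n I,\sqrt n I,\sqrt{n_0}I)$. Then $D_n^{-1}\mathcal I_nD_n^{-1}$ converges to $\mathrm{diag}\big(V^{(1)},\,QV_{22}^{(0)}Q^T\big)$. The off-diagonal blocks are of order $n_0/\sqrt{n\,n_0}=\sqrt{\epsilon}\to0$, and the $n_0$-contribution to the $(\lambda_1,\eta_2)$ block is of order $\epsilon\to0$. Consequently:
\begin{itemize}
\item $\sqrt n(\hat\lambda_1-\lambda_{1,*},\hat\eta_2-\eta_{2,*})\to N(0,(V^{(1)})^{-1})$, and the $\lambda_1$-marginal of $(V^{(1)})^{-1}$ is the inverse Schur complement, giving $\Sigma_{\rm fast}$.
\item $\sqrt{n_0}(\hat\beta-\beta_*)\to N\big(0,(QV_{22}^{(0)}Q^T)^{-1}\big)$, asymptotically independent of the fast block.
\item Writing $\hat\lambda_2-\lambda_{2,*}=(\hat\eta_2-\eta_{2,*})-Q^T(\hat\beta-\beta_*)$, the first term is $O_P(n^{-1/2})=o_P(n_0^{-1/2})$. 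The delta method then yields the stated joint limit for $(\hat\lambda_2,\hat\beta)$, including the cross term $-Q^T(QV_{22}^{(0)}Q^T)^{-1}$.
\end{itemize}

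\textbf{Main obstacle.} The hard step is justifying MLE asymptotics in this singular regime, because $n_0/n\to0$ makes the usual iid Fisher information degenerate. I would first condition on $A_1,\dots,A_n$, so the data are two independent iid samples of sizes $n_0$ and $n_1$. Next, I would prove consistency of $\hat\vartheta$ using strict concavity of the log-likelihood in the natural parameters; this uses exponential-family concavity and the linearity of the map $J$. This step needs $n_0\to\infty$ for $\beta$. Then I would Taylor-expand the score. The scaled score $D_n^{-1}\nabla\ell_n(\vartheta_*)$ is asymptotically normal by the Lindeberg CLT applied separately to the two samples. The scaled Hessian $D_n^{-1}\nabla^2\ell_n(\tilde\vartheta)D_n^{-1}$ converges to the limit above, uniformly on shrinking neighbourhoods; this uses (A3)--(A4) together with continuity of $V^{(a)}$ in the natural parameter. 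Finally, I would return to the original parametrization and drop the orthogonal $\pi$ (equivalently $\alpha$) coordinate.
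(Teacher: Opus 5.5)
Your proposal is correct, and it reaches the theorem by a genuinely different route from the paper.

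\textbf{The paper's route.} The paper stays in the native coordinates $(\lambda_1,\lambda_2,\beta)$, with $A$ random and $\epsilon=P(A=0)$. It removes $\alpha$ by a Schur complement that cancels the between-group variance. It then expands $(\epsilon V_{22}^{(0)}+\pi V_{22}^{(1)})^{-1}$ in powers of $\epsilon$. This exposes a $1/\epsilon$ singularity in the $(\lambda_2,\beta)$ block of the inverse information, which $[\,I\;\;Q^T\,]$ annihilates in the $\lambda_1$ Schur complement. Finally, it obtains $\Sigma_{\text{fast}}$ through a ``set $\epsilon=0$'' efficient-information shortcut, and it simply asserts asymptotic normality from standard M-estimation theory.

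\textbf{What your route buys.} Your substitution $\eta_2=\lambda_2+Q^T\beta$, with $\pi$ in place of $\alpha$, turns the unidentifiable lines $\mathbb{S}(c)$ of \eqref{eq::ID::curve} into coordinate lines. The auxiliary sample then informs only $(\lambda_1,\eta_2)$, and after the $D_n$ scaling the information is asymptotically block diagonal. This has several consequences.
\begin{itemize}
\item No $\epsilon$-expansion is needed.
\item $\Sigma_{\text{fast}}$ appears exactly as the $\lambda_1$-marginal of $(V^{(1)})^{-1}$, rather than as the limit of a singularly perturbed inverse.
\item Asymptotic independence of the fast and slow blocks comes for free.
\item The cross-covariance $-Q^T(QV_{22}^{(0)}Q^T)^{-1}$ follows in one line from $\hat\lambda_2-\lambda_{2,*}=(\hat\eta_2-\eta_{2,*})-Q^T(\hat\beta-\beta_*)$.
\item Conditioning on the $A_i$ removes the $n$-dependence of $\alpha_*$, which the paper explicitly sidesteps. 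It also treats the stated regime $n_0,n_1\to\infty$, $n_0/n_1\to0$ directly, rather than as ``fix $\epsilon$, then let $\epsilon\to0$.''
\item The log-likelihood is an exponential family composed with linear maps, so the Hessian is nonrandom. Your Hessian step therefore reduces to continuity of $V^{(a)}(\cdot)$.
\end{itemize}

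\textbf{What the paper's route buys.} It displays the singular structure directly in the native parameters. It also supplies the null-space rotation that the paper reuses for Theorem~\ref{thm::mixture}, where the analogous change of variables must also tilt the mixture weights and is no longer linear.

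\textbf{Tighten the consistency step.} Strict concavity of $\ell_n$ alone does not give consistency here, because $\ell_n/n$ has a limit that is flat in $\beta$. There are two clean fixes.
\begin{itemize}
\item Apply the Hjort--Pollard convexity lemma to the concave rescaled criterion $t\mapsto \ell_n(\vartheta_*+D_n^{-1}t)-\ell_n(\vartheta_*)$. This gives $D_n(\hat\vartheta-\vartheta_*)=O_P(1)$ and the limit law in one stroke.
\item Alternatively, compare values on a small sphere around $\vartheta_*$. Directions with $|(\Delta\lambda_1,\Delta\eta_2)|$ bounded below make the auxiliary part of order $-n_1$, while the target part stays $O_P(1)$ from above. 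The remaining directions have $|\Delta\lambda_2|$ bounded below, so the target part is of order $-n_0$, while the auxiliary part stays $O_P(1)$ from above. This is exactly where $n_0\to\infty$ enters.
\end{itemize}
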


Note that we do not consider asymptotics 
of $\hat \alpha$ since the limiting quantity
$\alpha_*$ keeps changing with respect to $n$.
The actual value of $\alpha_*$ depends on
the ratio $\epsilon= n_0/n_1$, so as 
this ratio changes, its value changes as well. 

Based on Theorem~\ref{thm::exp},
we see that the parameter $\lambda_1$
is asymptotically normal with the $\mathcal{O}(1/n)$ 
rate, achieving the full efficiency gain. 
For the remaining parameters $(\lambda_2, \beta)$, 
the $\mathcal{O}(1/n_0)$ variance 
bottleneck for $\lambda_2$ is governed by a rank-$k$ matrix. 
When $k < d_2$, this limiting covariance matrix is singular, 
which implies that the limiting distribution of $\sqrt{n_0}(\hat{\lambda}_2 - \lambda_{2,*})$ 
is a degenerate multivariate normal distribution completely 
supported on a $k$-dimensional subspace. For the remaining $d_2 - k$ 
orthogonal directions, the scaled variance is exactly zero, 
confirming that their unscaled variance shrinks to zero at 
the faster $\mathcal{O}(1/n)$ asymptotic rate. We formalize this 
geometric reparameterization of $\lambda_2$ in the next subsection.

The underlying mechanism for this efficiency gain is deeply tied 
to the geometry of the auxiliary likelihood. By Tukey's factorization in (\ref{eq::tukey}), 
the auxiliary distribution has a PDF:
\begin{equation}
p(x|A=1) \propto p(x|A=0) \cdot O(x) \propto \exp\left(\lambda_1^TS_1(x) + (\lambda_2 + Q^T\beta )^TS_2(x)\right).
\label{eq::tukey::FL}
\end{equation}

Because the auxiliary distribution depends on $\lambda_2$ and $\beta$ 
only through the linear combination $\lambda_2 + Q^T \beta$, it forms 
an unidentifiable subspace:
\begin{equation}
\mathbb{S}(c_*) = \{(\lambda_{2},\beta): \lambda_{2} + Q^T\beta = c_*\}
\label{eq::ID::curve}
\end{equation}
for any fixed constant vector $c_*$.
The massive auxiliary sample cannot distinguish between parameters on this curve, 
meaning the identification of the exact $(\lambda_2, \beta)$ relies completely 
on the small target sample. 
Figure~\ref{fig::geometry} offers a visual explanation of this phenomenon.

Equation \eqref{eq::tukey::FL} shows that 
the auxiliary distribution
belongs to the same family as the target distribution.
This is similar to the conjugacy of
prior and posterior in Bayesian inference.
This odds model is called the conjugate
odds in \cite{suen2026modeling}.


\begin{figure}
  \center
\includegraphics[width=2in]{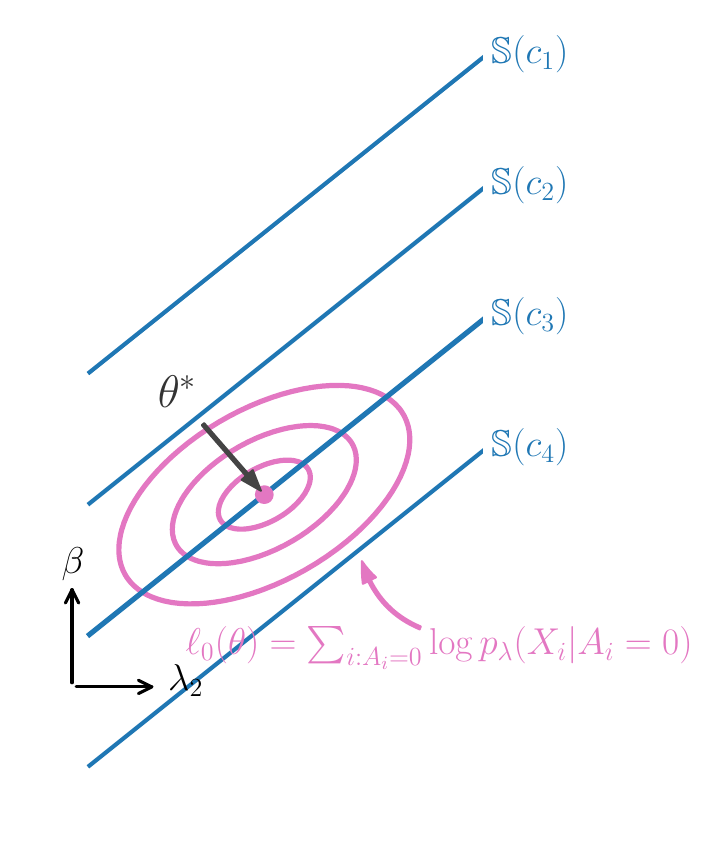}
\caption{The geometry of data integration restricted to the $(\lambda_2, \beta)$ subspace. 
The blue lines indicate the unidentifiable curves $\mathbb{S}(c)$.
Each line corresponds to the same model of $p(x|A=1)$,
so the auxiliary likelihood can only tell us which line
the true parameter lies on but fail
to inform us the exact location on the line.
The magenta curves indicate the likelihood contours 
from the target sample ($A=0$).
The target sample allows us to accurately pinpoint
the correct parameter on the curve. 
Along the orthogonal direction ($\lambda_1$, pointing out of the page), 
the auxiliary likelihood is strictly concave and fully identifiable. 
Thus, both samples contribute to a sharp peak along the $\lambda_1$ dimension, 
allowing it to be estimated at the fast full-sample rate.
}
\label{fig::geometry}
\end{figure}

\begin{remark}[Fix $n_0$ but $n_1\to\infty$ case]
The asymptotic normality of $\hat \lambda_1$
holds even if we fixed $n_0$ and allow only $n_1\to\infty$.
In this case, the asymptotics is driven completely
by the auxiliary distribution, which is 
$$
p(x|A=1) \propto \exp\left( \lambda_1^T S_1(x) + (\lambda_2 + Q^T \beta )^T S_2(x) \right).
$$
Thus, the conventional MLE theory applies to $\hat \lambda_1$,
resulting in the same limiting distribution.
This also explains why the asymptotic covariance of $\hat \lambda_1$
is essentially the conditional variance of score functions under $A=1$.

In this regime, it also becomes clear what
cannot be consistently estimated: any parameter 
$(\lambda_2, \beta)$
along the unidentifiable subspace in equation 
\eqref{eq::ID::curve}
$\mathbb{S}(\lambda_{2,*}+Q^T\beta_*)$
since they all imply the same auxiliary distribution.

For the Gaussian example, notice that the auxiliary sample
has a distribution $N(\mu+\beta \sigma^2, \sigma^2)$,
so the parameter $\sigma^2$ can be directly estimated
using the auxiliary sample and thus enjoys the full efficiency
gain while $\mu$ is unidentifiable since it is
confounded as $\mu+\beta\sigma^2$. 
\end{remark}

\begin{example}[Network structure recovery in Ising models]
\label{sec::ising}
Perhaps the most powerful application of this phenomenon lies in 
network analysis and the Ising model \cite{wainwright2008graphical}. Suppose the target data 
are binary vectors $x \in \{-1, 1\}^N$ governed by the
Ising model:
$$
P(x|A=0) = \frac{1}{Z(\mathbf{h}, \mathbf{J})} 
\exp\left( \sum_{i} h_i x_i + \sum_{i < j} J_{ij} x_i x_j \right),
$$
where $h_i$ are the main effects (external fields) and $J_{ij}$ 
are the pairwise interactions (network edges). The sufficient 
statistics are the main effects $x_i$ and the interactions 
$x_i x_j$.

In many practical data integration settings, selection bias is 
driven purely by the individual features rather than their 
interactions. If the true odds model is 
$\log O(x) = \alpha + \sum_{i} \beta_i x_i$, the auxiliary 
distribution becomes:
$$
P(x|A=1) \propto \exp\left( \sum_{i} (h_i + \beta_i) x_i 
+ \sum_{i < j} J_{ij} x_i x_j \right).
$$
This is another Ising model with the \emph{exact same} 
interaction strengths $J_{ij}$! 
Because the pairwise interactions are mathematically orthogonal 
to the main-effect selection mechanism, 
\emph{the entire network structure $\mathbf{J}$ can be estimated at 
the fast rate using the massive auxiliary data}. 

\end{example}

\begin{remark}[Model misspecification]
  \label{rm::mis}
If the target or odds models are misspecified, 
the MLE instead converges to the KL-divergence projection 
$\theta_*$. 
Under such misspecification, the asymptotic covariance matrix will 
take the classical sandwich form.

The core geometric properties of the $1/\epsilon$ singularity 
and the block-inversion strategy we developed continue to apply, 
and the contrast parameters still avoid the $O(1/n_0)$ 
bottleneck. However, the resulting matrix algebra 
for the sandwich form becomes substantially messier. 
Therefore, we present the correctly specified case to 
highlight the fundamental efficiency phenomenon 
without obscuring the structural intuition.
\end{remark}


\subsection{Reparameterization for full efficiency gain} \label{sec::rep}

The power of Theorem \ref{thm::exp} is that it explicitly isolates the variance bottleneck to a rank-$k$ subspace of $\lambda_2$. This implies that we may reparameterize the exponential family to create a clean geometric separation of the sufficient statistics, isolating the bottleneck and rescuing the orthogonal dimensions. 

Because the selection mechanism relies entirely on $T(x) = Q S_2(x)$, the exponential tilt only operates within the row space of $Q$ (the column space of $Q^T$). Assuming $Q \in \mathbb{R}^{k \times d_2}$ has full row rank, we can define the orthogonal projection matrix onto this unidentifiable subspace as:
$$
P_Q = Q^T(Q Q^T)^{-1}Q.
$$
This allows us to perfectly decompose the natural parameter vector $\lambda_2$ into two mutually orthogonal components:
$$
\lambda_{2} = \underbrace{P_Q \lambda_2}_{=\lambda_{2,\parallel}} + \underbrace{(I_{d_2} - P_Q) \lambda_2}_{=\lambda_{2,\perp}}.
$$
The component $\lambda_{2,\parallel}$ is entirely entangled with the selection mechanism and hence suffers from the target-sample convergence bottleneck. However, $\lambda_{2,\perp}$ is structurally orthogonal to the odds model. This geometric reality yields the following theorem.

\begin{theorem}[Orthogonal efficiency projection]
Under assumption (A1-4),
if $k < d_2$, the MLE for the projected parameter 
$\lambda_{2,\perp}$ achieves asymptotic normality
at the total-sample rate:
$$
\sqrt{n}(\hat \lambda_{2,\perp} - \lambda_{2,\perp,*}) \xrightarrow{d} N(0, \Sigma_{\perp}),
$$
where $\Sigma_{\perp}$ is a strictly positive-definite matrix of rank $d_2 - k$. 

Conversely, the MLE for the parallel component 
$\lambda_{2,\parallel}$ is bottlenecked by 
the target sample rate:
$$
\sqrt{n_0}(\hat \lambda_{2,\parallel} - \lambda_{2,\parallel,*}) \xrightarrow{d} N(0, \Sigma_{\parallel}),
$$
where $\Sigma_{\parallel} = Q^T(Q V_{22}^{(0)} Q^T)^{-1} Q$.
\label{thm::rep}
\end{theorem}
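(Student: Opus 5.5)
The plan is to deduce Theorem~\ref{thm::rep} from the structure already established in Theorem~\ref{thm::exp}, after a linear change of coordinates. The key observation is that $(I_{d_2}-P_Q)Q^T=0$. Hence, writing $c=\lambda_2+Q^T\beta$ for the quantity that indexes the unidentifiable subspace $\mathbb{S}(c)$ in \eqref{eq::ID::curve}, we get
$$
\lambda_{2,\perp}=(I_{d_2}-P_Q)\lambda_2=(I_{d_2}-P_Q)c .
$$
So $\lambda_{2,\perp}$ is a fixed linear function of the parameter that the auxiliary distribution \eqref{eq::tukey::FL} does identify. Since the MLE is equivariant under the bijective linear map $(\lambda_1,\lambda_2,\beta,\alpha)\mapsto(\lambda_1,c,\beta,\alpha)$, we have $\hat\lambda_{2,\perp}=(I_{d_2}-P_Q)\hat c$. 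It therefore suffices to show that $\sqrt n(\hat\lambda_1-\lambda_{1,*},\hat c-c_*)$ is asymptotically normal.

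\textbf{Step 1: the information matrix in the new coordinates.} I would redo the information computation in the proof of Theorem~\ref{thm::exp} in the coordinates $(\lambda_1,c,\beta,\alpha)$.
\begin{itemize}
\item For $A=1$ observations, the log-density depends only on $(\lambda_1,c)$, through the exponential family with sufficient statistic $S(x)$. Their total contribution to the $(\lambda_1,c)$ block is $n_1{\sf Cov}(S(X)|A=1)=n_1V^{(1)}$, which is positive definite by (A4).
\item For $A=0$ observations, the log-density depends on $(\lambda_1,c-Q^T\beta)$. Their contribution to every block, including the cross terms between $\beta$ and $(\lambda_1,c)$, is $O(n_0)$. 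The $\beta$ block in particular is $n_0QV_{22}^{(0)}Q^T$.
\item The marginal term $\log P(A=0)$ and the intercept $\alpha$ are handled exactly as in the proof of Theorem~\ref{thm::exp}: profile them out, using that this term does not depend on the data.
\end{itemize}

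\textbf{Step 2: block inversion.} Let $D=\mathrm{diag}(\sqrt n\, I,\sqrt{n_0}\, I)$ rescale the $(\lambda_1,c)$ and $\beta$ coordinates. After this rescaling, the cross blocks of the normalized information are $O(n_0/\sqrt{n n_0})=O(\sqrt{\epsilon})\to0$. The normalized information therefore converges to the block-diagonal matrix
$$
\mathrm{diag}\big(V^{(1)},\ QV_{22}^{(0)}Q^T\big).
$$
The standard MLE expansion (score CLT plus consistency, with regularity from (A2)--(A3)) then gives
$$
\sqrt n(\hat\lambda_1-\lambda_{1,*},\hat c-c_*)\xrightarrow{d}N\big(0,(V^{(1)})^{-1}\big).
$$
This is consistent with $\Sigma_{\rm fast}$ in Theorem~\ref{thm::exp}, since $\Sigma_{\rm fast}$ is the $\lambda_1$ block of $(V^{(1)})^{-1}$.

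\textbf{Step 3: the perpendicular component.} By the continuous mapping theorem,
$$
\sqrt n(\hat\lambda_{2,\perp}-\lambda_{2,\perp,*})\xrightarrow{d}N(0,\Sigma_\perp),\qquad \Sigma_\perp=(I-P_Q)\big[(V^{(1)})^{-1}\big]_{22}(I-P_Q).
$$
The block $[(V^{(1)})^{-1}]_{22}$ is positive definite. Because $I-P_Q$ is an orthogonal projection of rank $d_2-k$ (using that $Q$ has full row rank), $\Sigma_\perp$ has rank exactly $d_2-k$ and is positive definite on the range of $I-P_Q$. This is the sense of ``strictly positive-definite of rank $d_2-k$'' in the statement.

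\textbf{Step 4: the parallel component.} This follows directly from Theorem~\ref{thm::exp} by the continuous mapping theorem: $\hat\lambda_{2,\parallel}=P_Q\hat\lambda_2$. Since $P_QQ^T=Q^T$, conjugating the covariance $Q^T(QV_{22}^{(0)}Q^T)^{-1}Q$ by $P_Q$ leaves it unchanged, which gives $\Sigma_\parallel$.

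The main obstacle is Step 2. The per-observation information is a mixture weighted by $\epsilon\to0$, so this is a singular limit. One must verify two things: that the $O(n_0)$ cross terms do not contaminate the $n$-rate block once $\alpha$ (whose true value drifts with $\epsilon$) has been profiled out, and that the remainder in the score expansion is uniformly negligible under the two different scalings. I would handle this by the same Schur-complement argument used for Theorem~\ref{thm::exp}. The only change is that the parameter block on which the argument acts is $(\lambda_1,c)$ rather than $\lambda_1$ alone.
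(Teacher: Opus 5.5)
Your proposal is correct. For the parallel component it matches the paper: conjugate the Theorem~\ref{thm::exp} covariance by $P_Q$ and use $P_QQ^T=Q^T$.

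For the orthogonal component you take a genuinely different route. The paper decomposes the \emph{sufficient statistic} as $S_2=P_QS_2+(I_{d_2}-P_Q)S_2$. It notes that $Q(I_{d_2}-P_Q)=0$, so the odds model never sees $S_{2,\perp}$. It then invokes Theorem~\ref{thm::exp} with $S_{2,\perp}$ placed in the role of $S_1$, and leaves $\Sigma_\perp$ only as ``the block extracted from the full information matrix.''

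You instead reparametrize the \emph{parameter}, $(\lambda_2,\beta)\mapsto(c,\beta)$ with $c=\lambda_2+Q^T\beta$. You use the exact identity $\hat\lambda_{2,\perp}=(I_{d_2}-P_Q)\hat c$. You then show that, after rescaling by $\mathrm{diag}(\sqrt n\,I,\sqrt{n_0}\,I)$, the information is asymptotically block diagonal, so $(\lambda_1,c)$ is estimated at rate $\sqrt n$ with covariance $(V^{(1)})^{-1}$.

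Your route buys three things.
\begin{enumerate}
\item An explicit covariance, $\Sigma_\perp=(I_{d_2}-P_Q)[(V^{(1)})^{-1}]_{22}(I_{d_2}-P_Q)$. From it the rank-$(d_2-k)$ claim is immediate, as is the correct reading of ``strictly positive definite'' as positive definite on the range of $I_{d_2}-P_Q$.
\item Via $\hat\lambda_2=\hat c-Q^T\hat\beta$, the same computation yields the covariances of Theorem~\ref{thm::exp} without the Woodbury and $\epsilon$-expansions.
\item It sidesteps a point the paper leaves implicit. $P_QS_2$ and $(I_{d_2}-P_Q)S_2$ are $d_2$-vectors with singular covariance, so Theorem~\ref{thm::exp} applies only after rewriting them in bases of dimensions $k$ and $d_2-k$, so that (A2) and (A4) hold. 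Doing that rewriting recovers exactly your $\Sigma_\perp$.
\end{enumerate}
The paper's route is shorter and uses Theorem~\ref{thm::exp} as a black box, but it inherits that theorem's heuristic singular-limit argument.

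For the obstacle you flag in Step~2, replace $\alpha$ by $\pi=P(A=1)$. For fixed $(\lambda_1,c,\beta)$ this is a bijection onto $(0,1)$. The log-likelihood then splits additively into a Bernoulli term in $\pi$ plus $\sum_{A_i=1}\log p_{\lambda_1,c}(X_i)+\sum_{A_i=0}\log p_{\lambda_1,c-Q^T\beta}(X_i)$, where $p_\lambda$ is the density in \eqref{eq::exp1}. The drifting intercept disappears entirely. What remains is a concave two-sample exponential-family likelihood, for which justifying the expansion under the two scalings is straightforward.
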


Notice that the asymptotic covariance matrix $\Sigma_{\parallel}$ for the parallel component in Theorem \ref{thm::rep} is perfectly identical to the singular covariance matrix for $\hat{\lambda}_2$ originally presented in Theorem \ref{thm::exp}. This makes perfect geometric sense: if we analyze the native parameter $\hat{\lambda}_2$ without rotating the coordinates, its total uncertainty is strictly dominated by the $\mathcal{O}(1/\sqrt{n_0})$ bottleneck of the parallel component. The uncertainty contributed by the orthogonal component $\hat{\lambda}_{2,\perp}$ is of order $\mathcal{O}(1/\sqrt{n})$, which becomes completely negligible when scaled by $\sqrt{n_0}$.

The following bivariate Gaussian example cleanly harmonizes with Theorem \ref{thm::rep}, demonstrating exactly how this projection matrix operates in practice.

\begin{example}[Bivariate Gaussian and orthogonal projection]
Suppose the target population is a bivariate Gaussian distribution with an unknown mean vector $\lambda_2 = (\mu_1, \mu_2)^T$ and a known identity covariance matrix:
$$
X = \begin{pmatrix} X_1 \\ X_2 \end{pmatrix} \sim N\left( \begin{pmatrix} \mu_1 \\ \mu_2 \end{pmatrix}, \mathbf{I}_{2\times2} \right).
$$
The natural sufficient statistics are $S_2(x) = (x_1, x_2)^T$.
The base log-density (ignoring constants) is:
$$
\log p(x\vert A=0) \propto \mu_1 x_1 + \mu_2 x_2 = \lambda_2^T S_2(x).
$$
Now, suppose the auxiliary sampling mechanism relies entirely on the sum of the two variables. We define the odds model as:
$$
\log O(x) = \alpha + \beta (x_1 + x_2).
$$

Here, the selection mechanism is $T(x) = x_1 + x_2 = Q S_2(x)$, where the projection operator is defined by $Q = \begin{pmatrix} 1 & 1 \end{pmatrix} \in \mathbb{R}^{1 \times 2}$. 
Because $k=1 < d_2=2$, Theorem \ref{thm::rep} guarantees that only a 1-dimensional subspace is bottlenecked. The projection matrix onto the unidentifiable subspace is:
$$
P_Q = Q^T(Q Q^T)^{-1}Q = \begin{pmatrix} 1 \\ 1 \end{pmatrix} (2)^{-1} \begin{pmatrix} 1 & 1 \end{pmatrix} = \begin{pmatrix} 1/2 & 1/2 \\ 1/2 & 1/2 \end{pmatrix}.
$$

Notice that $T(x)$ ``partially covers'' both $x_1$ and $x_2$. If we optimize the full-likelihood for $(\mu_1, \mu_2)$ natively, both estimators will be strictly bottlenecked by the target sample size because $\Sigma_{\parallel} = P_Q$ (since $V_{22}^{(0)} = I_2$):
$$
\hat{\mu}_{1,F} - \mu_1 = O_P\left(\frac{1}{\sqrt{n_0}}\right), \quad \hat{\mu}_{2,F} - \mu_2 = O_P\left(\frac{1}{\sqrt{n_0}}\right).
$$

\textbf{The orthogonal projection}. We can completely cure this by rotating the coordinate system by $45^\circ$, extracting the explicitly invariant $\lambda_{2,\perp}$. According to Theorem \ref{thm::rep}, we decompose $\lambda_2$:
$$
\lambda_{2,\parallel} = P_Q \begin{pmatrix} \mu_1 \\ \mu_2 \end{pmatrix} = \begin{pmatrix} \frac{\mu_1 + \mu_2}{2} \\ \frac{\mu_1 + \mu_2}{2} \end{pmatrix}, \quad \lambda_{2,\perp} = (I - P_Q) \begin{pmatrix} \mu_1 \\ \mu_2 \end{pmatrix} = \begin{pmatrix} \frac{\mu_1 - \mu_2}{2} \\ \frac{\mu_2 - \mu_1}{2} \end{pmatrix}.
$$
These precisely correspond to the coordinate representations of the sum and difference of the means. Let us define the new sufficient statistics:
$$
Z_{\parallel} = \frac{x_1 + x_2}{\sqrt{2}} \quad \text{and} \quad Z_{\perp} = \frac{x_1 - x_2}{\sqrt{2}}.
$$
By invariance, we define the corresponding reparameterized natural parameters:
$$
\theta_{\parallel} = \frac{\mu_1 + \mu_2}{\sqrt{2}} \quad \text{and} \quad \theta_{\perp} = \frac{\mu_1 - \mu_2}{\sqrt{2}}.
$$
Now, let us rewrite the entire joint full-likelihood in terms of this new basis. The target log-density cleanly translates to:
$$
\log p(z\vert A=0) \propto \theta_{\parallel} Z_{\parallel} + \theta_{\perp} Z_{\perp},
$$
and the odds model translates to:
$$
\log O(z) = \alpha + (\beta\sqrt{2}) Z_{\parallel}.
$$

Look at the structure of the odds model now. The orthogonal statistic $Z_{\perp}$ has completely vanished from the selection mechanism!
Because $Z_{\perp}$ is functionally independent of the odds model, it behaves exactly like $S_1(x)$ in Theorem \ref{thm::exp}. Its corresponding parameter $\theta_{\perp}$ achieves the full asymptotic efficiency gain:
$$
\sqrt{n}(\hat{\theta}_{\perp, F} - \theta_{\perp}) \xrightarrow{d} N(0, 1).
$$
Meanwhile, the parallel component $\theta_{\parallel}$ remains strictly bound to the target sample:
$$
\sqrt{n_0}(\hat{\theta}_{\parallel, F} - \theta_{\parallel}) \xrightarrow{d} N(0, 1).
$$
\end{example}

\begin{remark}[Model specification and Fisher information confounding]
A critical, yet subtle, requirement for achieving the $O(1/n)$ 
fast rate for the liberated parameter $\theta_{\perp}$ is that 
the analyst \emph{must not} include the orthogonal feature 
$Z_{\perp}$ in the logistic odds model during estimation. 

If one were to fit a full logistic regression model, 
i.e., $\log O(z) = \alpha + \beta_{\parallel} Z_{\parallel} 
+ \beta_{\perp} Z_{\perp}$, the model also estimates the 
nuisance parameter $\beta_{\perp}$ from the data although
its population value is $0$.
However, in the auxiliary sample log-likelihood, 
$\theta_{\perp}$ and $\beta_{\perp}$ only appear together 
as the sum $(\theta_{\perp} + \beta_{\perp})$. Consequently, 
the auxiliary Fisher Information sub-matrix for 
$(\theta_{\perp}, \beta_{\perp})$ is perfectly singular. 

To separate these parameters, the FL estimator 
is forced to rely on the target sample to 
estimate $\beta_{\perp}$, suffering the slow rate $O(1/n_0)$. 

Therefore, to realize the $O(1/n)$ fast rate for $\theta_{\perp}$ 
(such as the contrast $\mu_1 - \mu_2$), the analyst must 
strictly exclude $Z_{\perp}$ 
from the odds model, effectively forcing $\beta_{\perp} = 0$ to 
break the Fisher information confounding.
Section \ref{sec::5D} offers a simulation study to 
illustrate this phenomenon.
\end{remark}


\subsection{Mixture models}

To rigorously establish the efficiency gain for mixture models, we adapt assumptions (A1-4) to this setting:
\begin{itemize}
\item[\bf (A1$'$)] We can partition $S(x) = (S_1(x), S_2(x))$ such that $T(x) = Q S_2(x)$.
\item[\bf (A2$'$)] The sufficient statistics $S(x)$ are affinely independent.
\item[\bf (A3$'$)] For each component $k$, the true parameters $w_{k,*}, \lambda_{k,1,*}, \lambda_{k,2,*}, \beta_*$ fall within the interior of the parameter space, and $(\lambda_{k,1,*},\lambda_{k,2,*}+Q^T \beta_*)$ is inside the interior of the parameter space for the $k$-th exponential family component.
\item[\bf (A4$'$)] The Fisher information matrix of the 
mixture model is strictly positive definite
at a small neighborhood of the true parameter.
\end{itemize}

Now we analyze another interesting scenario: 
\emph{mixture of exponential families.}

Suppose that we model the target population as a mixture of exponential family:
\begin{equation}
p(x|A=0) = \sum_{k=1}^K w_k h(x) \exp\left(\lambda_{k,1}^TS_1(x) 
+ \lambda_{k,2}^TS_2(x) + \Psi(\lambda_{k,1},\lambda_{k,2})\right)
\label{eq::mix}
\end{equation}
and 
$$
O(x) = \alpha+\beta^T T(x)
$$
such that $S_2(x) = QT(x).$
The mixture models have parameters $(w_k, \lambda_{k,1}, \lambda_{k,2})$
with the contraint that $\sum_{k=1}^K w_k = 1$ and $w_k\geq 0$.
The efficiency gain is still achievable for the 
$\lambda_{k,1}$ parameters.

\begin{theorem}
Under assumptions (A1$'$-A4$'$), for the above mixture model, let $\lambda_{k,1,*}$ be the true parameter for component $k$. The MLE $\hat \lambda_{k,1}$ is asymptotically normal at the fast total-sample rate:
$$
\sqrt{n}(\hat \lambda_{k,1} - \lambda_{k,1,*}) \xrightarrow{d} N(0, [\Sigma_{\text{fast}}]_{11}),
$$
where $[\Sigma_{\text{fast}}]_{11} = \mathcal{O}(1)$ is a strictly positive-definite matrix.
\label{thm::mixture}
\end{theorem}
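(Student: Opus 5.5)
We follow the strategy behind Theorem~\ref{thm::exp}: write the full log-likelihood via equation~\eqref{eq::full}, compute the Fisher information per observation, and invert it by blocks in the regime $\epsilon=n_0/n\to0$. Under the mixture model, Tukey's factorization gives
\begin{equation*}
p(x|A=1)\propto \sum_{k=1}^K w_k e^{\Psi(\lambda_{k,1},\lambda_{k,2})-\Psi(\lambda_{k,1},\lambda_{k,2}+Q^T\beta)}\, h(x)\exp\bigl(\lambda_{k,1}^TS_1(x)+(\lambda_{k,2}+Q^T\beta)^TS_2(x)-\Psi(\lambda_{k,1},\lambda_{k,2}+Q^T\beta)\bigr).
\end{equation*}
So the auxiliary distribution is again a $K$-component mixture of the same family. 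Its tilted weights $\tilde w_k\propto w_k e^{-\Psi(\lambda_{k,1},\lambda_{k,2}+Q^T\beta)+\Psi(\lambda_{k,1},\lambda_{k,2})}$ and tilted parameters are $(\lambda_{k,1},\,c_k=\lambda_{k,2}+Q^T\beta)$. This is the mixture analogue of equation~\eqref{eq::tukey::FL}. Note that the odds model changes the component weights.

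First we reparameterize. Set $\eta=(\{\lambda_{k,1}\},\{c_k\},\{\tilde w_k\})$, the parameters of the auxiliary mixture, and let $\xi=\beta$ be the leftover direction, since $\lambda_{k,2}=c_k-Q^T\beta$ and $w_k$ is recovered from $\tilde w_k$, $\beta$ and the $\lambda$'s. This map is a smooth bijection near the truth; (A3$'$) guarantees the tilted parameters are interior. In the new coordinates the log-likelihood is
\[
(1-a)\log p(x|A=0;\eta,\xi)+a\log p(x|A=1;\eta)+\log\text{-marginal terms}.
\]
The marginal term $\log P(A=0)$ is data-free, so as in the proof of Theorem~\ref{thm::exp} it is absorbed by $\alpha$ and we drop it from the score of $(\eta,\xi)$. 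The per-observation information is then
\[
I_n=(1-\pi_n)I^{(1)}(\eta)\oplus 0_\xi+\pi_n I^{(0)}(\eta,\xi),\qquad \pi_n=n_0/n=\epsilon.
\]
Here $I^{(1)}$ is the information of the auxiliary mixture, which is positive definite by (A4$'$) applied to the tilted mixture, and $I^{(0)}$ is the target-mixture information in $(\eta,\xi)$.

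Next we invert by blocks. Write
\[
I_n=\begin{bmatrix}I^{(1)}+O(\epsilon)&\epsilon B\\ \epsilon B^T&\epsilon D\end{bmatrix},
\]
with $D$ positive definite. The Schur complement for the $\eta$-block is $I^{(1)}+O(\epsilon)-\epsilon BD^{-1}B^T\to I^{(1)}$. Hence $n\,{\sf Var}(\hat\eta)\to (I^{(1)})^{-1}$, while the $\xi$-block scales like $1/(\epsilon n)=1/n_0$. Because $\lambda_{k,1}$ is a coordinate of $\eta$, the standard MLE argument applies: a Taylor expansion of the score, with a triangular-array CLT for the sum of $n_0$ target scores and $n_1$ auxiliary scores, each rescaled by $\sqrt n$. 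This yields
\[
\sqrt n(\hat\lambda_{k,1}-\lambda_{k,1,*})\to N\bigl(0,[(I^{(1)})^{-1}]_{\lambda_{k,1}}\bigr),
\]
which is $O(1)$ and positive definite. The rescaling makes the target-score contribution $\sqrt{\epsilon}\cdot O_P(1)\to0$ in the $\eta$ directions. In the $\xi$ direction the appropriate normalization is $\sqrt{n_0}$.

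The main obstacle is justifying this asymptotic theory under a varying, degenerate information matrix. Two things are needed. First, consistency of the MLE for $\eta$ at the mixture level, which requires identifiability of the auxiliary mixture up to label switching; we fix labels by a local argument near the interior truth and assume (A2$'$) and (A4$'$) give local identifiability. Second, a uniform control of the remainder in the score expansion when the information blocks have different orders. For the latter we rescale by $\operatorname{diag}(\sqrt n I,\sqrt{n_0}I)$ and show that the rescaled Hessian converges to $\operatorname{diag}(I^{(1)},D)$ uniformly on shrinking neighborhoods. This uses boundedness of third derivatives of mixtures of exponential families on compact interior sets.
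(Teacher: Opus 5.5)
\textbf{Verdict.} Your proposal is correct, and it reaches the result by a different decomposition than the paper.

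\textbf{Comparison of the two routes.} The paper stays in the original coordinates $(\boldsymbol\lambda_1,\boldsymbol\lambda_2,\mathbf w,\beta)$. There the auxiliary information $V^{(1)}$ is singular. The paper exhibits its $k$-dimensional null space: shift $\beta$ by $\Delta$, shift every $\lambda_{k,2}$ by $-Q^T\Delta$, and retilt the weights. It notes that this null space has a zero $\boldsymbol\lambda_1$ block. It then rotates by an orthogonal matrix $[\mathbf U_{null}\mid\mathbf U_\perp]$ to get $\mathcal I_0^{-1}=\epsilon^{-1}\mathbf U_{null}\mathbf C^{-1}\mathbf U_{null}^T+\mathcal O(1)$.

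You instead reparametrize nonlinearly by the auxiliary mixture's own parameters $\eta$ plus the leftover $\xi=\beta$. This turns the null direction into a coordinate axis while keeping $\lambda_{k,1}$ as a coordinate. The auxiliary information is then nonsingular by construction, and the inversion is a single Schur complement.

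Your route buys several things the paper does not spell out:
\begin{itemize}
\item An explicit limit, $[(I^{(1)})^{-1}]_{\lambda_{k,1}}$, which is exactly the auxiliary-only covariance. It reduces to $\Sigma_{\text{fast}}$ of Theorem~\ref{thm::exp} when $K=1$ and agrees with the paper's implicit $[M_{\text{fast}}]_{11}$, since $\lambda_1$ is a coordinate in both parametrizations.
\item An immediate reason why this limit is positive definite. The paper merely asserts it.
\item A concrete sketch of the joint $\sqrt n$/$\sqrt{n_0}$ limit via the rescaling $\operatorname{diag}(\sqrt n I,\sqrt{n_0} I)$. The paper defers this to ``standard M-estimation theory.''
\end{itemize}
The paper's route, in turn, shows directly in the original parameters which coordinates carry the $1/\epsilon$ penalty: those with nonzero rows in $\mathbf U_{null}$, including the $w_k$. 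That is what underlies its discussion of weight entanglement.

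Both routes rely on the same unstated extra condition, namely that the tilted mixture has nonsingular information. You read this into (A4$'$) as $I^{(1)}\succ 0$; the paper asserts it as $\Lambda\succ0$. Both also handle consistency only at the level of a local, consistent-root argument.

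\textbf{One justification to fix.} Once you write the likelihood as $(1-a)\log p_0+a\log p_1$, the leftover term is $(1-a)\log P(A=0)+a\log P(A=1)$. This depends on $a$, and on $(\eta,\xi)$ through $\log\int p_0 e^{\beta^T T}$, so ``data-free'' is not why it drops. The correct reason is that $\alpha\mapsto P(A=1)$ is a variation-independent reparametrization, so the Bernoulli factor separates. Equivalently, the Schur complement on $\alpha$ cancels the rank-one term $\pi(1-\pi)gg^T$ with $g=\nabla\log\int p_0e^{\beta^TT}$. Either way you are left with exactly your $(1-\epsilon)I^{(1)}\oplus0+\epsilon I^{(0)}$, so the conclusion stands.

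\textbf{Minor.} The sign of $\Psi$ in your first display mixes conventions: the weight factor is $e^{\Psi(\lambda)-\Psi(c)}$ but the component exponent uses $-\Psi(c)$. This is harmless.
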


While generally the log-sum-exponential structure of mixture models introduces highly non-linear cross-covariances, we can still obtain a fast convergence rate for $\lambda_{k,1}$. The reason is that the $\lambda_{k,1}$ parameters are completely insulated from the odds tilt. Unlike $\lambda_{k,2}$, which is bound to the unidentifiable $k$-dimensional subspace:
\begin{equation}
\{(\lambda_{1,2},\cdots, \lambda_{K,2}, \beta): 
\lambda_{k,2} + Q^T \beta = c_{k,*},\,k=1,\cdots,K \}
\label{eq::ID::curve2}
\end{equation}
for some constant vectors $c_{k,*}$, the parameter $\lambda_{k,1}$ remains structurally invariant. 

A striking difference between mixture models and regular exponential families is that the mixture weights $w_k$ also become entangled in this unidentifiable subspace and get perturbed by the selection mechanism $\beta$. Because shifting $\lambda_{k,2}$ to offset the odds tilt inherently changes the base partition function $\Psi(\lambda_{k,1}, \lambda_{k,2})$, the mixture weights must artificially tilt to absorb this normalization shift and keep the conditional distribution $p(x|A=1)$ invariant. This explains the component starvation phenomenon observed in our simulations (Appendix C), where the odds mechanism effectively rewrites the apparent mixture proportions in the auxiliary sample. However, because $\lambda_{k,1}$ relies on strictly orthogonal sufficient statistics $S_1(x)$, it escapes this massive entanglement completely.

Furthermore, the geometric decomposition from Theorem \ref{thm::rep} applies perfectly to each mixture component! If we decompose the secondary parameters into $\lambda_{k,2} = P_Q \lambda_{k,2} + (I - P_Q) \lambda_{k,2}$, the parallel component $\lambda_{k,2,\parallel}$ absorbs the $\mathcal{O}(1/n_0)$ missingness bottleneck along with the mixture weights $w_k$. In contrast, the orthogonal component $\lambda_{k,2,\perp}$ is functionally completely independent of the odds model and achieves the same fast total-sample $\mathcal{O}(1/n)$ convergence rate as $\lambda_{k,1}$.

\subsection{Failure of nonparametric full efficiency gain} \label{sec::loc_scale}

The full efficiency gain for the variance parameter in 
the Gaussian case is a very special property of the Gaussian's 
sufficient statistics. One might intuitively hope that 
for an location-scale distribution with a location parameter $\mu$ 
and a scale parameter $\sigma$, the scale parameter could 
be estimated at a 
full efficiency rate. Unfortunately, this is false 
in general.

To illustrate this failure, consider a target distribution that 
is Uniform on a bounded interval. Specifically, 
$p(x|A=0) = \frac{1}{2\sigma}I[\mu-\sigma\leq x 
\leq\mu+\sigma] = \frac{1}{2\sigma}I(\left|\frac{x-\mu}{\sigma}\right|\leq 1)$.
so it belongs to the location-scale family.
This distribution has mean $\mu$ and variance $\sigma^2/3$.
If we apply the standard logistic regression odds model 
$\log O(x) = \alpha + \beta x$, the density of the auxiliary 
sample becomes a truncated exponential distribution:
$$
p(x|A=1) = \frac{\beta}{e^{\beta(\mu+\sigma)} - 
e^{\beta(\mu-\sigma)}} e^{\beta x} \cdot 
{I}(\mu-\sigma \le x \le \mu+\sigma).
$$

To understand why the fast rate fails, we examine the moments 
of the auxiliary sample. Let $Z = (X-\mu)/\sigma \in [-1, 1]$. 
The auxiliary density for the normalized variable is proportional 
to $\exp((\beta \sigma) z)$ when $A=1$. 
The mean and variance of $X$ in the auxiliary sample can be 
derived as:
\begin{align*}
\mathbb{E}[X|A=1] &= \mu + \sigma L(\beta \sigma) \\
{\sf Var}(X|A=1) &= \sigma^2 \left( 1 - L'(\beta \sigma) \right),
\end{align*}
where $L(u) = \coth(u) - 1/u$ is the Langevin function, and 
$L'(u) = 1/u^2 - \text{csch}^2(u)$ is its derivative.

In the Gaussian case, the auxiliary variance was 
${\sf Var}(X|A=1) = \sigma^2$, completely independent of
the selection mechanism $\beta$. This exact independence is 
what orthogonalized $\sigma^2$ from $\beta$ in 
the Fisher Information matrix, allowing the auxiliary sample to 
cleanly identify the variance on its own. 

In contrast, for the Uniform distribution, the variance of the 
auxiliary sample ${\sf Var}(X|A=1)$ depends on the interaction 
term $\beta \sigma$. The selection mechanism $\beta$ 
changes both the location and scale of the distribution, 
not just its location. 
Because the auxiliary sample's variance intertwines $\sigma$ 
and $\beta$, the auxiliary data cannot isolate and identify 
the scale $\sigma$ without also knowing $\beta$. Since the 
estimation of the odds parameter $\beta$ requires the small 
target sample (suffering the slow $O(1/n_0)$ rate), this 
slow-rate noise corrupts the estimation of $\sigma$. 
Consequently, both the mean and the scale parameter fail to 
achieve 
the full efficiency gain (they still, however,
benefit from the mild efficiency gain of Cramer-Rao lower bound).

This failure of full efficiency gain is not unique to the Uniform 
distribution. A general location-scale family is
$p(x|A=0) = \frac{1}{\sigma} f\left(\frac{x-\mu}{\sigma}\right)$
for some PDF $f$.
If we choose $f$ to be the double Exponential (Laplace) distribution, 
$p(x|A=0) = \frac{1}{2\sigma} \exp\left(-\frac{|x-\mu|}{\sigma}\right)$, 
the target variance is $2\sigma^2$. Under the logistic odds model, 
the variance of the auxiliary sample (provided $|\beta\sigma| < 1$) 
becomes
$$
{\sf Var}(X|A=1) = 2\sigma^2 \frac{1+(\beta\sigma)^2}{\left(1-(\beta\sigma)^2\right)^2}.
$$
Similar to the Uniform distribution, the auxiliary variance 
is impacted by the interaction term $\beta \sigma$. 
The auxiliary sample cannot identify the scale $\sigma$ 
without already knowing $\beta$, inescapably binding the scale 
estimation to the slow $O(1/n_0)$ target-sample bottleneck.


\section{Robustness and efficiency gain of the IPW approach}

While the above analysis
shows a number of positive results for the FL
approach, the IPW also exhibits some efficiency
gain but is in a different form relative
to the FL method.

\subsection{When the target model is incorrect}
\label{sec::compare}

In the above analysis, we have seen that IPW is inferior to the FL model when the models are correctly specified. While this puts the IPW as a bad approach, the story flips when the model is partially mis-specified. 

Now we consider a scenario where the odds model is correct but the Gaussian model for $p(x|A=0)$ is incorrect. Namely, the target sample is not from a Gaussian while the odds relation holds as a logistic model.

In this case, we still have 
$$
\hat \alpha_{IPW} - \alpha_* = O_P\left(\frac{1}{\sqrt{n_0}}\right),\quad \hat \beta_{IPW}-\beta_* = O_P\left(\frac{1}{\sqrt{n_0}}\right)
$$
since the odds model is correct. 
Namely, the IPW estimators remain consistent while the rate is slow.

However, \emph{the FL method fails in this case!}
You can see this from the full-likelihood function:
$$
\ell(\mu,\sigma^2, \alpha,\beta\vert{}x,a) = \log \phi(x; \mu,\sigma^2) + a(\alpha + \beta x) - \log\left(1+e^{\alpha + \mu \beta + \frac{1}{2}\beta^2\sigma^2}\right)
$$
The third term $\log\left(1+e^{\alpha + \mu \beta + \frac{1}{2}\beta^2\sigma^2}\right)$ couples the parameter $\mu$ and $\sigma^2$ with the parameter $\beta$. On the other hand, the IPW's logistic regression model has a similar but different quantity $\log\left(1+e^{\alpha + \beta x}\right)$. Thus, the FL model does not give a consistent estimator of $\beta$ even if the odds model is correct. This problem is known as \emph{model feedback} in the literature \cite{zigler2013model}. We also confirm this in a simulation study in Appendix \ref{sec::sim::wrong}.



Therefore, we conclude that while the FL achieves the 
Cram\'er-Rao lower bound and the variance can be 
drastically improved, the FL is less robust compared 
to the IPW approach. 
Thus, in practice, which method should be used 
depends on how much we trust our parametric models 
on the target population.
If we have strong belief that the model on the 
target population is correct (or it appears to be 
a good fit), then we should go for the FL approach.
If we are skeptical about the model 
on the target distribution, we should choose 
IPW as it is more robust.

\subsection{Efficiency gain for the IPW density estimator}
\label{sec::ipw::kde}


The IPW is not only robust to model
mis-specification, but it can also gain efficiency
when we use a nonparametric estimator. 
But its efficiency gain is 
different from the full efficiency gain
in the FL approach.

For simplicity, we consider 
a kernel density estimator (KDE). 
When using the KDE on the target sample, 
we obtain
$$
\hat p_h(x|A=0) = \frac{1}{n_0 h^d} \sum_{i=1}^n
K\left(\frac{X_i-x}{h}\right),
$$
where $h>0$ is the smoothing bandwidth
and $K$ is the smoothing kernel such as a Gaussian.
Under conventional smoothness conditions
\cite{scott2015multivariate,wasserman2006all}, 
$$
\hat p_h(x|A=0) - p(x|A=0) = \mathcal{O}(h^2) + \mathcal{O}_P\left(\sqrt{\frac{1}{{n_0h^d}}}\right) = \mathcal{O}_P\left(n_0^{-2/(d+4)}\right)
$$
when we use the optimal smoothing bandwidth $h\asymp n_0^{-1/(d+4)}$.
This is a bad result since our target sample 
size is already very small; having a convergence rate of 
$\mathcal{O}_P\left(n_0^{-2/(d+4)}\right)$, 
which is slower than a parametric rate,
will further deteriorate the performance.

To apply the IPW, we revisit the core
equation
\eqref{eq::ipw_lik}
which implies the IPW empirical
distribution function (EDF)
\begin{equation}
\hat F_{IPW}(x) = \frac{1}{2n_0} \sum_{i=1}^n 
\left[(1-A_i) +\frac{A_i}{\hat O(X_i)}\right] I(X_i \le x).
\label{eq::ipw::edf}
\end{equation}
Note that the above IPW-EDF 
is divided by $2n_0$ rather than the 
conventional sample size $n$.

To see the consistency of the above IPW-EDF, 
when $\hat O(x) \approx O(x)$, 
the unnormalized version 
$\tilde F_{IPW}(x) =  \frac{2n_0}{n} \hat F_{IPW}(x)$ 
can be decomposed as
\begin{align*}
\tilde F_{IPW}(x)& = \frac{1}{n} \sum_{i=1}^n 
\left[(1-A_i) +\frac{A_i}{\hat O(X_i)}\right] I(X_i \le x)\\
&\approx \frac{1}{n} \sum_{i=1}^n 
\left[(1-A_i) +\frac{A_i}{O(X_i)}\right] I(X_i \le x)\\
&\overset{p}{\to}\mathbb{E}\left\{\left[(1-A_i) +\frac{A_i}{O(X_i)}\right] I(X_i \le x)\right\}\\
& = P(X\leq x, A=0) + \mathbb{E}\left[ \frac{A_i}{O(X_i)} I(X_i \le x) \right]\\
& = P(X\leq x, A=0) + P(A=1) \int \frac{P(A=0|u)}{P(A=1|u)} I(u \le x) p(u|A=1) du\\
& = P(X\leq x, A=0) + \int P(A=0|u) I(u \le x) p(u) du\\
& = P(X\leq x, A=0) + P(X\leq x, A=0)\\
& = 2 P(A=0) P(X \le x | A=0).
\end{align*}
Therefore, 
$$
\hat F_{IPW}(x) = \frac{n}{2n_0} \tilde F_{IPW}(x)
\approx  \frac{2}{P(A=0)}\tilde F_{IPW}(x) \approx 
P(X\leq x|A=0)
$$
which is a consistent estimator of the true CDF.

This leads to a target-sample augmented
IPW kernel density estimator (TAIPW-KDE):
\begin{equation}
\hat p_{TAIPW}(x|A=0) = \frac{1}{2n_0h^d}\sum_{i=1}^n 
\left[(1-A_i) +\frac{A_i}{\hat O(X_i)}\right]K\left(\frac{x-X_i}{h}\right) .
\label{eq::taipw::kde}
\end{equation}

Here is an interesting fact about 
the TAIPW-KDE: it is actually an average of 
two KDEs. 
To see this, we define
the IPW-KDE to be 
\begin{equation}
\hat p_{IPW}(x|A=0) = \frac{1}{n_0h^d}\sum_{i=1}^n
\frac{A_i }{\hat O(X_i)}K\left(\frac{X_i-x}{h}\right).
\label{eq::ipw::kde}
\end{equation}
Then it is clear that $\hat p_{TAIPW}(x|A=0)$
in equation \eqref{eq::taipw::kde} 
can be written as 
$$
\hat p_{TAIPW}(x|A=0) = \frac{1}{2} \hat p_h(x|A=0)+ \frac{1}{2} \hat p_{IPW}(x|A=0).
$$
From the above decomposition, we observe an 
unfortunate result: the TAIPW-KDE
cannot have a better convergence 
rate relative to the target sample only
KDE since half of its value is from $\hat p_h(x|A=0)$!
For the IPW estimator, it turns out
that it can achieve a better convergence rate.
Before showing this result, we introduce
another estimator via the renormalization.

\subsection{Renormalization for parametric efficiency gain}
\label{sec::ipw::re}



If we know exactly $p(x|A=1)$ and 
$O(x)$, we can simply renormalize their ratio
to get a density function:
$$
p(x|A=0) 
 = \frac{1}{\Omega} \frac{p(x|A=1)}{O(x) } \propto \frac{p(x|A=1)}{O(x)},
$$
where $\Omega =   \int p(z|A=1)/O(z)dz = \frac{P(A=0)}{P(A=1)}$
is the normalizing constant. 
We can easily compute $\Omega$ via Monte Carlo
method as long as we can sample from $p(x|A=1)$
or simply use $\frac{n_0}{n_1}$ to approximate it.

This motivates us to consider the following
auxiliary-sample renormalized kernel
density estimator (ASR-KDE): 
$$
\hat p_{ASR}(x|A=0) = \frac{1}{ \hat \Omega} \frac{\hat p_h(x|A=1)}{ \hat O(x)},
$$
$$
\hat p_h(x|A=1) = \frac{1}{n_1 h^d} \sum_{i=1}^n A_i K\left(\frac{X_i-x}{h}\right)
$$
is the KDE on the auxiliary sample 
and $\hat O(x)$ is the estimated odds model
and $\hat \Omega = \int \hat p_h(z|A=1) / \hat O(z)dz$
is the normalizing constant. 

Note that 
we may approximate $\hat \Omega$ with the sample ratio $\frac{n_0}{n_1}$
for simplicity. 
But in the finite sample case, 
$\hat \Omega$ may not be the same as $\frac{n_0}{n_1}$,
so using the sample ratio could lead to a `density'
that does not integrate to $1$.

Here is the convergence rate of the ASR-KDE estimator.
\begin{theorem}[Efficiency gain for the ASR-IPW]
\label{thm::asr_eff}
Assume the following regularity conditions:
\begin{itemize}
\item[\bf (P)] The PDF $p(x|A=1)$ is bounded and twice continuously differentiable with compact support.
\item[\bf (O)] The odds function $O(x)>0$ for all $x$ in the support of $X|A=1$,
is continuous, and satisfies $\sup_{x} |\hat O(x) - O(x)| = \mathcal{O}_P(r_n)$
for some sequence $r_n\to 0$.

\item[\bf (K)] The kernel function $K$ is symmetric, positive, and satisfies
$$
\int K(x)dx = 1,\qquad \int K^2(x)dx <\infty, \qquad \int xx^T K(x)dx <\infty.
$$
\end{itemize}
Then 
$$
\hat p_{ASR}(x|A=0) - p(x|A=0) = \mathcal{O}( h^2 ) + \mathcal{O}_P\left(r_n+ \sqrt{\frac{1}{n_1h^d}}\right).
$$
Consequently, if $r_n =n_0^{-1/2}$ and we choose $h\asymp n^{-1/(d+4)}$, 
we obtain
$$
\hat p_{ASR}(x|A=0) - p(x|A=0) = \mathcal{O}_P\left(n_0^{-1/2}+ n^{-2/(d+4)}\right).
$$
\label{thm::RE}
\end{theorem}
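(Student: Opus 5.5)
Fix $x$ in the support of $X|A=1$. Since $p(x|A=0)=p(x|A=1)/(\Omega\,O(x))$, I will work with the ratio $g(x)=p(x|A=1)/O(x)$ and its estimate $\hat g(x)=\hat p_h(x|A=1)/\hat O(x)$, so that $\hat p_{ASR}(x|A=0)=\hat g(x)/\hat\Omega$ with $\hat\Omega=\int \hat g$ and $\Omega=\int g$. The plan is to bound $\hat g(x)-g(x)$ pointwise, bound $\hat\Omega-\Omega$ by integrating essentially the same bound, and then combine the two through the quotient.

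For the numerator, split
$$
\hat g(x)-g(x)=\frac{\hat p_h(x|A=1)-p(x|A=1)}{\hat O(x)}+p(x|A=1)\Bigl(\frac{1}{\hat O(x)}-\frac{1}{O(x)}\Bigr).
$$
The first factor is the standard KDE error on the auxiliary sample of size $n_1$. Under (P) and (K), a Taylor expansion with the symmetric kernel gives bias $\mathcal{O}(h^2)$, and the usual variance computation gives $\mathcal{O}(1/(n_1h^d))$. Hence this term is $\mathcal{O}(h^2)+\mathcal{O}_P((n_1h^d)^{-1/2})$.

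To handle the denominators, I use that $O$ is continuous and positive on the compact support, so $\inf O=c>0$. By (O), $\hat O\ge c/2$ uniformly with probability tending to one. Then $|1/\hat O-1/O|\le 2|\hat O-O|/c^2=\mathcal{O}_P(r_n)$ uniformly, and since $p(\cdot|A=1)$ is bounded the second term is $\mathcal{O}_P(r_n)$.

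For the normalizing constant I need a uniform (or integrated) version of the same bound. Write
$$
\hat\Omega-\Omega=\int\frac{\hat p_h(z|A=1)-p(z|A=1)}{\hat O(z)}dz+\int p(z|A=1)\Bigl(\frac1{\hat O}-\frac1O\Bigr)dz .
$$
The second integral is $\mathcal{O}_P(r_n)$ because $p(\cdot|A=1)$ integrates to one.

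The first integral is the main obstacle. One problem is that $\hat p_h$ spills outside the support of $X|A=1$, where $\hat O$ must still be controlled; I will assume $\hat O$ and $O$ stay bounded below on an $h$-neighborhood of the support, which follows from continuity and the uniform convergence in (O). A second problem is that the integrated KDE deviation is a linear functional of the data. The cleanest route is to replace $1/\hat O$ by $1/O$, which costs $r_n\int|\hat p_h-p|=\mathcal{O}_P(r_n)$ since the $L^1$ KDE error is $\mathcal{O}_P(1)$. The remaining term $\int(\hat p_h-p)/O$ is a smoothed sample mean of the bounded function $1/O$ on the auxiliary sample. Its stochastic part is $\mathcal{O}_P(n_1^{-1/2})$, and its bias is at most $\mathcal{O}(h^2)$ in the interior plus a boundary contribution. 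If the boundary contribution is awkward, I will simply bound it by $\mathcal{O}(h^2)+\mathcal{O}_P((n_1h^d)^{-1/2})$, which suffices for the stated rate.

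Finally, since $\Omega>0$ and $\hat\Omega\to\Omega$, combine the pieces via
$$
\frac{\hat g}{\hat\Omega}-\frac{g}{\Omega}=\frac{\hat g-g}{\hat\Omega}-\frac{g(\hat\Omega-\Omega)}{\hat\Omega\,\Omega},
$$
which gives the first display of Theorem \ref{thm::RE}. For the consequence, $n_1/n\to1$, so with $h\asymp n^{-1/(d+4)}$ both $h^2$ and $(n_1h^d)^{-1/2}$ are of order $n^{-2/(d+4)}$. Substituting $r_n=n_0^{-1/2}$ then gives $\mathcal{O}_P(n_0^{-1/2}+n^{-2/(d+4)})$.
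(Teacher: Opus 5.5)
Your proposal is correct and follows essentially the same route as the paper: a pointwise two-term bound on $\hat p_h(x|A=1)/\hat O(x)-p(x|A=1)/O(x)$ using a uniform lower bound on the odds, an integrated version of that bound for $\hat\Omega-\Omega$, and a final combination through the quotient (you use an exact algebraic identity where the paper uses a first-order expansion). Your treatment of $\hat\Omega$ is somewhat more careful than the paper's direct appeal to the $L^1$ KDE rate, since you control $\hat O$ on a neighborhood of the support and handle $\int \hat p_h(z|A=1)/O(z)\,dz$ as a smoothed sample mean. The boundary term you hedge on does not in fact arise: (P) makes $p(\cdot|A=1)$ a $C^2$ function with bounded Hessian on all of $\mathbb{R}^d$, so $\sup_z|\mathbb{E}\hat p_h(z|A=1)-p(z|A=1)|=\mathcal{O}(h^2)$.
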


Conditions (P), (O), and (K) are standard
regularity conditions. 
(P) and (K) are conventional KDE
conditions that hold for common kernels
\cite{chen2017tutorial,scott2015multivariate,wasserman2006all}. 
Condition (O) is a mild condition on the odds model,
which holds for most parametric odds models.
The rate $r_n$ for a conventional
parametric model on the odds is $r_n = \mathcal{O}_P(n^{-1/2})$.

Based on Theorem~\ref{thm::RE}, we observe
an interesting scenario when $n_0 \ll n_1\asymp n$. 
If 
$$
n_0 < n^{4/(d+4)},
$$
the dominating term in the error rate is $n_0^{-1/2}$,
so we conclude that 
$$
\hat p_{ASR}(x|A=0) - p(x|A=0) = \mathcal{O}_P\left(\frac{1}{\sqrt{n_0}}\right),
$$
meaning the nonparametric density estimator can achieve a 
parametric convergence rate with respect to the target sample size!
While we use a KDE in the above analysis, 
the same result applies to other nonparametric
density estimators. 

An interesting property of the ASR-KDE is that 
this estimator is asymptotically equivalent 
to the IPW-KDE!
To see this, 
suppose we use $\hat \Omega = \frac{n_0}{n_1} $. 
Then we immediately have 
\begin{align*}
  \hat p_{ASR}(x|A=0) &= \frac{1}{ \hat \Omega} \frac{\hat p_h(x|A=1)}{ \hat O(x)}\\
  &= \frac{1}{ \hat O(x)} \cdot \frac{n_1}{n_0}  \hat p_h(x|A=1)  \\
  & = \frac{1}{ \hat O(x)} \frac{1}{n_0h^d} \sum_{i=1}^n A_i K\left(\frac{X_i-x}{h}\right)\\
  & = \frac{1}{n_0h^d} \sum_{i=1}^n \frac{A_i}{ \hat O(x)} K\left(\frac{X_i-x}{h}\right).
  \end{align*}
The relationship between ASR-KDE and IPW-KDE can be 
formalized by expanding the inverse odds function locally around $x$. 
Since the kernel $K\left(\frac{X_i-x}{h}\right)$ localizes the sum to points where $X_i \approx x$, the inverse odds function $\frac{1}{\hat O(X_i)}$ behaves similarly to $\frac{1}{\hat O(x)}$. 
Therefore,
\begin{align*}
\hat p_{ASR}(x|A=0)
&= \frac{1}{n_0h^d} \sum_{i=1}^n \frac{A_i}{ \hat O(x)} K\left(\frac{X_i-x}{h}\right)\\
&\approx \frac{1}{n_0h^d} \sum_{i=1}^n \frac{A_i}{ \hat O(X_i)} K\left(\frac{X_i-x}{h}\right)\\ &= \hat p_{IPW}(x|A=0).
\end{align*}
Thus, the asymptotic properties of ASR-KDE
carry over to IPW-KDE.
Note that the above result is formalized in Remark 3
of \cite{zhang2025doubly}; simulation study
in Section \ref{sec::sim::ipw} also confirms this.

While Theorem~\ref{thm::RE} is a very positive result,
we want to note that the improvement 
is from a nonparametric target sample rate
to a parametric rate of the target sample size.
The rate is still limited by the target sample size $n_0$,
not the full sample size $n$. 
This is a very different phenomenon compared 
to the full efficiency gain.

Also, Theorem~\ref{thm::RE}
relies heavily on the assumption that the 
odds model is correct. 
If the odds model is incorrect, 
then the estimator is not even consistent
and we should just use the target sample
to estimate $p(x|A=0)$.

\begin{remark}[Augmentation with target sample KDE]
We may augment the ASR-KDE or IPW-KDE with the
target-sample only KDE 
$\hat p_h(x|A=0)$ as a simple convex combination
$$
\hat p_{\gamma}(x|A=0)  = 
\gamma \hat p_h(x|A=0) + (1-\gamma) \hat p_{ASR}(x|A=0),
$$
where $\gamma \in[0,1]$ tradeoff the two KDEs. 
While this may improve finite sample performance, 
it does not change the overall convergence rate
since $\hat p_h(x|A=0)$ has a rate $\mathcal{O}_P(n_0^{-2/(d+4)})$,
which is slower than the rate of ASR-KDE $\mathcal{O}_P(n_0^{-1/2})$.
So asymptotically, we will choose $\gamma\rightarrow 0$
as $n\to\infty$, so this augmentation does not
offer any asymptotic advantage when the odds
model is correctly specified.
Note that the TAIPW-KDE estimator 
uses a fixed $\gamma =0.5$, so its convergence
rate is slowed down by the target sample
only estimator $\hat p_h(x|A=0)$.

Having said this, the target sample only KDE 
does not suffer from the odds model mis-specification.
Thus, we recommend to compare $\hat p_h(x|A=0)$
and $\hat p_{ASR}(x|A=0)$ first in practice. 
If they agree, then we can be confident
that the odds model is correctly specified
and we can use $\hat p_{ASR}(x|A=0)$ 
as our final estimator.
\end{remark}

\section{Training a full-likelihood model with neural nets}
\label{sec::DL}

The full likelihood method 
offers a possiblity of 
using neural nets for learning
both the target distribution and odds model
\emph{simultaneously}.




Suppose we model $p(x|A=0) = p_\lambda(x)$ and $O(x) = O_\phi(x)$ for parameters $\lambda,\phi$. Now both $\lambda$ and $\phi$ are high-dimensional and the two models $p_\lambda$ and $O_\phi$ are neural nets. The FL function of $(\lambda, \phi)$ is 
$$
\ell(\lambda,\phi|x,a) = \log p_\lambda(x) + a \log O_\phi(x) - \log \left(1+ \int O_\phi(x) p_\lambda(x)dx\right). 
$$

Here we make a critical assumption: Assume that for any $\lambda$, we can easily sample from $p_\lambda(x)$. 

This is generally a very mild assumption since sampling is a almost required property for any deep generative model. This property allows us to easily approximate
$$
\int h(x) p_\lambda(x)dx = \mathbb{E}(h(Z_\lambda)),\quad Z_\lambda \sim p_\lambda,
$$
via Monte Carlo methods for any measurable function $h$. 

The training of the neural net requires two critical features: the ability to evaluate gradient of the log-likelihood function with respect to both $\lambda, \phi$. 

Here we further assume that the score functions 
$$
S_1(\lambda|x) = \nabla_\lambda  \log p_\lambda(x),\quad
S_2(\phi|x) =  \nabla_\phi \log O_\phi(x)
$$
are easy to evaluate. Note that we do NOT need to be able to compute the FL function--we only need the score functions.
When we use neural nets, these scores are the standard score that can be easily computed using auto-differentiation.

\subsection{Gradient with respect to $\lambda$}

The gradient with respect to $\lambda$ is 
\begin{align*}
\nabla_\lambda \ell(\lambda,\phi|x,a) &= S_1(\lambda|x) - \nabla_\lambda \log \left(1+ \int O_\phi(x) p_\lambda(x)dx\right)\\
& = S_1(\lambda|x) - \frac{\nabla_\lambda \int O_\phi(x) p_\lambda(x)dx}{1+ \int O_\phi(x) p_\lambda(x)dx}\\
& = S_1(\lambda|x) - \frac{ \int O_\phi(x) S_1(\lambda|x) p_\lambda(x)dx}{1+ \int O_\phi(x) p_\lambda(x)dx}\\
& = S_1(\lambda|x)  - \frac{\mathbb{E}(O_\phi(Z_\lambda) S_1(\lambda|Z_\lambda))}{1+\mathbb{E}(O_\phi(Z_\lambda))}.
\end{align*}
Therefore, we can use Monte Carlo sampling to approximate this gradient at each $\lambda,\phi$. 

\subsection{Gradient with respect to $\phi$}

The gradient with respect to $\phi$ follows a similar pattern. 
\begin{align*}
\nabla_\phi \ell(\lambda,\phi|x,a) &= a S_2(\phi|x) - \nabla_\phi \log \left(1+ \int O_\phi(x) p_\lambda(x)dx\right)\\
& = aS_2(\phi|x) - \frac{\nabla_\phi \int O_\phi(x) p_\lambda(x)dx}{1+ \int O_\phi(x) p_\lambda(x)dx}\\
& = aS_2(\phi|x) - \frac{ \int O_\phi(x) S_2(\phi|x) p_\lambda(x)dx}{1+ \int O_\phi(x) p_\lambda(x)dx}\\
& = aS_2(\phi|x)  - \frac{\mathbb{E}(O_\phi(Z_\lambda) S_2(\phi|Z_\lambda))}{1+\mathbb{E}(O_\phi(Z_\lambda))}.
\end{align*}
So again, we can easily evaluate the gradient with respect to $\phi$.


\subsection{Practical challenges}

The above trick of approximating the gradient appears 
in the machine learning literature and is 
sometimes called the REINFORCE algorithm 
\cite{williams1992simple, ranganath2014black, 
mohamed2020monte}.
However, 
the REINFORCE algorithm is notorious for 
its high variance due to Monte Carlo sampling.
A modern remedy is to apply
the reparametrization trick
to further reduce the variance 
\cite{kingma2013auto, chen2025frequentist,mohamed2020monte}. 
While the above procedure
offers a possible way of training two neural networks
simultaneously based on FL,
it remains unclear if this procedure will 
work efficiently in practice. 
Investigating the practical utility of this 
neural network training procedure is beyond
the scope of the current paper
so we leave this for future work.


\section{Simulations}
\label{sec::simulations}

To empirically validate our theoretical findings 
on the efficiency gain, we conduct a number of simulation 
studies and apply to an astronomy 
data. 
The simulation scripts can be downloaded
at \url{https://github.com/yenchic/full-likelihood-augmentation}.

\subsection{5D Gaussian and multiple contrasts}
\label{sec::5D}

To empirically demonstrate the  
requirement discussed in Remark 3, and to link back
to the reparameterization theory in Section 
\ref{sec::rep}, we construct a scenario where 
specific contrast parameters are orthogonal to 
the odds model. To illustrate this phenomenon in 
higher dimensions, we consider a 5-dimensional 
Gaussian target distribution 
$p(x|A=0) = N(\mu_0, \sigma_0^2 \mathbf{I}_5)$, 
where $\mu_0 = \mathbf{0}$ and $\sigma_0^2 = 1$. 
The true odds model is driven by the coefficient 
vector $\beta = (2, 1, 0, 1, 1)^T$. 
We set $n_0 = 100$ and $n_1 = 10000$ and run 500 
Monte Carlo iterations.

We evaluate the MSE of four different methods across
nine parameters. The parameters include the 
individual means ($\mu_1, \dots, \mu_5$), the 
variance parameter $\sigma^2$, and 
three carefully chosen contrasts:
\begin{itemize}
    \item $\mu_1 - 2\mu_2$, orthogonal to $\beta$ because $2(1) + 1(-2) = 0$.
    \item $\mu_4 - \mu_5$, orthogonal to $\beta$ because $1(1) + 1(-1) = 0$.
    \item $\mu_1 - \mu_2 - \mu_4$, orthogonal to $\beta$ because $2(1) + 1(-1) + 1(-1) = 0$.
\end{itemize}

The four estimation methods are:
\begin{enumerate}
    \item \textbf{IPW}: Standard Inverse Probability Weighting using a fully unrestricted odds model.
    \item \textbf{Full-Likelihood (Unrestricted Odds)}: FL jointly estimating $\mu, \sigma^2$, and a fully unconstrained $\beta \in \mathbb{R}^5$.
    \item \textbf{Full-Likelihood (Correct Restricted Odds)}: FL where the odds model is correctly restricted to be proportional to $(2X_1+X_2+X_4+X_5)$. This structurally forces the orthogonal odds parameters to be exactly zero.
    \item \textbf{Full-Likelihood (Incorrect Restricted Odds)}: FL where the odds model is incorrectly restricted to be proportional to $(X_1+X_2+X_3+X_4+X_5)$. This severely misspecifies the odds model.
\end{enumerate}

Unrestricted odds can be viewed as 
overfitting the odds model with unnecessary 
parameter.
Correct restricted odds is the case of 
correctly specifying the odds model, i.e., 
the true odds model is included in the restricted 
space.
Incorrect restricted odds is the case of 
mis-specifying the odds model, i.e., 
the true odds model is not included in the 
restricted space.

Figure~\ref{fig::mse_5d} and Table~\ref{tab::mse_5d} 
present the MSE across all nine parameters. 
The results perfectly align with the theory. 
For the variance $\sigma^2$ and the three orthogonal 
contrasts, the Correct Restricted FL estimator 
leverages the massive auxiliary sample and achieves 
the fast convergence rate, outperforming IPW by 
orders of magnitude. The individual means 
$\mu_1, \dots, \mu_5$, however, are not 
orthogonal to $\beta$, so their estimation remains 
bottlenecked by the slow $O(1/n_0)$ rate across 
all consistent methods.

Noticbly, the Unrestricted FL estimator destroys 
the fast rate for the contrasts because it 
attempts to freely estimate the orthogonal 
$\beta$ directions from the small target sample, 
inducing severe Fisher information confounding. 
Finally, the Incorrect Restricted FL estimator, 
despite being restricted, misspecifies the odds 
model, leading to severe bias and massive MSE 
across all parameters. This highlights that 
realizing the efficiency gain requires the 
odds model to be both correctly specified and 
strictly restricted along the orthogonal directions.

\begin{figure}[htbp]
  \centering
\includegraphics[width=\textwidth]{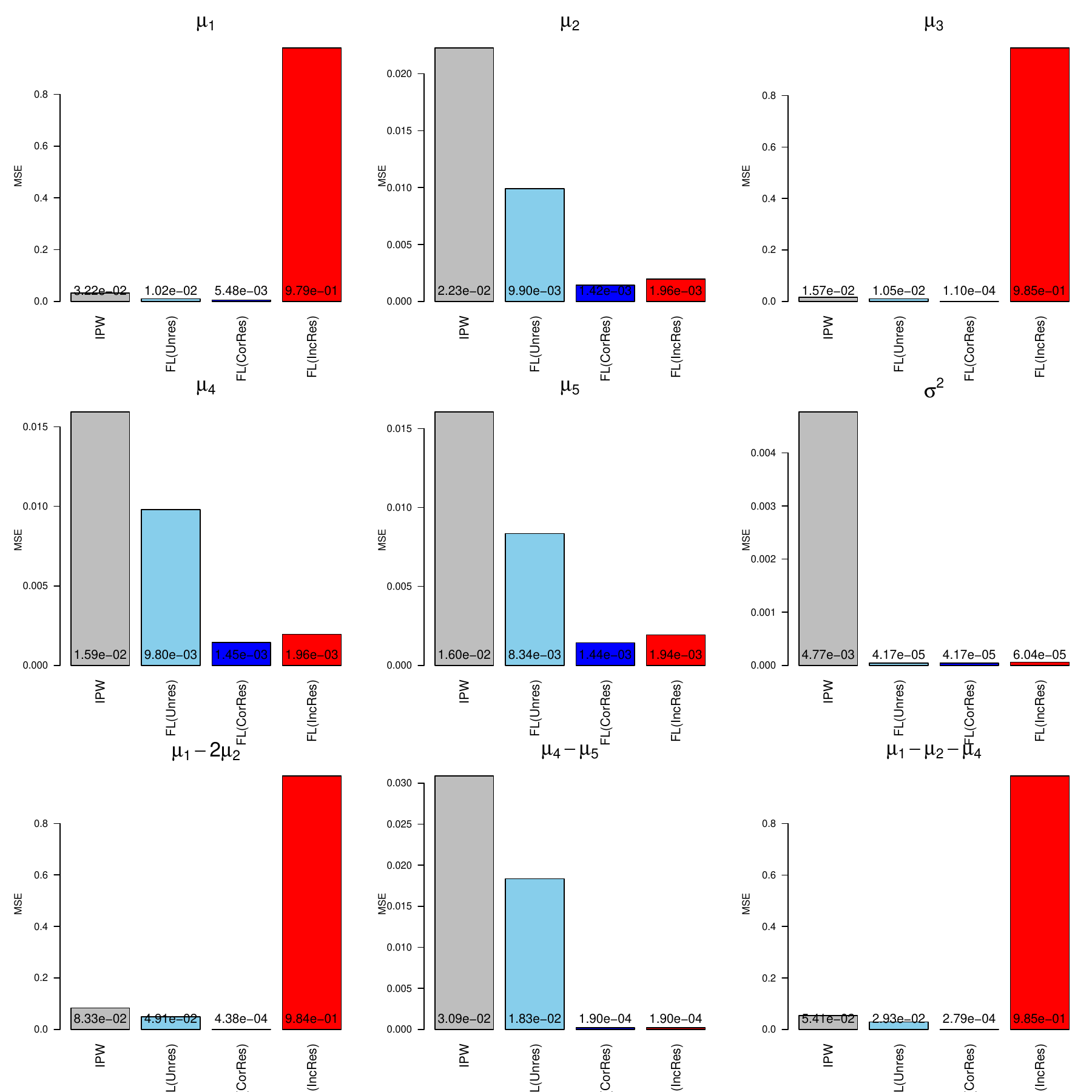}
\caption{MSE comparison across 9 parameters in the 
5-dimensional Gaussian simulation. The three 
orthogonal contrasts ($\mu_1 - 2\mu_2$, 
$\mu_4 - \mu_5$, $\mu_1 - \mu_2 - \mu_4$) and 
the variance ($\sigma^2$) achieve the fast rate 
 when the odds model is correctly 
restricted.}
\label{fig::mse_5d}
\end{figure}

\begin{table}[htbp]
\centering
\begin{tabular}{l|cccc}
\hline
Parameter &  IPW & FL (Unrestricted) & FL (Correct Res) & FL (Incorrect Res)  \\
\hline
$\mu_1$  &  $3.22 \times 10^{-2}$ & $1.02 \times 10^{-2}$ & $5.48 \times 10^{-3}$ & $9.79 \times 10^{-1}$  \\
$\mu_2$  &  $2.23 \times 10^{-2}$ & $9.90 \times 10^{-3}$ & $1.42 \times 10^{-3}$ & $1.96 \times 10^{-3}$  \\
$\mu_3$  &  $1.57 \times 10^{-2}$ & $1.05 \times 10^{-2}$ & $1.10 \times 10^{-4}$ & $9.85 \times 10^{-1}$  \\
$\mu_4$  &  $1.59 \times 10^{-2}$ & $9.80 \times 10^{-3}$ & $1.45 \times 10^{-3}$ & $1.96 \times 10^{-3}$  \\
$\mu_5$  &  $1.60 \times 10^{-2}$ & $8.34 \times 10^{-3}$ & $1.44 \times 10^{-3}$ & $1.94 \times 10^{-3}$  \\
$\sigma^2$  &  $4.77 \times 10^{-3}$ & $4.17 \times 10^{-5}$ & $4.17 \times 10^{-5}$ & $6.04 \times 10^{-5}$  \\
$\mu_1 - 2\mu_2$  &  $8.33 \times 10^{-2}$ & $4.91 \times 10^{-2}$ & $4.38 \times 10^{-4}$ & $9.84 \times 10^{-1}$  \\
$\mu_4 - \mu_5$  &  $3.09 \times 10^{-2}$ & $1.83 \times 10^{-2}$ & $1.90 \times 10^{-4}$ & $1.90 \times 10^{-4}$  \\
$\mu_1 - \mu_2 - \mu_4$  &  $5.41 \times 10^{-2}$ & $2.93 \times 10^{-2}$ & $2.79 \times 10^{-4}$ & $9.85 \times 10^{-1}$  \\
\hline
\end{tabular}
\caption{Exact MSE of different methods in the 
5-dimensional Gaussian simulation. The variance 
$\sigma^2$ and the three strictly orthogonal 
contrasts ($\mu_1 - 2\mu_2$, $\mu_4 - \mu_5$, 
$\mu_1 - \mu_2 - \mu_4$) achieve the 
fast rate only when the odds model 
is correctly restricted.}
\label{tab::mse_5d}
\end{table}

\subsection{Efficiency gain in the Ising model} 
\label{sec::sim::ising}
To empirically validate the structural 
orthogonality of network interactions, 
we designed a comprehensive simulation study using 
the Ising model across varying dimensions 
($N \in \{2, 3, 5\}$). The target distribution 
is parameterized by main effects ($h$) and pairwise
interactions ($J$). 
We provide the full simulation setup, detailed discussion, 
and additional scaling results for a smaller 
target sample ($n_0=100$) in Appendix 
\ref{sec:appendix_ising}.

Crucially, we specify the true selection mechanism 
such that the odds is driven entirely by the
main effects: 
$\log O(x) = \alpha + \sum_{i} \beta_i x_i$. 
Because the interaction parameters $J_{ij}$ are 
completely excluded from the odds model, they are 
structurally invariant across the selection 
mechanism. As predicted by Theorem \ref{thm::exp}, 
any parameter that is structurally invariant will 
bypass the $\mathcal{O}(1/n_0)$ bottleneck 
associated with estimating the odds model from 
the small target sample, and instead achieve the 
fast $\mathcal{O}(1/n)$ convergence rate.

To validate this scaling behavior, we fixed the 
target sample size at $n_0 = 200$ and massively 
scaled the auxiliary sample size $n_1$ up to 
100,000 over 500 Monte Carlo iterations. 
Table~\ref{tab:ising_200} presents the MSE of 
the interaction parameters. 
Here, the MSE is summed over all 
interaction parameters.
The FL estimator successfully isolates the 
orthogonal interactions and leverages the massive 
auxiliary sample, yielding an MSE reduction of 
up to $27.6\times$ compared to the standard IPW 
estimator, which remains bottlenecked by the slow 
target-sample rate.
In particular, when $N=2$, the IPW MSE 
has reached a stable limit once $n_1=20,000$
and is not improved regardless of $n_1$
as its main source of uncertainty is driven 
by the fixed target sample size $n_0 = 200$.

\begin{table}[htbp]
    \centering
    \begin{tabular}{crccc}
        \toprule
        $N$ & $n_1$ & IPW MSE & FL MSE & Improvement \\
        \midrule
\multirow{5}{*}{2} & 5000 & $0.0038$ ($0.0002$) & $0.0019$ ($0.0001$) & 2.0$\times$ \\
                    & 10000 & $0.0034$ ($0.0002$) & $0.0010$ ($6.2 \times 10^{-5}$) & 3.5$\times$ \\
                    & 20000 & $0.0030$ ($0.0002$) & $0.0006$ ($3.5 \times 10^{-5}$) & 5.2$\times$ \\
                    & 50000 & $0.0030$ ($0.0002$) & $0.0002$ ($1.2 \times 10^{-5}$) & 15.6$\times$ \\
                    & 100000 & $0.0031$ ($0.0002$) & $0.0001$ ($7.4 \times 10^{-6}$) & 27.6$\times$ \\
\midrule
\multirow{5}{*}{3} & 5000 & $0.0404$ ($0.0029$) & $0.0214$ ($0.0010$) & 1.9$\times$ \\
                    & 10000 & $0.0263$ ($0.0013$) & $0.0126$ ($0.0006$) & 2.1$\times$ \\
                    & 20000 & $0.0203$ ($0.0009$) & $0.0070$ ($0.0003$) & 2.9$\times$ \\
                    & 50000 & $0.0168$ ($0.0007$) & $0.0030$ ($0.0001$) & 5.7$\times$ \\
                    & 100000 & $0.0172$ ($0.0007$) & $0.0016$ ($6.8 \times 10^{-5}$) & 11.0$\times$ \\
\midrule
\multirow{5}{*}{5} & 5000 & $0.0640$ ($0.0022$) & $0.0209$ ($0.0007$) & 3.1$\times$ \\
                    & 10000 & $0.0519$ ($0.0016$) & $0.0140$ ($0.0005$) & 3.7$\times$ \\
                    & 20000 & $0.0375$ ($0.0012$) & $0.0085$ ($0.0002$) & 4.4$\times$ \\
                    & 50000 & $0.0247$ ($0.0007$) & $0.0046$ ($0.0001$) & 5.4$\times$ \\
                    & 100000 & $0.0205$ ($0.0008$) & $0.0027$ ($8.0 \times 10^{-5}$) & 7.6$\times$ \\
\bottomrule
    \end{tabular}
    \caption{MSE of the interaction parameters 
    across dimensions for target sample size 
    $n_0 = 200$. The performance of IPW
    flattens as $n_1$ is large while FL
    still keeps improving. }
    \label{tab:ising_200}
\end{table}

\subsection{Scaling analysis for Gaussian mixture models}
\label{sec::sim::gmm}
We consider a two-component univariate Gaussian 
Mixture Model (GMM) for the
target distribution with a true odds model 
being a simple logistic regression model,
creating a tilting to the GMM.
The variance parameter $\sigma^2_1$ is structurally 
invariant (orthogonal) to the odds tilt,
while the mixture weight $w_1$ and the centers $\mu_1,\mu_2$ 
are entangled with 
the tilt.

Table \ref{tab:gmm_improvement} highlights 
the fundamental divergence in estimation efficiency. 
As the auxiliary sample size $n_1$ scales to 
100,000, the FL estimator perfectly exploits 
the massive auxiliary data to estimate the 
orthogonal variance $\sigma^2_1$, achieving an 
unbounded efficiency gain of up to $493.2\times$ 
over IPW. In stark contrast, the entangled mixture 
weight $w_1$ crashes into the finite-sample 
floor of the target dataset. 
While $\mu_1$ and $w_1$ do not improve 
much with respect to $n_1$,
their MSEs are still much smaller than the IPW.

The full experimental setup and the comprehensive 
tables tracking the MSE of all seven GMM parameters 
are provided in Appendix \ref{app:gmm}.

\begin{table}[htbp]
    \centering
    \begin{tabular}{crccc}
        \toprule
        Parameter & $n_1$ & IPW MSE & FL MSE & Improvement \\
        \midrule
\multirow{5}{*}{$\mu_1$} & 5000 & $0.1587$ ($0.0295$) & $0.0384$ ($0.0277$) & 4.1$\times$ \\
                    & 10000 & $0.1432$ ($0.0164$) & $0.0073$ ($0.0004$) & 19.7$\times$ \\
                    & 20000 & $0.1210$ ($0.0127$) & $0.0064$ ($0.0004$) & 18.8$\times$ \\
                    & 50000 & $0.1056$ ($0.0091$) & $0.0072$ ($0.0004$) & 14.7$\times$ \\
                    & 100000 & $0.1145$ ($0.0110$) & $0.0069$ ($0.0004$) & 16.7$\times$ \\
\midrule
\multirow{5}{*}{$\sigma_1^2$} & 5000 & $0.1870$ ($0.0173$) & $0.0087$ ($0.0020$) & 21.5$\times$ \\
                    & 10000 & $0.2098$ ($0.0247$) & $0.0039$ ($0.0003$) & 53.5$\times$ \\
                    & 20000 & $0.1641$ ($0.0157$) & $0.0018$ ($0.0001$) & 89.5$\times$ \\
                    & 50000 & $0.1280$ ($0.0125$) & $0.0007$ ($4.6 \times 10^{-5}$) & 176.8$\times$ \\
                    & 100000 & $0.1606$ ($0.0210$) & $0.0003$ ($1.9 \times 10^{-5}$) & 493.2$\times$ \\
\midrule
\multirow{5}{*}{$w_1$} & 5000 & $0.0202$ ($0.0014$) & $0.0032$ ($0.0004$) & 6.2$\times$ \\
                    & 10000 & $0.0209$ ($0.0016$) & $0.0030$ ($0.0002$) & 7.0$\times$ \\
                    & 20000 & $0.0189$ ($0.0013$) & $0.0028$ ($0.0002$) & 6.7$\times$ \\
                    & 50000 & $0.0160$ ($0.0011$) & $0.0035$ ($0.0002$) & 4.6$\times$ \\
                    & 100000 & $0.0172$ ($0.0012$) & $0.0034$ ($0.0002$) & 5.0$\times$ \\
\bottomrule
    \end{tabular}
    \caption{Comparison of IPW and FL Mean Squared 
    Error (MSE) focusing on $\mu_1, \sigma_1^2$, 
    and $w_1$ across auxiliary sample size $n_1$. 
    The improvement highlights that the orthogonal 
    variance parameter $\sigma_1^2$ consistently 
    benefits from larger $n_1$, while $\mu_1$ 
    and $w_1$ plateau due to the target sample 
    size bottleneck.}
    \label{tab:gmm_improvement}
\end{table}

\subsection{Nonparametric density estimation via IPW-KDE}
\label{sec::sim::ipw}
To evaluate the performance of the 
performance of the IPW density estimators,
we revisit the 2-GMM model in Section \ref{sec::sim::gmm}. 
We compute the Mean Integrated Squared Error (MISE) 
for four methods
the target sample only KDE, 
the TAIPW-KDE, 
IPW-KDE, 
and ASR-KDE.
We consider 
distinct scaling regimes:
\begin{enumerate}
    \item \textbf{Scaling $n_0$ (Fixed $n_1 = 50000$):} 
    We fix the auxiliary sample size and scale 
    the target sample size 
    $n_0 \in \{50, 100, 200, 400, 800, 1600\}$.
    This experiment allows us to investigate the convergence
    rate of the four KDEs.
    \item \textbf{Scaling $n_1$ (Fixed $n_0 = 50$):} 
    We fix the target sample size and scale the 
    auxiliary sample size 
    $n_1 \in \{5000, 10000, 20000, 50000, 100000\}$.
    This experiment serves as an evidence that 
    all KDEs approaches suffer from the 
    target sample size limitation.
\end{enumerate}
The smoothing bandwidth
is chosen by the default rule-of-thumb method in R 
\texttt{nrd0}.
Note that TAIPW-KDE, IPW-KDE, and ASR-KDE
use the normal reference rule
based on the auxiliary sample size $n_1$
while the target-only approach is based on $n_0$.

Figure \ref{fig:asr_scaling} displays the MISE 
convergence rates under both regimes. 
Across both panels, we see
that IPW-KDE and ASR-KDE are 
very similar, which is as expected from
their asymptotic equivalence.
As $n_0$ 
increases (left panel), both IPW-KDE and ASR-KDE 
enjoy a substantially faster convergence rate 
at rate $\mathcal{O}(n_0^{-1})$. 
The target sample only approach 
and the TAIPW-KDE both have a slower convergence 
rate ($\mathcal{O}(n_0^{-4/5})$) as expected.


Conversely, when $n_0$ is fixed and $n_1$ increases 
(right panel), 
all methods remain flat or with tiny improvement.
This is also what we expect since
the target sample size is fixed, 
so all methods suffer from
the $O_P(n_0^{-1/2})$ asymptotic rate.
The slight improvement for TAIPW-KDE, IPW-KDE,
and ASR-KDE comes from the fact that 
the accuracy of the auxiliary KDE is improved 
as $n_1$ grows but since this accuracy
is not the dominating factor, it only 
contributes to a minor change in the MISE.


\begin{figure}[htbp]
    \centering
    \includegraphics[width=\textwidth]{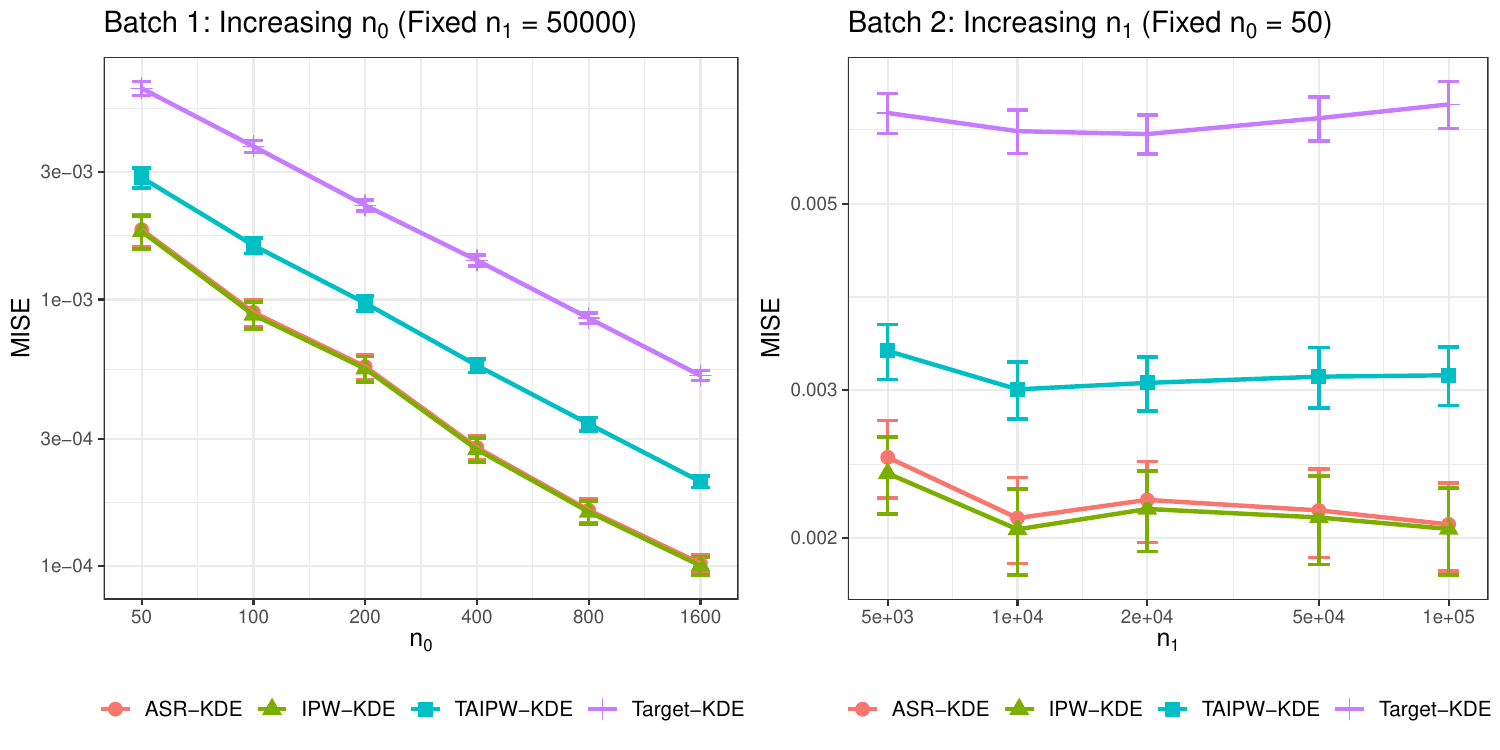}
    \caption{MISE of the KDE estimators under two 
    scaling regimes. (Left) Scaling target sample 
    size $n_0$ with fixed $n_1=50000$. 
    The IPW-KDE and ASR-KDE demonstrate 
    a parametric convergence rate 
    (notice that the curves are more separated as $n_0$
    increases).
    (Right) Scaling auxiliary sample size $n_1$ with 
    fixed $n_0=50$. 
    All methods  suffer from the target sample size limitation, 
    but IPW-KDE and ASR-KDE are still better approaches.
    }
    \label{fig:asr_scaling}
\end{figure}

\section{Discussion}

In this paper, 
we study the problem of augmenting
a tiny target sample with a massive 
auxiliary sample. 
When $n_0$ is fixed and $n_1\to\infty$,
the IPW approach does not achieve consistency 
but some parameters of the target model
can still be consistently estimated
under the FL approach. 
While this showcases
the power of the FL over IPW, the IPW 
still remains robust to target model mis-specification
and can output a consistent odds estimator
when $n_0\to\infty$.
Moreover, we can easily apply the IPW
to  a nonparametric density estimator
to obtain a density estimator at the parametric 
rate (of the target sample size).

\subsection{Multiplier bootstrap for uncertainty assessment}

In this paper, we mostly focus on the 
theoretical behaviors of the FL and IPW approaches
so we do not discuss statistical inference.
To obtain confidence intervals
or perform hypothesis tests, 
inference for the FL estimator is relatively straightforward---we can simply
use the estimated Fisher information matrix
in the same way as the standard FL method.

For IPW or complex likelihood models
(such as mixture models), 
we recommend using the multiplier bootstrap method.
For this method, each observation
is assigned a random weight $E_i$
that is drawn independently and identically from a distribution with mean 1
and standard deviation 1. 
A common example is sampling from an 
exponential distribution with a rate of $1$.

A crucial note for the IPW estimator is that 
the weights must stay the same across both stages (estimating
the odds and forming the final estimator). 
This is required to properly reflect the coupled uncertainty
in both stages\footnote{
See \url{https://faculty.washington.edu/yenchic/short_note/mult.html}
for an in-depth discussion.}.

We do not recommend the empirical bootstrap
due to the limited sample size of the target distribution.
It is highly possible that the empirical bootstrap
may create a resampled dataset
with an even smaller target sample 
that could cause the odds model to fail.

\subsection{Different types of efficiency gains}

In this paper, 
we have observed two different types
of efficiency gains.

{\bf Full efficiency gain.}
The full efficiency gain occurs in the FL approach. 
A critical feature to achieve this is to 
use an exponential family model on the 
target distribution
and cleverly combine it with a proper logistic 
regression model on 
the odds. 
The generative model is on the target distribution. 

{\bf Parametric efficiency gain.}
The parametric efficiency gain
occurs in the IPW approach when we use 
IPW-KDE or ASR-KDE,
such that while we are using a nonparametric density
estimator, we still achieve a parametric
convergence rate in terms of the target sample size. 
The generative model is on the auxiliary distribution. 

While both approaches gain efficiency 
relative to the naive target sample only
procedure, the mechanisms behind them are very 
different.
In the full efficiency gain, it occurs
when a parameter remains untouched 
by the odds model, so 
both samples offer information 
for estimating this parameter.
On the other hand, the parametric efficiency
gain is achieved simply because
we can accurately estimate the auxiliary density
due to its large sample size,
so the bottleneck is the uncertainty
of the odds model.


\subsection{An integrated pipeline for modeling
the target distribution}

From all our analysis, we have observed
three approaches for estimating the target
distribution: target sample only estimator,
FL method, and IPW method. 
Each of them has their own advantage and limitations.
We should not view these approaches 
as competitors; instead, they 
complement each other and should be used 
together in a proper way.

Based on these observations, we propose the following 
pipeline
for estimating the target distribution:
\begin{enumerate}
\item We first examine the feasibility of an odds 
model. 
If no reasonable odds model is feasible, 
we just use target sample to do the inference
by either a parametric or a nonparametric density estimator.

\item If an odds model is feasible, 
we then examine if an exponential family
or a mixture of exponential family 
model is appropriate for the target data. 
Ideally, this exponential family would 
be conjugate to the odds model for ease of computation.

\item If an exponential family model (or a mixture of them) 
is appropriate, 
we use the FL method. The orthogonal parameter
to the odds model will benefit from the full efficiency.

\item If no exponential family or its mixture
is appropriate, then we then apply
the ASR-KDE (or IPW-KDE) estimator. 

\end{enumerate}

\section*{AI-assisted development disclosure}
The author used Gemini 3.1 Pro in 
the process of writing the paper. 
The tool was primarily used for polishing
the texts, suggestions on the 
proofs of theorems, and simulations. The 
simulation codes were generated by the 
AI under the guidance of the author. All 
results were checked and validated by 
the author. The author takes full 
responsibility for any errors or omissions 
in the paper.

\section*{Acknowledgements}
The author would like to thank 
the support of NSF Grant DMS-2141808, 
2310578 and NIH Grant U24 AG072122.

\appendix

\section{Examples outside of Gaussian}
\label{sec::other_examples}


\subsection{Gamma distribution}
Suppose the target population follows a Gamma distribution 
$\text{Gamma}(k, \theta)$, 
$$p(x|A=0) = 
\frac{1}{\Gamma(k)\theta^k} x^{k-1} e^{-x/\theta}.
$$ 
The natural sufficient statistics are $T_1(x) = \log x$ 
(associated with the shape $k$) and $T_2(x) = x$ 
(associated with the scale $\theta$).

If the selection mechanism follows a linear odds model, 
$\log O(x) = \alpha + \beta x$, then the exponentially tilted 
distribution in the auxiliary sample is proportional to 
$x^{k-1} e^{-x(1/\theta - \beta)}$. This is exactly a 
$\text{Gamma}(k, \frac{\theta}{1-\beta\theta})$ distribution. 
Notice that the shape parameter $k$ is completely invariant under 
this tilting. Because it is structurally independent of $\beta$ 
in the auxiliary distribution, the shape parameter $k$ 
enjoys the full efficiency gain,
while the scale parameter $\theta$ becomes entangled with $\beta$ 
and suffers the slow target-sample bottleneck.

Conversely, if the selection mechanism instead followed a 
logarithmic odds model (which corresponds 
to a power law $O(x) \propto x^\beta$), $\log O(x) = \alpha + \beta \log x$, 
the auxiliary distribution would become 
$\text{Gamma}(k+\beta, \theta)$. In this scenario, 
the scale parameter $\theta$ remains invariant and achieves 
the fast rate, while the shape parameter $k$ is confounded.

\subsection{Beta distribution}
Consider a target population that follows a Beta distribution $\text{Beta}(a, b) $, 
$$p(x|A=0)\propto x^{a-1} (1-x)^{b-1}.$$
The sufficient statistics are $\log x$ and $\log(1-x)$.

If the odds model only depends on the first statistic, 
i.e., $\log O(x) = \alpha + \beta \log x$, 
the auxiliary density is proportional to 
$x^{a-1+\beta} (1-x)^{b-1}$, which is exactly 
$\text{Beta}(a+\beta, b)$. The parameter $b$ is completely 
structurally orthogonal to the odds mechanism and thus enjoys 
the extraordinary efficiency gain, while $a$ is confounded with 
$\beta$ and suffers the slow rate. Similarly, an odds model 
depending on $\log(1-x)$ preserves the fast rate for $a$.

\section{More simulations}
\label{sec:more_simulations}

\subsection{Simulation: Gaussian-logisic motivating example}
\label{sec::sim1}
This simulation provides the full details for the 
motivating Gaussian-logistic example discussed in 
Section~\ref{sec::GL}. 
We generate the target sample from a standard 
Gaussian distribution, $p(x|A=0) \sim N(0, 1)$, 
implying the true parameters are $\mu_*=0$ and 
$\sigma_*^2=1$. 
We set the true odds model parameter to $\beta_*=2$. 
To mimic the data imbalance, we consider a target 
sample of size $n_0=50$ and an auxiliary sample 
of size $n_1=5000$. 
The intercept $\alpha$ is chosen such that the 
true marginal probability aligns with the sample 
ratio, $P(A=0) = n_0 / (n_0 + n_1)$.
We generate 1,000 Monte Carlo datasets and 
evaluate the Mean Squared Error (MSE) of both the 
IPW and FL estimators.

As shown in Figure~\ref{fig::mse}, both methods 
estimate
$\mu, \alpha, \beta$ at the 
slow $\mathcal{O}(1/n_0)$ rate. However, because 
the variance parameter $\sigma^2$ is orthogonal to 
the unidentifiable curve, the FL estimator 
successfully exploits the massive auxiliary sample. 
This yields an MSE for $\sigma^2$ that is much 
smaller than that of the IPW estimator, perfectly 
validating Proposition~\ref{prop::FL}.


\subsection{Simulation: model mis-specification}
\label{sec::sim::wrong}
Next, we investigate the robustness of the two 
estimators under model misspecification. 
We retain the same true odds model 
as in Appendix \ref{sec::sim1} 
but generate the target sample from a Uniform distribution over $[-1, 1]$. 
The FL estimator incorrectly assumes the target distribution is Gaussian, whereas the IPW estimator only specifies the odds model via standard logistic regression.

Figure~\ref{fig::mse_mis} displays the MSE of the 
odds parameters $\alpha$ and $\beta$ over 1,000 
Monte Carlo runs. 
We only consider these two parameters since 
only the odds model is correctly specified
so the MSE is meaningful.
The IPW method remains consistent because it does 
not rely on the target distribution model. 
Conversely, the FL method attempts to force a 
Gaussian fit onto the uniform data, completely 
breaking its consistency for the odds parameters. 
The resulting MSE for $\beta$ under the FL method 
is substantially worse than that of the IPW 
estimator, confirming our analysis in 
Section \ref{sec::compare}.

\begin{figure}
  \centering
\includegraphics[width=0.8\textwidth]{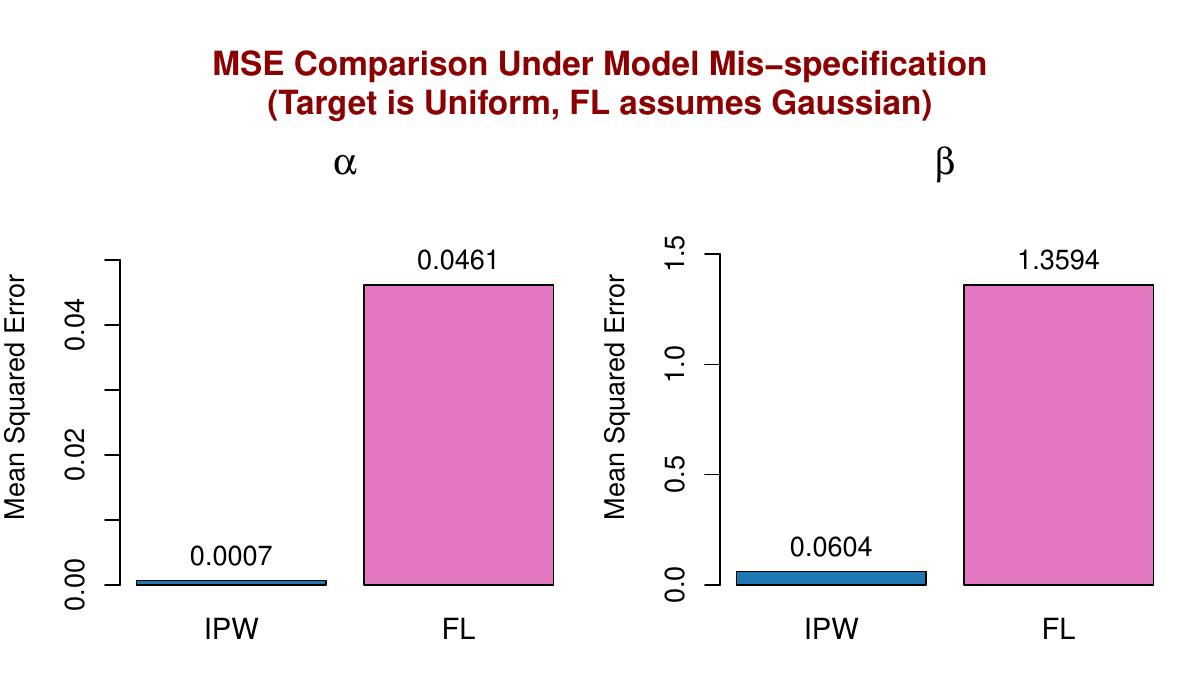}
\caption{MSE comparison of the odds parameters under model misspecification. The true target distribution is Uniform, but the FL estimator incorrectly assumes it is Gaussian.}
\label{fig::mse_mis}
\end{figure}

\section{Simulation details: efficiency gain in the Ising model}
\label{sec:appendix_ising}

To validate our asymptotic efficiency 
theorems, we designed a simulation 
study using the Ising model across varying
 dimensions ($N=2, 3, 5$) with binary states 
 $x \in \{-1, 1\}^N$. 

\subsection{Setup and structural invariance}
The target distribution is parametrized by 
main effects ($\mathbf{h}$) and interactions 
($\mathbf{J}$):
$$
P_0(x) \propto \exp\left( \sum_{i} h_i x_i + \sum_{i<j} J_{ij} x_i x_j \right).
$$
We set the true main effects to $\mathbf{h}=0$ and defined a 
specific network structure for the interactions $\mathbf{J}$:
\begin{itemize}
    \item For $N=2$, the single interaction is active ($J_{12}=1$).
    \item For $N=3$, all three interactions are active ($J_{12}=J_{13}=J_{23}=1$), forming a complete graph.
    \item For $N=5$, we set the interactions to form 
    a sparse 1-dimensional chain 
    ($J_{12}=J_{23}=J_{34}=J_{45}=1$), with all 
    other non-adjacent interactions set to $0$. 
    This sparse structure is crucial; if all 10 
    interactions were set to 1, the target 
    distribution collapses entirely onto 
    the all-1 and all-$-1$ states, making  
    estimation from a small target sample 
    ($n_0=100$) impossible.
\end{itemize}
The odds model was driven 
entirely by the main effects: 
$$ \log O(x) = \alpha + \sum_{i} \beta_i x_i,$$ 
with true $\beta_i = 1$ for all $i$. 

Because the interaction parameters $J_{ij}$ 
are completely excluded from the odds model, 
they are structurally invariant from the odds. 
According to Theorem~\ref{thm::exp},
the interaction parameters can be estimated at the 
fast $\mathcal{O}(1/n)$ rate, leveraging the full 
power of the large auxiliary sample.


\subsection{Scaling analysis}
To validate this, we ran two sets of scaling 
simulations by fixing the target sample size at 
$n_0 = 100$ and $n_0 = 200$, and massively scaling 
the auxiliary sample size $n_1$ from 5,000 up to 
100,000 over 500 Monte Carlo iterations.

\begin{figure}[h!]
    \centering
    \includegraphics[width=0.32\textwidth]{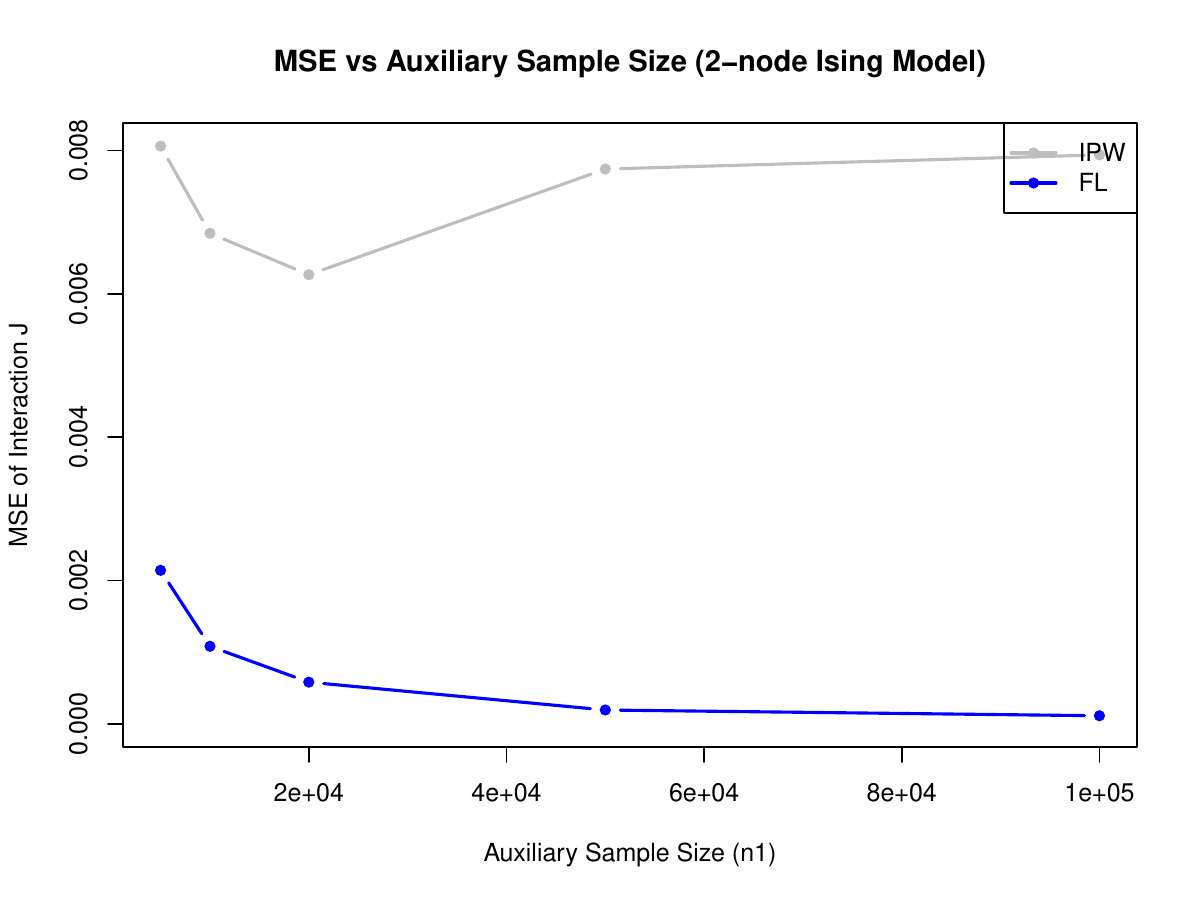}
    \includegraphics[width=0.32\textwidth]{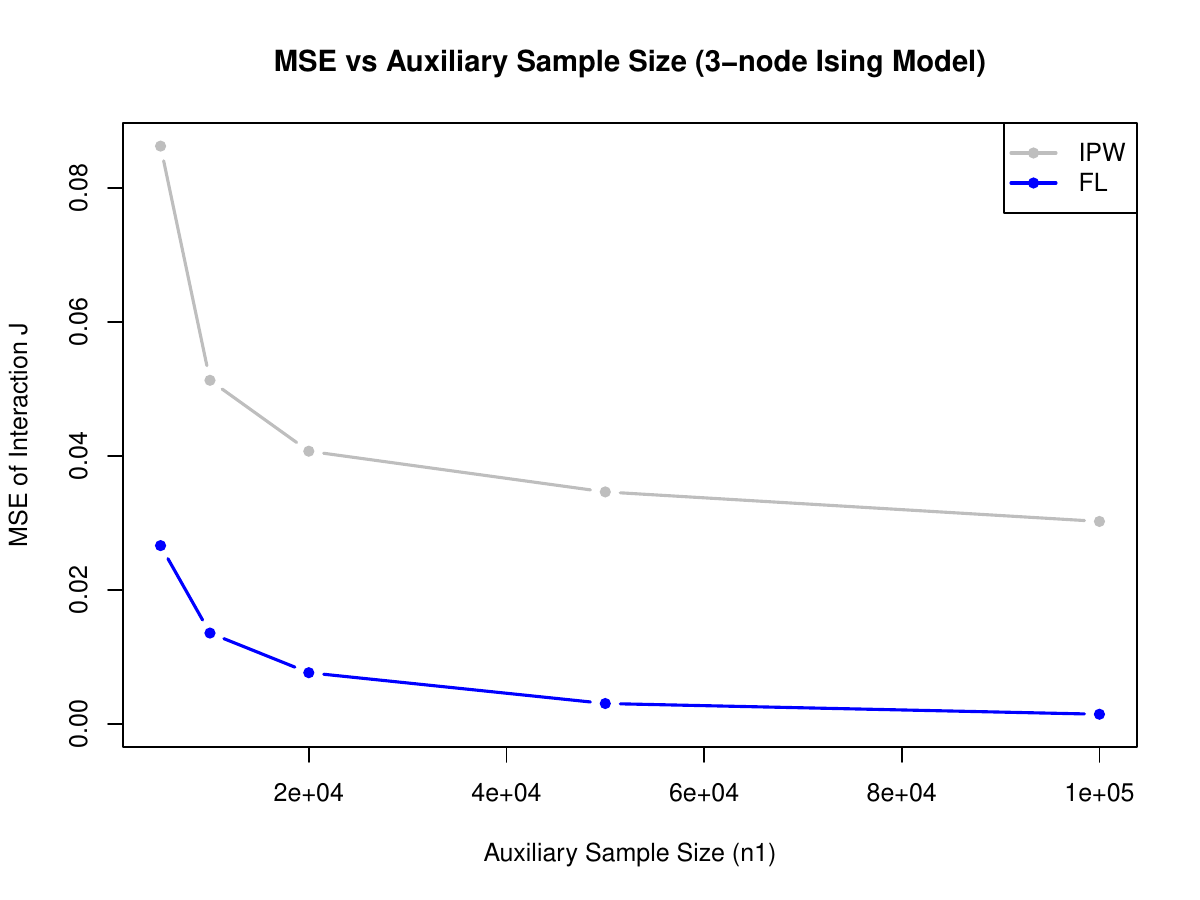}
    \includegraphics[width=0.32\textwidth]{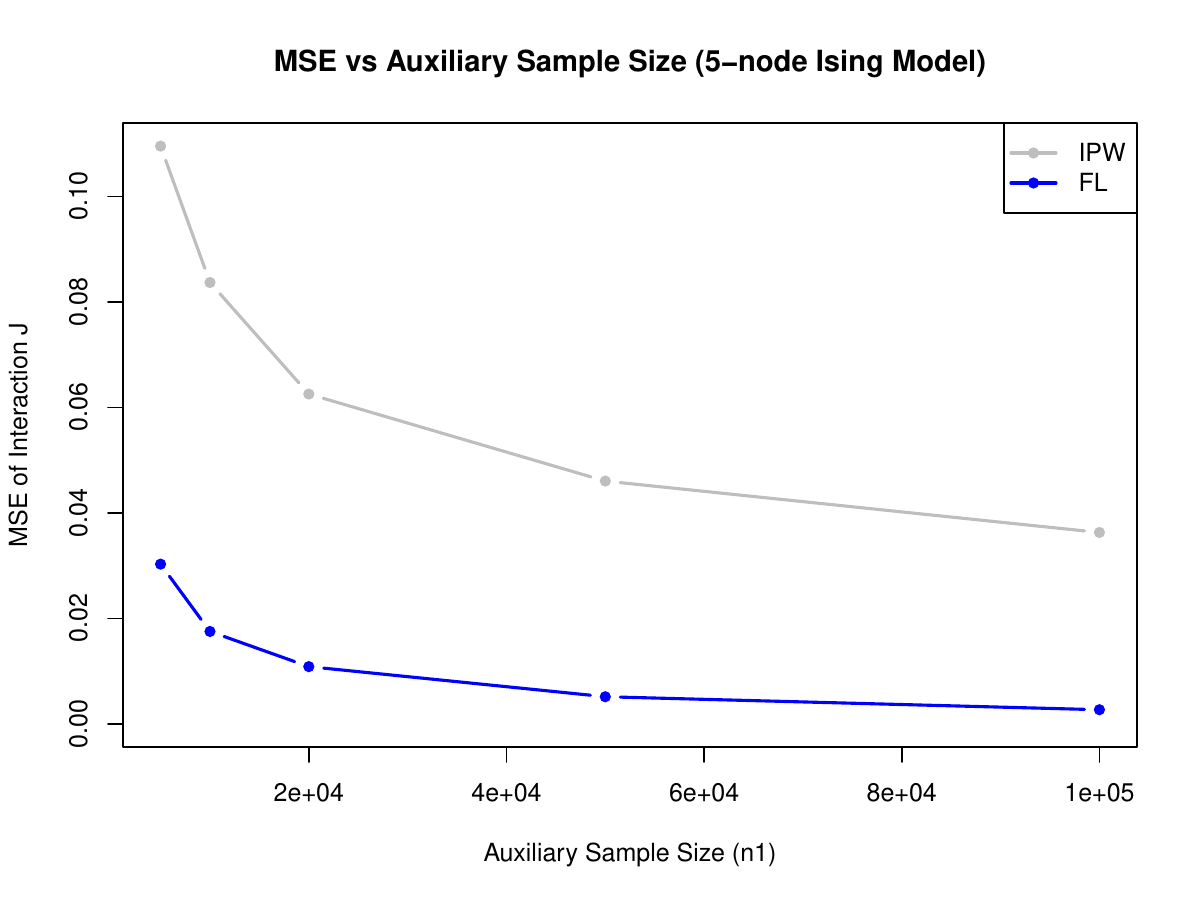}
    \caption{MSE of the interaction parameters as $n_1$ scales to 100,000 with target sample $n_0=100$ for $N=2$ (left), $N=3$ (center), and $N=5$ (right).}
    \label{fig:scaling_ising_100}
\end{figure}

\begin{figure}[h!]
    \centering
    \includegraphics[width=0.32\textwidth]{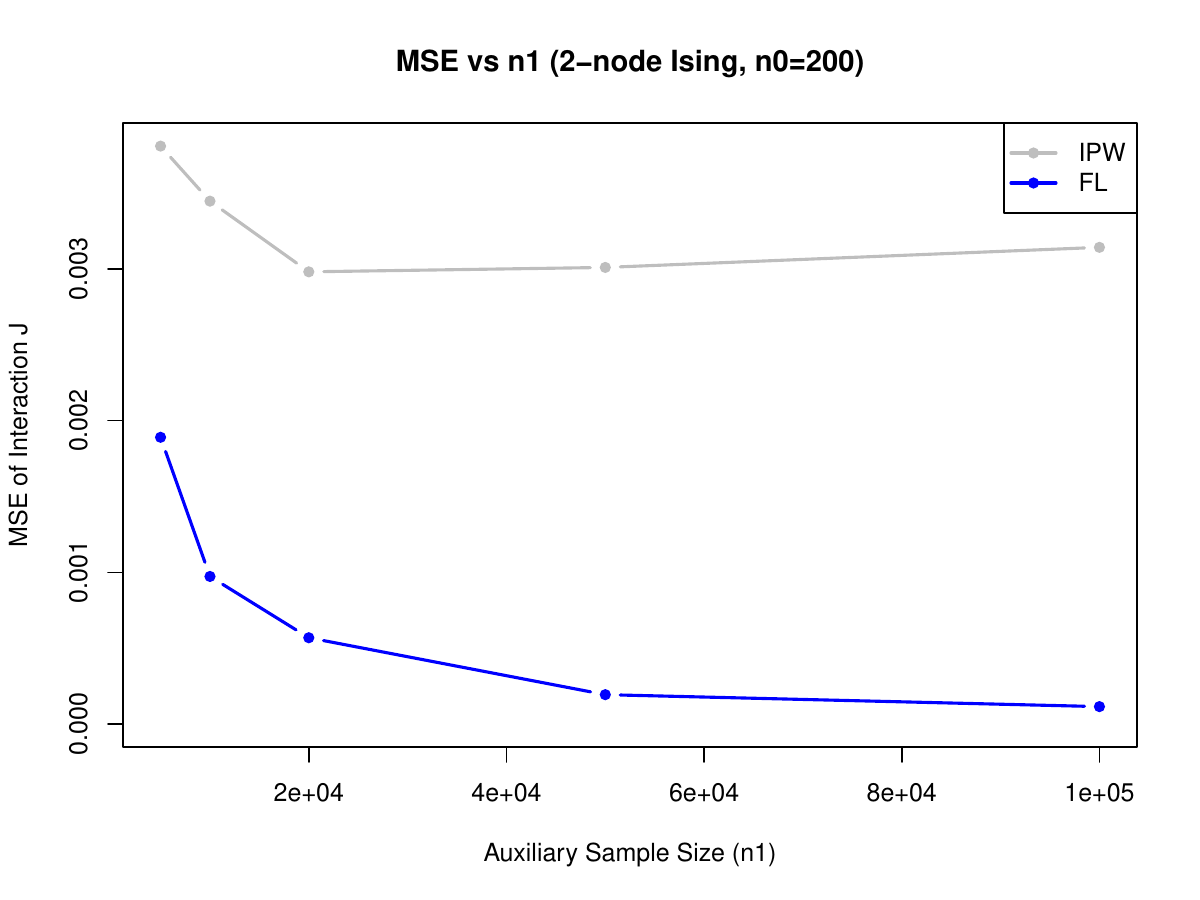}
    \includegraphics[width=0.32\textwidth]{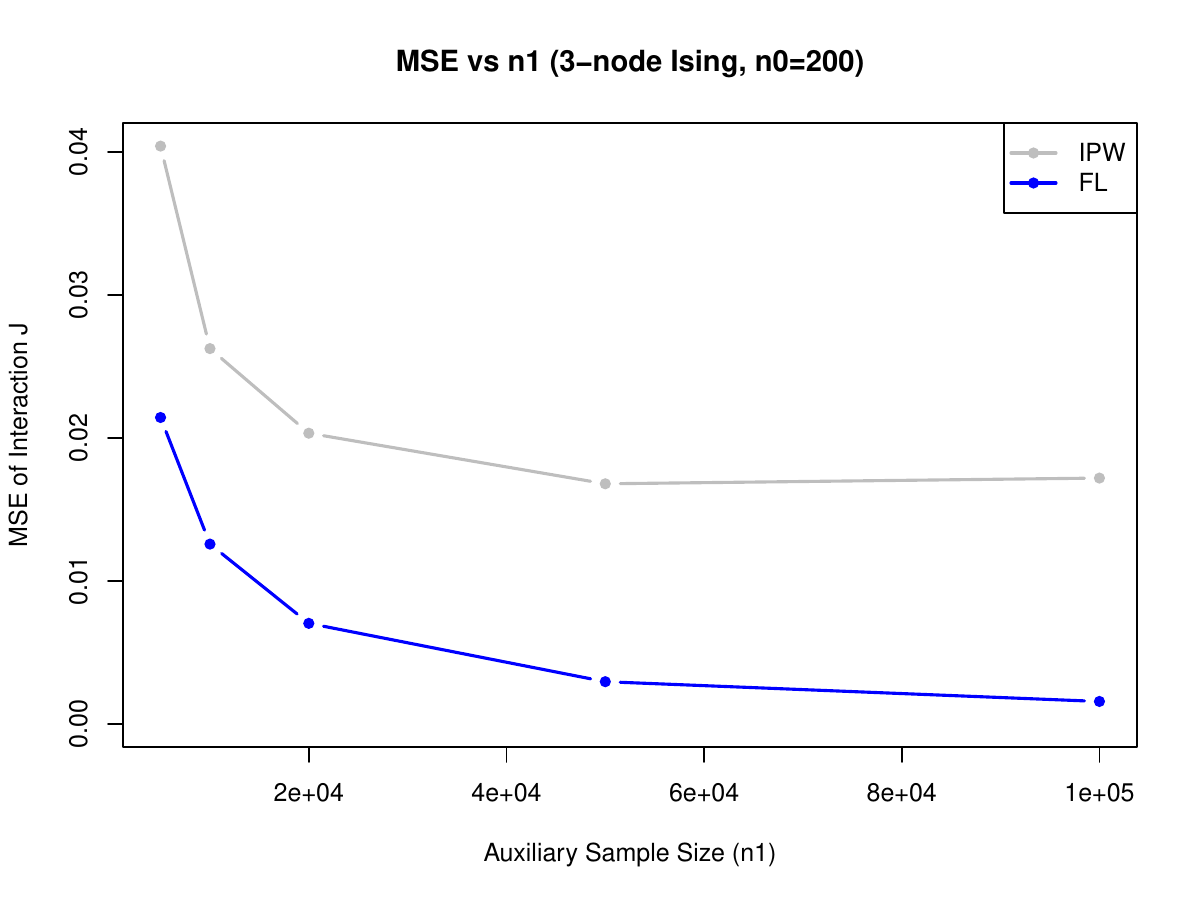}
    \includegraphics[width=0.32\textwidth]{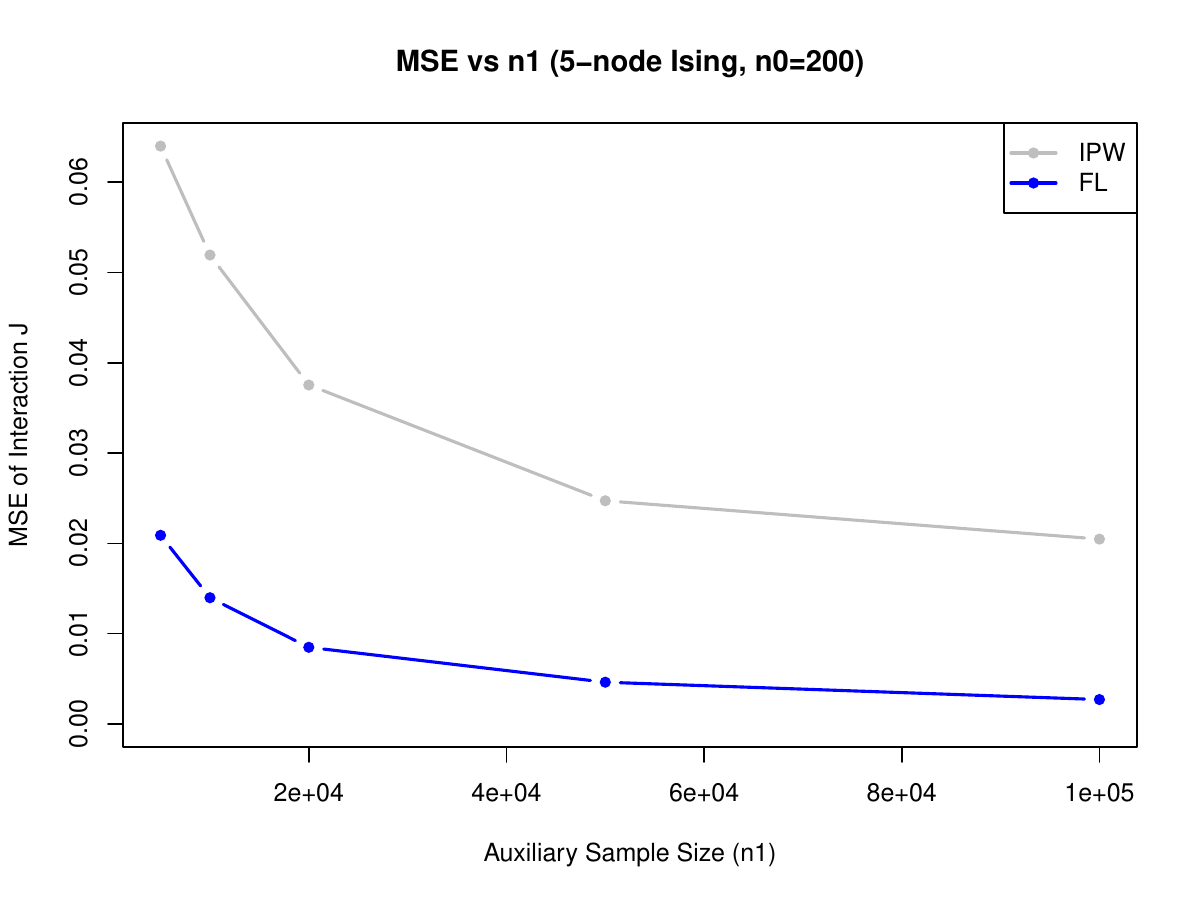}
    \caption{MSE of the interaction parameters as $n_1$ scales to 100,000 with doubled target sample $n_0=200$ for $N=2$ (left), $N=3$ (center), and $N=5$ (right).}
    \label{fig:scaling_ising_200}
\end{figure}

The numerical results are shown in Tables 
\ref{tab:ising_100} and \ref{tab:ising_200}.

\begin{table}[htbp]
    \centering
    \begin{tabular}{crccc}
        \toprule
        $N$ & $n_1$ & IPW MSE & FL MSE & Improvement \\
        \midrule
\multirow{5}{*}{2} & 5000 & $0.0081$ ($0.0005$) & $0.0021$ ($0.0001$) & 3.8$\times$ \\
                    & 10000 & $0.0068$ ($0.0005$) & $0.0011$ ($7.5 \times 10^{-5}$) & 6.3$\times$ \\
                    & 20000 & $0.0063$ ($0.0004$) & $0.0006$ ($3.8 \times 10^{-5}$) & 10.8$\times$ \\
                    & 50000 & $0.0077$ ($0.0005$) & $0.0002$ ($1.2 \times 10^{-5}$) & 40.0$\times$ \\
                    & 100000 & $0.0079$ ($0.0005$) & $0.0001$ ($7.2 \times 10^{-6}$) & 70.0$\times$ \\
\midrule
\multirow{5}{*}{3} & 5000 & $0.0862$ ($0.0069$) & $0.0266$ ($0.0012$) & 3.2$\times$ \\
                    & 10000 & $0.0513$ ($0.0034$) & $0.0136$ ($0.0006$) & 3.8$\times$ \\
                    & 20000 & $0.0407$ ($0.0017$) & $0.0076$ ($0.0004$) & 5.3$\times$ \\
                    & 50000 & $0.0346$ ($0.0011$) & $0.0030$ ($0.0001$) & 11.4$\times$ \\
                    & 100000 & $0.0302$ ($0.0009$) & $0.0014$ ($5.3 \times 10^{-5}$) & 21.0$\times$ \\
\midrule
\multirow{5}{*}{5} & 5000 & $0.1095$ ($0.0042$) & $0.0303$ ($0.0010$) & 3.6$\times$ \\
                    & 10000 & $0.0837$ ($0.0033$) & $0.0175$ ($0.0005$) & 4.8$\times$ \\
                    & 20000 & $0.0625$ ($0.0023$) & $0.0109$ ($0.0003$) & 5.8$\times$ \\
                    & 50000 & $0.0460$ ($0.0017$) & $0.0051$ ($0.0001$) & 8.9$\times$ \\
                    & 100000 & $0.0363$ ($0.0013$) & $0.0027$ ($7.3 \times 10^{-5}$) & 13.5$\times$ \\
\bottomrule
    \end{tabular}
    \caption{MSE of the interaction parameters across dimensions for target sample size $n_0 = 100$.}
    \label{tab:ising_100}
\end{table}

\subsection{Discussion}


The simulation results validate our main theorems.
The FL estimator's MSE continues to drop at
the fast $\mathcal{O}(1/n)$ rate, even when the target sample 
$n_0$ is completely frozen, which validates Theorem \ref{thm::exp}. 
This shows that FL can fully harness the infinite auxiliary data for 
structurally invariant parameters.

Moreover, our results show that IPW hits a hard wall.
The IPW MSE flattens as $n_1 \to 100,000$. 
The height of this variance floor halves when we 
double the target sample from $n_0=100$ to $n_0=200$. 
Clearly, the IPW estimator is bound by $n_0$.

\section{Simulation details: scaling analysis for Gaussian mixture models}
\label{app:gmm}

To further validate the theoretical efficiency gains of
Theorem \ref{thm::exp}, we analyze IPW and FL approaches in the 
context of a simple 2-component Gaussian Mixture Model (GMM).

\subsection{Simulation setup}

We consider a 1-dimensional 2-component GMM for the target distribution 
with true parameters:
\begin{itemize}
    \item Component 1: $\mu_1 = 0$, $\sigma_1^2 = 1$, 
    with mixture weight $w_1 = 0.4$
    \item Component 2: $\mu_2 = 3$, $\sigma_2^2 = 2$, 
    with mixture weight $w_2 = 0.6$
\end{itemize}

The selection odds model is purely linear: 
$\log O(x) = \alpha + \beta x$, where we set 
$\beta = 0.2$. The target sample size is fixed at $n_0 = 50$, 
while the auxiliary sample size scales 
$n_1 \in \{5000, \dots, 100000\}$. The results are averaged over 
500 Monte Carlo iterations.

In a 2-GMM, the variances ($\sigma_1^2, \sigma_2^2$) are strictly 
orthogonal to the first moment $x$. Theorem \ref{thm::exp} 
predicts that because the odds model only depends on the first 
moment, these structurally invariant variance parameters will 
bypass the target sample bottleneck and achieve the fast 
$\mathcal{O}(1/n)$ convergence rate. 
Conversely, parameters entangled with the overall mean, such as 
the mixture weight $w_1$ and the component means $\mu_i$, are not 
structurally invariant and will hit an $\mathcal{O}(1/n_0)$ floor.

\subsection{Results}

Tables \ref{tab:gmm_scaling_a} and \ref{tab:gmm_scaling_b} present 
the Mean Squared Error (MSE) of the IPW and FL estimators across
increasing auxiliary sample sizes $n_1$.

\begin{table}[htbp]
    \centering
    \begin{tabular}{cccccc}
        \toprule
        Model & $n_1$ & $\mu_1$ & $\mu_2$ & $\sigma_1^2$ & $\sigma_2^2$ \\
        \midrule
\multirow{5}{*}{IPW} & 5000 & $0.1587$ ($0.0295$) & $0.2022$ ($0.0155$) & $0.1870$ ($0.0173$) & $0.3621$ ($0.0227$) \\
                    & 10000 & $0.1432$ ($0.0164$) & $0.2291$ ($0.0218$) & $0.2098$ ($0.0247$) & $0.4299$ ($0.0272$) \\
                    & 20000 & $0.1210$ ($0.0127$) & $0.1978$ ($0.0162$) & $0.1641$ ($0.0157$) & $0.3651$ ($0.0224$) \\
                    & 50000 & $0.1056$ ($0.0091$) & $0.1734$ ($0.0128$) & $0.1280$ ($0.0125$) & $0.3666$ ($0.0210$) \\
                    & 100000 & $0.1145$ ($0.0110$) & $0.1821$ ($0.0146$) & $0.1606$ ($0.0210$) & $0.3520$ ($0.0212$) \\
\midrule
\multirow{5}{*}{FL} & 5000 & $0.0384$ ($0.0277$) & $0.0325$ ($0.0024$) & $0.0087$ ($0.0020$) & $0.0130$ ($0.0034$) \\
                    & 10000 & $0.0073$ ($0.0004$) & $0.0244$ ($0.0016$) & $0.0039$ ($0.0003$) & $0.0050$ ($0.0003$) \\
                    & 20000 & $0.0064$ ($0.0004$) & $0.0227$ ($0.0015$) & $0.0018$ ($0.0001$) & $0.0027$ ($0.0002$) \\
                    & 50000 & $0.0072$ ($0.0004$) & $0.0269$ ($0.0018$) & $0.0007$ ($4.6 \times 10^{-5}$) & $0.0010$ ($6.4 \times 10^{-5}$) \\
                    & 100000 & $0.0069$ ($0.0004$) & $0.0270$ ($0.0017$) & $0.0003$ ($1.9 \times 10^{-5}$) & $0.0005$ ($3.4 \times 10^{-5}$) \\
\bottomrule
    \end{tabular}
    \caption{MSE of GMM component means and variances across auxiliary sample size $n_1$ for target sample size $n_0 = 50$.}
    \label{tab:gmm_scaling_a}
\end{table}

\begin{table}[htbp]
    \centering
    \begin{tabular}{ccccc}
        \toprule
        Model & $n_1$ & $w_1$ & $\alpha$ & $\beta$ \\
        \midrule
\multirow{5}{*}{IPW} & 5000 & $0.0202$ ($0.0014$) & $0.0179$ ($0.0013$) & $0.0056$ ($0.0004$) \\
                    & 10000 & $0.0209$ ($0.0016$) & $0.0180$ ($0.0012$) & $0.0054$ ($0.0003$) \\
                    & 20000 & $0.0189$ ($0.0013$) & $0.0172$ ($0.0012$) & $0.0053$ ($0.0003$) \\
                    & 50000 & $0.0160$ ($0.0011$) & $0.0205$ ($0.0016$) & $0.0062$ ($0.0004$) \\
                    & 100000 & $0.0172$ ($0.0012$) & $0.0188$ ($0.0013$) & $0.0057$ ($0.0004$) \\
\midrule
\multirow{5}{*}{FL} & 5000 & $0.0032$ ($0.0004$) & $0.0179$ ($0.0013$) & $0.0056$ ($0.0004$) \\
                    & 10000 & $0.0030$ ($0.0002$) & $0.0180$ ($0.0012$) & $0.0054$ ($0.0003$) \\
                    & 20000 & $0.0028$ ($0.0002$) & $0.0171$ ($0.0012$) & $0.0053$ ($0.0003$) \\
                    & 50000 & $0.0035$ ($0.0002$) & $0.0231$ ($0.0020$) & $0.0068$ ($0.0005$) \\
                    & 100000 & $0.0034$ ($0.0002$) & $0.0225$ ($0.0017$) & $0.0067$ ($0.0004$) \\
\bottomrule
    \end{tabular}
    \caption{MSE of GMM mixture weight and odds parameters across auxiliary sample size $n_1$ for target sample size $n_0 = 50$.}
    \label{tab:gmm_scaling_b}
\end{table}


The simulation results confirm the theory.
For the invariant parameter $\sigma_1^2$, the IPW MSE stagnates 
around $0.16$, completely constrained by the small $n_0=50$. 
Meanwhile, the FL MSE continues to drop exponentially, 
showing nearly a $535\times$ improvement at $n_1=100,000$. 
Conversely, the confounded parameters $\mu_1$ and $w_1$ hit a 
hard $\mathcal{O}(1/n_0)$ floor for both IPW and FL, 
maintaining a constant improvement factor as $n_1$ scales.

\subsection{The renormalization and component starvation phenomenon}

An important empirical insight discovered during these simulations 
is the severe impact of exponential tilting on mixture models. 
When we exponentially tilt a Gaussian 
$N(\mu, \sigma^2)$ by $e^{\beta x}$, the resulting distribution is
not just  
$N(\mu + \beta\sigma^2, \sigma^2)$ but its weight is 
also scaled by a mass factor 
$\exp(\beta \mu + \frac{1}{2}\beta^2 \sigma^2)$.

For a mixture model, this means the weight of the $k$-th component $w_k^*$ 
in $A=1$ is renormalized proportionally:
$$ w_k^* \propto w_k \cdot \exp\left( \beta \mu_k + \frac{1}{2}\beta^2 \sigma_k^2 \right) $$

Because of this exponential mass factor, the odds model favors the 
component with the higher mean (assuming $\beta > 0$). 
If $\beta$ or the separation between means ($\mu_2 - \mu_1$) 
is large, the minority component undergoes \textit{exponential 
component starvation}. For instance, if we had set $\beta = 2$, 
the auxiliary weight $w_1^*$ would plunge below $0.03\%$, 
meaning that even with $100,000$ auxiliary samples, the 
optimizer would only see a few data points from Component 1. 
Thus, while FL is asymptotically fully efficient for the 
variances, its finite-sample instability for GMMs is non-trivial 
when $\beta$ strongly shifts the auxiliary distribution mass onto a 
single component.

\section{Proofs}

\subsection{Proof of Proposition \ref{prop::FL}}

\begin{proof}
To derive the asymptotic covariance matrix of the maximum likelihood estimators $\hat{\theta} = (\hat{\mu}_F, \hat{\sigma}_F^2, \hat{\alpha}_F, \hat{\beta}_F)$, we compute the inverse of the Fisher Information matrix $\mathcal{I}(\theta)$. Because we are working with an exponential family, we can use the law of total variance on the score vector to simplify the computation.

Let $\pi = \frac{e^K}{1+e^K}$, where $K = \alpha + \mu\beta + \frac{1}{2}\beta^2\sigma^2$. Taking the first derivatives of the single-observation log-likelihood $\ell$ gives the score vector $U(\theta)$:
$$
U(\theta) = \begin{bmatrix}  \frac{\partial \ell}{\partial \mu} \\  \frac{\partial \ell}{\partial \sigma^2} \\  \frac{\partial \ell}{\partial \alpha} \\  \frac{\partial \ell}{\partial \beta}  \end{bmatrix} =  \begin{bmatrix} \frac{X-\mu}{\sigma^2} \\ \frac{(X-\mu)^2 - \sigma^2}{2\sigma^4} \\ A \\ A X \end{bmatrix} - \pi \begin{bmatrix} \beta \\ \frac{1}{2}\beta^2 \\ 1 \\ \mu + \beta\sigma^2 \end{bmatrix}.
$$
Let the first vector (the data-dependent part) be $U_{data}$ and the second constant vector be $v = v(\theta)$. Thus, $U(\theta) = U_{data} - \pi v$. Since the Fisher Information is the variance of the score, and $\pi v$ is a constant, we have $\mathcal{I}(\theta) = {\sf Var}(U(\theta)) = {\sf Var}(U_{data})$.

We decompose ${\sf Var}(U_{data})$ using the law of total variance, conditioning on the missingness indicator $A$:
$$
\mathcal{I}(\theta) = \E[{\sf Var}(U_{data} \vert A)] + {\sf Var}(\E[U_{data} \vert A]).
$$

For the between-group variance, the expected value of the data score under each group is $\E[U_{data} \vert A=0] = \mathbf{0}$ and $\E[U_{data} \vert A=1] = v$.
Because $A$ is a Bernoulli random variable with probability $\pi$, ${\sf Var}(\E[U_{data}\vert A]) = {\sf Var}(A \cdot v) = \pi(1-\pi) v v^T$.

For the within-group variance, we take the weighted average $\E[{\sf Var}(U_{data}\vert A)] = (1-\pi){\sf Var}(U_{data}\vert A=0) + \pi {\sf Var}(U_{data}\vert A=1)$, which yields a matrix $\mathcal{I}_0$:
$$
\mathcal{I}_0 = \begin{bmatrix} \frac{1}{\sigma^2} & \frac{\pi\beta}{\sigma^2} & 0 & \pi \\ \frac{\pi\beta}{\sigma^2} & \frac{1}{2\sigma^4} + \frac{\pi\beta^2}{\sigma^2} & 0 & \pi\beta \\ 0 & 0 & 0 & 0 \\ \pi & \pi\beta & 0 & \pi\sigma^2 \end{bmatrix}.
$$
Notice the third row and column (corresponding to $\alpha$) are strictly zero since conditional on $A$, there is no variance in $A$.

The full Fisher Information matrix is exactly $\mathcal{I}(\theta) = \mathcal{I}_0 + \pi(1-\pi) v v^T$.
To find the asymptotic covariance $\Sigma = \frac{1}{n} \mathcal{I}^{-1}$, we partition the parameters into the outcome model $\theta_{-a} = (\mu, \sigma^2, \beta)$ and the intercept $\alpha$. The matrix takes the block form:
$$
\mathcal{I} = \begin{bmatrix} \mathcal{I}_{0, 3\times3} + c v_{-a} v_{-a}^T & c v_{-a} \\ c v_{-a}^T & c \end{bmatrix},
$$
where $c = \pi(1-\pi)$. By the properties of block matrix inversion, the top-left $3 \times 3$ block of the inverse is found using the Schur complement of the bottom-right scalar $c$:
$$
S_c = (\mathcal{I}_{0, 3\times3} + c v_{-a} v_{-a}^T) - (c v_{-a}) \frac{1}{c} (c v_{-a}^T) = \mathcal{I}_{0, 3\times3}.
$$
The $v v^T$ terms perfectly annihilate each other! The asymptotic covariance matrix for $(\mu, \sigma^2, \beta)$ is exactly $\frac{1}{n} \mathcal{I}_{0, 3\times3}^{-1}$. Inverting this $3 \times 3$ matrix yields the exact closed-form asymptotic covariance:
$$
{\sf Var}\begin{pmatrix} \hat{\mu}_F \\ \hat{\sigma}_F^2 \\ \hat{\beta}_F \end{pmatrix} = \frac{1}{n} \begin{bmatrix} \frac{\sigma_*^2}{1-\pi_*} & 0 & -\frac{1}{1-\pi_*} \\ 0 & 2\sigma_*^4 & -2\beta_*\sigma_*^2 \\ -\frac{1}{1-\pi_*} & -2\beta_*\sigma_*^2 & \frac{1}{\pi_*(1-\pi_*)\sigma_*^2} + 2\beta_*^2 \end{bmatrix}
+o(1/n).$$
Evaluating the diagonal terms and using $\pi_* = \pi(\theta_*) \approx n_1/n$ and $1-\pi_* \approx n_0/n$ immediately yields the rates stated in the Proposition for $\hat \mu_F, \hat \sigma^2_F$, and $\hat \beta_F$.

For the final parameter $\hat{\alpha}_F$, the block inversion formula gives:
$$
{\sf Var}(\hat{\alpha}_F) = \frac{1}{n} \left[ \frac{1}{\pi_*(1-\pi_*)} + v_{*,-a}^T \mathcal{I}_{*,0, 3\times3}^{-1} v_{*,-a} \right].
$$
The first term expands to $\frac{1}{n\pi_*(1-\pi_*)} \approx \frac{n}{n_0 n_1} = \frac{1}{n_1} + \frac{1}{n_0}$. Since $n_0 \ll n_1$, the $\frac{1}{n_0}$ term dominates. The second term $v_{*,-a}^T \mathcal{I}_{*,0, 3\times3}^{-1} v_{*,-a}$ scales similarly, confirming that ${\sf Var}(\hat \alpha_F) = \mathcal{O}(1/n_0)$.
\end{proof}

\subsection{Proof of Theorem \ref{thm::exp}}
\label{sec::proof_exp}
\begin{proof}
Let $\theta = (\lambda_1^T, \lambda_2^T, \beta^T)^T$ 
denote the non-intercept parameter vector. 
Let $U = (S_1(x)^T, S_2(x)^T, AT(x)^T)^T$ be the 
corresponding non-intercept data score vector, 
derived from taking the gradient of 
$\log p(x|A=0) + A(\alpha + \beta^T T(x))$. 

Note that the true full log-likelihood contains an 
additional normalization term $\log P(A=0)$. 
However, because the true score must have zero 
expectation, the gradient of this term is exactly 
$\nabla_\theta \log P(A=0) = -\mathbb{E}[U]$. 
Thus, the true score is simply the centered 
version $U_{true} = U - \mathbb{E}[U]$. Since 
shifting a random variable by a constant vector 
does not alter its variance, the Fisher Information 
matrix is exactly the variance of $U$. 
This allows us to safely drop the pure parameter 
shift from $\log P(A=0)$ when computing the variance.

Since the MLE solves the score equation,
the conventional MLE theory applies under 
assumptions (A1-4).
So we only need to find the 
covariance matrix of the score, ${\sf Var}(U)$,
to obtain the limiting asymptotic covariance matrix.

We decompose ${\sf Var}(U)$ using the law of total
variance, conditioning on $A$:
$$ 
{\sf Var}(U) = \mathbb{E}[{\sf Var}(U|A)] + {\sf Var}(\mathbb{E}[U|A]).
$$ 

As established in the proof of Proposition \ref{prop::FL}, 
the Schur complement of the intercept parameter exactly 
annihilates the between-group variance ${\sf Var}(\mathbb{E}[U|A])$. 
Thus, we can safely drop the intercept, and the asymptotic covariance 
of the non-intercept parameters is strictly determined by the 
expected conditional variance $H = \mathbb{E}[{\sf Var}(U|A)]$.
Let $\epsilon = P(A=0) \approx 0$ and 
$\pi = 1-\epsilon = P(A=1)$. 
We write $H$ as a weighted sum of the conditional 
covariances:
$$
H = \epsilon {\sf Var}(U|A=0) + \pi {\sf Var}(U|A=1).
$$

Recall that $V_{ij}^{(a)}$ is the conditional covariance 
matrix between components $i$ and $j$ under group 
$A=a$.
For the target sample ($A=0$), the score for $\beta$ is exactly zero ($A=0$), so:
$$
{\sf Var}(U|A=0) = \begin{bmatrix} V_{11}^{(0)} & V_{12}^{(0)} & \mathbf{0} \\ V_{21}^{(0)} & V_{22}^{(0)} & \mathbf{0} \\ \mathbf{0} & \mathbf{0} & \mathbf{0} \end{bmatrix}.
$$
For the auxiliary sample ($A=1$), the score for $\beta$ is active  via $T(x)$. Because $T(x) = Q S_2(x)$, we can write every block involving $T$ in terms of $S_2$:
$$
{\sf Var}(U|A=1) = \begin{bmatrix} V_{11}^{(1)} & V_{12}^{(1)} & V_{12}^{(1)} Q^T \\ V_{21}^{(1)} & V_{22}^{(1)} & V_{22}^{(1)} Q^T \\ Q V_{21}^{(1)} & Q V_{22}^{(1)} & Q V_{22}^{(1)} Q^T \end{bmatrix}.
$$
Combining these, the full matrix $H$ takes the block form:
\begin{equation}
H = \begin{bmatrix} H_{11} & H_{12} & H_{1T} \\ H_{21} & H_{22} & H_{2T} \\ H_{T1} & H_{T2} & H_{TT} \end{bmatrix} = \begin{bmatrix} \epsilon V_{11}^{(0)} + \pi V_{11}^{(1)} & \epsilon V_{12}^{(0)} + \pi V_{12}^{(1)} & \pi V_{12}^{(1)} Q^T \\ \epsilon V_{21}^{(0)} + \pi V_{21}^{(1)} & \epsilon V_{22}^{(0)} + \pi V_{22}^{(1)} & \pi V_{22}^{(1)} Q^T \\ \pi Q V_{21}^{(1)} & \pi Q V_{22}^{(1)} & \pi Q V_{22}^{(1)} Q^T \end{bmatrix}.
\label{eq::Hmatrix}
\end{equation}

To find the asymptotic information for $\lambda_1$, we compute the Schur complement of the nuisance block associated with $(\lambda_2, \beta)$. Let $M$ be this bottom-right $2 \times 2$ block matrix:
$$
M = \begin{bmatrix} \epsilon V_{22}^{(0)} + \pi V_{22}^{(1)} & \pi V_{22}^{(1)} Q^T \\ \pi Q V_{22}^{(1)} & \pi Q V_{22}^{(1)} Q^T \end{bmatrix}.
$$
To invert $M$, we compute the Schur complement of the top-left block $\epsilon V_{22}^{(0)} + \pi V_{22}^{(1)}$, which corresponds to the information matrix for $\beta$ (denoted $S_{TT}$):
$$
S_{TT} = \pi Q V_{22}^{(1)} Q^T - (\pi Q V_{22}^{(1)}) (\epsilon V_{22}^{(0)} + \pi V_{22}^{(1)})^{-1} (\pi V_{22}^{(1)} Q^T).
$$
Applying the Woodbury matrix identity \cite{golub2013matrix} to invert
the sum $\epsilon V_{22}^{(0)}+\pi V_{22}^{(1)}= 
\pi V_{22}^{(1)} + \epsilon V_{22}^{(0)}$ 
as $\epsilon \to 0$, we have:
$$
(\pi V_{22}^{(1)} + \epsilon V_{22}^{(0)})^{-1} = \frac{1}{\pi} (V_{22}^{(1)})^{-1} - \frac{\epsilon}{\pi^2} (V_{22}^{(1)})^{-1} V_{22}^{(0)} (V_{22}^{(1)})^{-1} + \mathcal{O}(\epsilon^2).
$$
Substituting this expansion into $S_{TT}$:
\begin{align*}
S_{TT} &= \pi Q V_{22}^{(1)} Q^T - \pi Q V_{22}^{(1)} \left[ \frac{1}{\pi} (V_{22}^{(1)})^{-1} - \frac{\epsilon}{\pi^2} (V_{22}^{(1)})^{-1} V_{22}^{(0)} (V_{22}^{(1)})^{-1} \right] \pi V_{22}^{(1)} Q^T + \mathcal{O}(\epsilon^2) \\
&= \pi Q V_{22}^{(1)} Q^T - \pi Q V_{22}^{(1)} Q^T + \epsilon Q V_{22}^{(0)} Q^T + \mathcal{O}(\epsilon^2) \\
&= \epsilon Q V_{22}^{(0)} Q^T + \mathcal{O}(\epsilon^2).
\end{align*}
The exact cancellation of the $\mathcal{O}(1)$ terms isolates the $1/\epsilon$ penalty completely within the $\beta$ block. By block matrix inversion, the bottom-right block of $M^{-1}$ (the asymptotic covariance of $\beta$) is precisely:
\begin{equation}
M^{-1}_{\beta \beta} = S_{TT}^{-1} = \frac{1}{\epsilon} (Q V_{22}^{(0)} Q^T)^{-1}.
\label{eq::Mbb}
\end{equation}
The top-left block of $M^{-1}$ (the asymptotic covariance of $\lambda_2$) becomes:
$$
M^{-1}_{\lambda_2 \lambda_2} = (\epsilon V_{22}^{(0)} + \pi V_{22}^{(1)})^{-1} + (\epsilon V_{22}^{(0)} + \pi V_{22}^{(1)})^{-1} \pi V_{22}^{(1)} Q^T S_{TT}^{-1} \pi Q V_{22}^{(1)} (\epsilon V_{22}^{(0)} + \pi V_{22}^{(1)})^{-1}.
$$
Substituting the expansions, the first term is bounded at $\mathcal{O}(1)$, while the second term dominates due to $S_{TT}^{-1}$:
\begin{equation}
M^{-1}_{\lambda_2 \lambda_2} = \mathcal{O}(1) + Q^T \left[ \frac{1}{\epsilon} (Q V_{22}^{(0)} Q^T)^{-1} \right] Q = \frac{1}{\epsilon} Q^T (Q V_{22}^{(0)} Q^T)^{-1} Q + \mathcal{O}(1).
\label{eq::ML2L2}
\end{equation}
Similarly, the off-diagonal block is:
\begin{equation}
M^{-1}_{\lambda_2 \beta} = - (\epsilon V_{22}^{(0)} + \pi V_{22}^{(1)})^{-1} \pi V_{22}^{(1)} Q^T S_{TT}^{-1} = -\frac{1}{\epsilon} Q^T (Q V_{22}^{(0)} Q^T)^{-1} + \mathcal{O}(1).
\label{eq::ML2B}
\end{equation}
This structure confirms that scaling the 
covariance matrix by $n_0 = n\epsilon$ captures 
exactly the components bottlenecked by the target 
sample:
$$
\sqrt{n_0} \begin{pmatrix} \hat{\lambda}_2 - \lambda_{2,*} \\ \hat{\beta} - \beta_* \end{pmatrix} \xrightarrow{d} N\left(0, \begin{bmatrix} Q^T (Q V_{22}^{(0)} Q^T)^{-1} Q & -Q^T (Q V_{22}^{(0)} Q^T)^{-1} \\ -(Q V_{22}^{(0)} Q^T)^{-1} Q & (Q V_{22}^{(0)} Q^T)^{-1} \end{bmatrix} \right).
$$

The asymptotic normality result stated above follows directly from the standard asymptotic theory for M-estimators (or Z-estimators) derived from score equations. Since the maximum likelihood estimators are consistent roots of the score equations, their asymptotic covariance matrix is given by the corresponding blocks of the inverted Fisher information matrix $M^{-1}$ that we established above.

Finally, we compute the Schur complement for the 
$\lambda_1$ block in the full matrix $H$ in equation \eqref{eq::Hmatrix}, 
which yields its information matrix $\mathcal{I}^*_{\lambda_1} = H_{11} - \begin{bmatrix} H_{12} & H_{1T} \end{bmatrix} M^{-1} \begin{bmatrix} H_{21} \\ H_{T1} \end{bmatrix}$. 
To evaluate the quadratic penalty, we substitute $H_{1T} = \pi V_{12}^{(1)} Q^T$ and rewrite the row vector:
$$
\begin{bmatrix} H_{12} & H_{1T} \end{bmatrix} = 
\underbrace{\begin{bmatrix} H_{12} & H_{12} Q^T \end{bmatrix}}_{\text{term 1}} - 
\underbrace{\begin{bmatrix} 0 & \epsilon V_{12}^{(0)} Q^T \end{bmatrix}}_{\text{term 2}}.
$$

{\bf Term 1.} 
When we multiply the first term by $M^{-1}$, we obtain:
$$
\begin{bmatrix} H_{12} & H_{12} Q^T \end{bmatrix} M^{-1} = H_{12} \begin{bmatrix} M^{-1}_{\lambda_2 \lambda_2} + Q^T M^{-1}_{\beta \lambda_2} \\ M^{-1}_{\lambda_2 \beta} + Q^T M^{-1}_{\beta \beta} \end{bmatrix}^T.
$$
Focusing purely on the $\mathcal{O}(1/\epsilon)$ 
leading terms in the column vector blocks
and utilizes equations \eqref{eq::Mbb}, 
\eqref{eq::ML2L2}, and \eqref{eq::ML2B}:
$$
M^{-1}_{\lambda_2 \lambda_2} + Q^T M^{-1}_{\beta \lambda_2} = \frac{1}{\epsilon} Q^T (Q V_{22}^{(0)} Q^T)^{-1} Q - \frac{1}{\epsilon} Q^T (Q V_{22}^{(0)} Q^T)^{-1} Q = 0,
$$
$$
M^{-1}_{\lambda_2 \beta} + Q^T M^{-1}_{\beta \beta} = -\frac{1}{\epsilon} Q^T (Q V_{22}^{(0)} Q^T)^{-1} + \frac{1}{\epsilon} Q^T (Q V_{22}^{(0)} Q^T)^{-1} = 0.
$$
Because the $1/\epsilon$ terms perfectly annihilate 
when multiplied by 
$\begin{bmatrix} I & Q^T \end{bmatrix}$, 
the resulting product is strictly $\mathcal{O}(1)$.
Note that this $\mathcal{O}(1)$ is the same
terms at the end of  equations \eqref{eq::ML2L2}
and \eqref{eq::ML2B}.

{\bf Term 2.}
For the second term, 
$-\begin{bmatrix} 0 & \epsilon V_{12}^{(0)} Q^T 
\end{bmatrix} M^{-1}$, 
the multiplication by $\epsilon$ safely suppresses 
the $O(1/\epsilon)$ magnitude of $M^{-1}$, 
also resulting in an $\mathcal{O}(1)$ vector.

Because the entire penalty $\begin{bmatrix} H_{12} & H_{1T} \end{bmatrix} M^{-1} \begin{bmatrix} H_{21} \\ H_{T1} \end{bmatrix}$ is bounded at $\mathcal{O}(1)$, the matrix $\left( \mathcal{I}^*_{\lambda_1} \right)^{-1}$ is entirely free of the $1/\epsilon$ missingness penalty! Taking the limit as the target sampling fraction shrinks ($\epsilon \to 0$), the asymptotic covariance for $\lambda_1$ converges to a well-defined $\mathcal{O}(1)$ limit $\Sigma_{\text{fast}}$, proving that $\hat{\lambda}_1$ is asymptotically normal at the fast $1/n$ rate.

While the exact form of $\Sigma_{\text{fast}}$ extracted from the $\mathcal{O}(1)$ remainders 
is tedious to write out, there is an elegant 
geometric shortcut to derive it directly. 
In the limit as $\epsilon \to 0$, the target sample 
vanishes and the full Fisher Information matrix 
$H$ from equation \eqref{eq::Hmatrix} becomes:
$$
H \rightarrow \begin{bmatrix} V_{11}^{(1)} & V_{12}^{(1)} & V_{12}^{(1)} Q^T \\ V_{21}^{(1)} & V_{22}^{(1)} & V_{22}^{(1)} Q^T \\ Q V_{21}^{(1)} & Q V_{22}^{(1)} & Q V_{22}^{(1)} Q^T \end{bmatrix},
$$
as $\epsilon\rightarrow 0$.
Notice that this matrix is strictly singular because 
the $\beta$ block is perfectly collinear with 
the $\lambda_2$ block (their respective scores in 
the auxiliary sample are $Q S_2(x)$ and $S_2(x)$). 
This singularity simply reflects that without the 
target sample, $\lambda_2$ and $\beta$ are 
confounded and cannot be individually identified.

However, $\lambda_1$ remains perfectly identifiable. In this $\epsilon=0$ auxiliary-only model, $\lambda_2$ and $\beta$ collapse into a single effective nuisance parameter governed by the score $S_2(x)$. By standard likelihood theory, the efficient information for $\lambda_1$ is just the partial information matrix in the auxiliary group treating $S_2(x)$ as the nuisance tangent space. Thus, the limit information is $\mathcal{I}^*_{\lambda_1}(\text{fast}) = V_{11}^{(1)} - V_{12}^{(1)} (V_{22}^{(1)})^{-1} V_{21}^{(1)}$. We conclude that:
$$
\Sigma_{\text{fast}} = \left( V_{11}^{(1)} - V_{12}^{(1)} (V_{22}^{(1)})^{-1} V_{21}^{(1)} \right)^{-1}.
$$
\end{proof}

\subsection{Proof of Theorem \ref{thm::rep}}
\begin{proof}
By definition of the projection matrix 
$P_Q = Q^T(QQ^T)^{-1}Q$, we consider the linear 
transformations 
$\hat{\lambda}_{2,\parallel} = P_Q \hat{\lambda}_2$ 
and $\hat{\lambda}_{2,\perp} = (I_{d_2} - P_Q) \hat{\lambda}_2$.

{\bf The parallel component $\hat \lambda_{2,\parallel}$.} 
This part is easier to prove so we start with it.
For the parallel component,
Theorem \ref{thm::exp} establishes that the 
asymptotic covariance matrix for $\hat{\lambda}_2$ is 
$\Sigma_{\parallel} = Q^T(Q V_{22}^{(0)} Q^T)^{-1} Q$.
Thus, any linear transformation of $\hat \lambda_2$ 
satisfies
$$
\sqrt{n_0}(\hat \lambda_{2,\parallel} - \lambda_{2,\parallel,*}) = P_Q \sqrt{n_0}(\hat{\lambda}_2 - \lambda_{2,*}) \xrightarrow{d} N\left(0, P_Q \left[ Q^T(Q V_{22}^{(0)} Q^T)^{-1} Q \right] P_Q^T \right).
$$
Because $P_Q Q^T = Q^T(QQ^T)^{-1}Q Q^T = Q^T$, 
the covariance matrix simplifies to 
$Q^T(Q V_{22}^{(0)} Q^T)^{-1} Q = \Sigma_{\parallel}$.
Therefore, 
$$
\sqrt{n_0}(\hat \lambda_{2,\parallel} - \lambda_{2,\parallel,*})
\xrightarrow{d} N\left(0, Q^T(Q V_{22}^{(0)} Q^T)^{-1} Q \right).
$$

{\bf The orthogonal component $\hat \lambda_{2,\perp}$:
breakdown of the naive method.} 
For the orthogonal component, the analysis is 
more invovled. Naively evaluating the same 
transformation yields
$$
\sqrt{n_0}(\hat \lambda_{2,\perp} - \lambda_{2,\perp,*}) = (I_{d_2} - P_Q) \sqrt{n_0}(\hat{\lambda}_2 - \lambda_{2,*}) \xrightarrow{d} N\left(0, (I_{d_2} - P_Q) \left[ Q^T(Q V_{22}^{(0)} Q^T)^{-1} Q \right] (I_{d_2} - P_Q)^T \right).
$$
Because $(I_{d_2} - P_Q) Q^T = Q^T - P_Q Q^T = Q^T - Q^T = \mathbf{0}$, 
the scaled asymptotic covariance matrix for 
$\hat \lambda_{2,\perp}$ is exactly the zero matrix. 
Since the limiting variance is exactly 0 under 
the $n_0$ scaling, this mathematically confirms 
that $\hat \lambda_{2,\perp}$ converges at a rate 
strictly faster than $\mathcal{O}(1/\sqrt{n_0})$. 

{\bf The orthogonal component $\hat \lambda_{2,\perp}$: 
formal derivation.} 
To formally derive the precise asymptotic 
distribution, we return to the score equations and 
the structural definition of the full likelihood. 
We can project the sufficient statistics 
identically to the parameters
$$
S_2(x) = P_Q S_2(x) + (I_{d_2} - P_Q) S_2(x) 
= S_{2,\parallel}(x) + S_{2,\perp}(x).
$$
The full log-density of the target distribution 
can now be written as
$$
\log p(x|A=0) \propto \lambda_1^T S_1(x) + \lambda_{2,\parallel}^T S_{2,\parallel}(x) + \lambda_{2,\perp}^T S_{2,\perp}(x).
$$
When we plug this geometric decomposition into the 
odds model, we evaluate the orthogonal term:
$$
\beta^T T(x) = \beta^T Q S_2(x) = \beta^T Q [S_{2,\parallel}(x) + S_{2,\perp}(x)].
$$
Because $Q (I_{d_2} - P_Q) = Q - Q Q^T(QQ^T)^{-1}Q = Q - Q = \mathbf{0}$, the term $\beta^T Q S_{2,\perp}(x) = 0$. The orthogonal statistic $S_{2,\perp}(x)$ completely vanishes from the selection mechanism!

Because the odds model depends exclusively on 
$S_{2,\parallel}(x)$, the new orthogonal statistic 
$S_{2,\perp}(x)$ satisfies the exact same structural 
independence conditions as $S_1(x)$. Therefore, 
by direct application of Theorem \ref{thm::exp}, 
its corresponding parameter $\lambda_{2,\perp}$ is cleanly independent of the odds mechanism and natively achieves the fast total-sample efficiency bound:
$$
\sqrt{n}(\hat \lambda_{2,\perp} - \lambda_{2,\perp,*}) \xrightarrow{d} N(0, \Sigma_{\perp}),
$$
where $\Sigma_{\perp}$ is the asymptotic covariance matrix block extracted 
from the full information matrix.

\end{proof}

\subsection{Proof of Theorem~\ref{thm::mixture}}

\begin{proof}
Let the full parameter vector be $\theta = (\boldsymbol{\lambda}_1, \boldsymbol{\lambda}_2, \mathbf{w}, \alpha, \beta)$, where 
$$
\boldsymbol{\lambda}_1 = (\lambda_{1,1}, \dots, \lambda_{K,1}) \quad \text{and} \quad \boldsymbol{\lambda}_2 = (\lambda_{1,2}, \dots, \lambda_{K,2}).
$$
The single-observation score vector is defined as 
$$
U_{data}(\theta)\equiv U_{data}(\theta|A,X) = \nabla_\theta [\log p(X|A=0) + A(\alpha + \beta^T T(X))].
$$ 

Note that the true full log-likelihood contains an 
additional term $\log P(A=0)$.
However, because the true score has zero 
expectation, the gradient of this term is 
exactly 
$\nabla_\theta \log P(A=0) 
= -\mathbb{E}[U_{data}(\theta)]$. 
Thus, the true score is simply the centered 
version of the data score: 
$U_{true}(\theta) = U_{data}(\theta) 
- \mathbb{E}[U_{data}(\theta)]$. 
Since shifting a random variable by a constant 
vector does not alter its variance, the Fisher 
Information matrix is exactly the variance of 
$U_{data}(\theta)$, meaning we can compute the 
variance while dropping the pure parameter shift 
from $\log P(A=0)$.

For a given mixture component $k$, we define the 
localized sufficient statistics centered by the 
gradient of the base partition function 
$\Psi_k\equiv \Psi(\lambda_{k,1}, \lambda_{k,2})$:
$$
s_{k,1}(x) = S_1(x) - \nabla_{\lambda_{k,1}} \Psi_k, \quad s_{k,2}(x) = S_2(x) - \nabla_{\lambda_{k,2}} \Psi_k.
$$
The full score vector $U_{data}(\theta)$ partitions cleanly into 
component blocks:
\begin{align*}
U_{\lambda_{k,1}} &= \gamma_k(X) s_{k,1}(X) \\
&= \gamma_k(X) [S_1(X) - \nabla_{\lambda_{k,1}} \Psi_k], \\
U_{\lambda_{k,2}} &= \gamma_k(X) s_{k,2}(X) \\
&= \gamma_k(X) [S_2(X) - \nabla_{\lambda_{k,2}} \Psi_k], \\
U_{w_k} &= \frac{\gamma_k(X)}{w_k} - \frac{\gamma_K(X)}{w_K}, \\
U_\alpha &= A, \\
U_\beta &= A \cdot T(X) = A \cdot Q S_2(X),
\end{align*}
where $\gamma_k(x)$ is the posterior responsibility 
of component $k$ given $X=x$:
$$
\gamma_k(x) = \frac{w_k \exp\left\{ \lambda_{k,1}^T S_1(x) + \lambda_{k,2}^T S_2(x) - \Psi_k \right\}}{\sum_{j=1}^K w_j \exp\left\{ \lambda_{j,1}^T S_1(x) + \lambda_{j,2}^T S_2(x) - \Psi_j \right\}}.
$$
Because the marginal probability $P(A=0)$ induces a rank-1 update that perfectly annihilates against the nuisance intercept $\alpha$ during matrix inversion, the asymptotic covariance matrix of all non-intercept parameters is strictly determined by $\Sigma_{-a} = \frac{1}{n} \mathcal{I}_0^{-1}$, where $\mathcal{I}_0 = \mathbb{E}[{\sf Var}(U_{data}|A)]$. 

Let $\epsilon = P(A=0) \approx 0$ and 
$\pi = P(A=1)$. We decompose the expected Fisher 
Information matrix as a convex combination of the 
conditional covariance matrices: 
\begin{equation}
\mathcal{I}_0 = \epsilon V^{(0)} + \pi V^{(1)},
\label{eq::mixture_decomp}
\end{equation}
where $V^{(a)} = {\sf Var}(U_{data} | A=a)$.

{\bf The unidentifiable subspace.}
The density of the auxiliary sample $A=1$ is proportional to:
$$
p(x|A=1) \propto e^{\beta^T Q S_2(x)} \sum_{k=1}^K w_k h(x) \exp\left\{ \lambda_{k,1}^T S_1(x) + \lambda_{k,2}^T S_2(x) + \Psi(\lambda_{k,1}, \lambda_{k,2}) \right\}.
$$
There exists a specific subspace along which this 
distribution is perfectly invariant. 
For an arbitrary vector $\Delta \in \mathbb{R}^k$, 
consider following simultaneous perturbations:
\begin{itemize}
\item Shift the odds parameter: $\beta \rightarrow \beta + \Delta$.
\item Shift all component parameters uniformly: $\lambda_{k,2} \rightarrow \lambda_{k,2} - Q^T \Delta$ for all $k=1,\dots, K$.
\item Tilt the mixture weights to absorb the shift in the base partition functions: 
$$w_k \rightarrow w_k' \propto w_k \exp\left\{ \Psi(\lambda_{k,1}, \lambda_{k,2}) - \Psi(\lambda_{k,1}, \lambda_{k,2} - Q^T \Delta) \right\}.$$
\end{itemize}
Under the above movement, the PDF $p(x|A=1)$
remains identical. Thus, this forms a $k$-dimensional 
unidentifiable subspace under $A=1$.
Because the likelihood is perfectly flat along 
these directions, the Fisher Information matrix 
under $A=1$, $V^{(1)}$, must have a corresponding 
$k$-dimensional null space. We can therefore define 
$\mathbf{U}_{null}$ to be a matrix whose $k$ columns 
form an orthonormal basis for this unidentifiable 
subspace, i.e., 
\begin{equation}
  V^{(1)} \mathbf{U}_{null} = \mathbf{0}.
  \label{eq::mixture_null}
\end{equation}

Note that $\gamma_k(x)$ depends strictly on the 
sum $\lambda_{k,1}^T S_1(x) + \lambda_{k,2}^T S_2(x)$. 
Because $S_1(x)$ is structurally independent of 
$S_2(x)$, the parameter $\boldsymbol{\lambda}_1$ 
cannot be confounded by shifts in the odds 
parameter $\beta$. Therefore, 
$\boldsymbol{\lambda}_1$ is strictly invariant to 
the null space of the auxiliary sample, meaning 
its corresponding block in the null matrix is 
exactly zero:
\begin{equation}
  \mathbf{U}_{null} = \begin{bmatrix} \mathbf{U}_{\boldsymbol{\lambda}_1} \\ \mathbf{U}_{rest} \end{bmatrix} = \begin{bmatrix} \mathbf{0} \\ \mathbf{U}_{rest} \end{bmatrix}.
  \label{eq::mixture_null2}
\end{equation}

{\bf The inverse of Fisher Information matrix.}
To invert the singularly perturbed matrix $\mathcal{I}_0$, 
we rotate the coordinate system to isolate the null space. We construct an orthogonal matrix $\mathbf{W} = [\mathbf{U}_{null} \mid \mathbf{U}_\perp]$, where the first $k$ columns are the null matrix $\mathbf{U}_{null}$, and the remaining columns $\mathbf{U}_\perp$ form an orthonormal basis for the orthogonal complement. 
The inverse can be computed via this rotation:
$$
\mathcal{I}_0^{-1} = \mathbf{W} \left( \mathbf{W}^T \mathcal{I}_0 \mathbf{W} \right)^{-1} \mathbf{W}^T.
$$

By equations \eqref{eq::mixture_null} and \eqref{eq::mixture_null2},
we have
$\mathbf{W}^T V^{(1)} \mathbf{W} = 
\begin{bmatrix} \mathbf{0} & \mathbf{0}^T \\ 
\mathbf{0} & \Lambda \end{bmatrix}$ 
and $\mathbf{W}^T V^{(0)} \mathbf{W} = 
\begin{bmatrix} \mathbf{C} & \mathbf{B}^T \\ 
\mathbf{B} & \mathbf{D} \end{bmatrix}$, 
where $\mathbf{C} = \mathbf{U}_{null}^T V^{(0)} \mathbf{U}_{null}$ and $\Lambda$ is strictly positive-definite.

Putting this into equation \eqref{eq::mixture_decomp}, 
we get 
$$
\widetilde{\mathcal{I}}_0 = \mathbf{W}^T \mathcal{I}_0 \mathbf{W} = \begin{bmatrix} \epsilon \mathbf{C} & \epsilon \mathbf{B}^T \\ \epsilon \mathbf{B} & \pi \Lambda + \epsilon \mathbf{D} \end{bmatrix}.
$$
To find the inverse of this block matrix, we use the Schur 
complement of the bottom-right block 
$\widetilde{\mathcal{I}}_{0,22} = \pi \Lambda + \epsilon \mathbf{D}$. The Schur complement $S$ for the top-left block is:
$$
S = \epsilon \mathbf{C} - (\epsilon \mathbf{B}^T) \widetilde{\mathcal{I}}_{0,22}^{-1} (\epsilon \mathbf{B}) = \epsilon \mathbf{C} - \epsilon^2 \mathbf{B}^T \widetilde{\mathcal{I}}_{0,22}^{-1} \mathbf{B}.
$$
As $\epsilon \to 0$, the top-left block of our inverted matrix is exactly $S^{-1}$:
$$
S^{-1} = \frac{1}{\epsilon} \mathbf{C}^{-1} + \mathcal{O}(1).
$$
For the off-diagonal blocks of the inverse (e.g., $-S^{-1} \epsilon \mathbf{B}^T \widetilde{\mathcal{I}}_{0,22}^{-1}$), the $\epsilon$ in the numerator perfectly cancels the $\epsilon$ in the denominator of $S^{-1}$. Thus, every other block in $\widetilde{\mathcal{I}}_0^{-1}$ is bounded at $\mathcal{O}(1)$:
$$
\widetilde{\mathcal{I}}_0^{-1} = \begin{bmatrix} \frac{1}{\epsilon} \mathbf{C}^{-1} & \mathbf{0}^T \\ \mathbf{0} & \mathbf{0} \end{bmatrix} + \mathcal{O}(1).
$$
Rotating back to the original parameter space using $\mathbf{W} = [\mathbf{U}_{null} \mid \mathbf{U}_\perp]$ yields:
$$
\mathcal{I}_0^{-1} = \mathbf{W} \widetilde{\mathcal{I}}_0^{-1} \mathbf{W}^T = \begin{bmatrix} \mathbf{U}_{null} \mid \mathbf{U}_\perp \end{bmatrix} \left( \begin{bmatrix} \frac{1}{\epsilon} \mathbf{C}^{-1} & \mathbf{0}^T \\ \mathbf{0} & \mathbf{0} \end{bmatrix} + \mathcal{O}(1) \right) \begin{bmatrix} \mathbf{U}_{null}^T \\ \mathbf{U}_\perp^T \end{bmatrix} = \frac{1}{\epsilon} \mathbf{U}_{null} \mathbf{C}^{-1} \mathbf{U}_{null}^T + M_{\text{fast}},
$$
where $M_{\text{fast}}$ is some positive-definite matrix of order $\mathcal{O}(1)$.

{\bf The $\lambda_1$ parameter block.}
The matrix $\mathbf{U}_{null} \mathbf{C}^{-1} \mathbf{U}_{null}^T$ distributes the $\mathcal{O}(1/\epsilon)$ variance penalty exclusively to parameters that have non-zero entries in $\mathbf{U}_{null}$. By extracting the block corresponding to the fast component parameters $\boldsymbol{\lambda}_1$, we obtain:
$$
[\mathcal{I}_0^{-1}]_{11} = \frac{1}{\epsilon} \mathbf{U}_{\boldsymbol{\lambda}_1} \mathbf{C}^{-1} \mathbf{U}_{\boldsymbol{\lambda}_1}^T + [M_{\text{fast}}]_{11}.
$$
Because $\mathbf{U}_{\boldsymbol{\lambda}_1} = 
\mathbf{0}$ from equation \eqref{eq::mixture_null2}, 
the $1/\epsilon$ penalty vanishes exactly:
$$
[\mathcal{I}_0^{-1}]_{11} = \mathbf{0} + [M_{\text{fast}}]_{11} = \mathcal{O}(1).
$$
Therefore, the inverse of the Fisher Information
matrix on $\lambda_1$ is $\mathcal{O}(1)$.  
By standard score equation and M-estimation theory
under assumptions (A1'-A4'), 
the MLE 
is asymptotically normal at the fast total-sample 
rate:
$$
\sqrt{n}(\hat{\boldsymbol{\lambda}}_{1} - \boldsymbol{\lambda}_{1}^*) \xrightarrow{d} N(0, [M_{\text{fast}}]_{11})
$$
for some $\mathcal{O}(1)$ matrix $[M_{\text{fast}}]_{11}$.

\end{proof}

\subsection{Proof of Theorem \ref{thm::RE}}

\begin{proof}
Recall the true target density can be written as 
$p(x|A=0) = \frac{1}{\Omega} \frac{p(x|A=1)}{O(x)}$, where $\Omega = \int \frac{p(z|A=1)}{O(z)} dz$ is the normalization constant
and
$\hat p_{ASR}(x|A=0) = \frac{1}{\hat \Omega} \frac{\hat p_h(x|A=1)}{\hat O(x)}$, with $\hat \Omega = \int \frac{\hat p_h(z|A=1)}{\hat O(z)} dz$.
We proceed by bounding the error of the 
unnormalized ratio and the normalization 
constant separately.

{\bf Part 1: unnormalized ratio $\frac{\hat p_h(x|A=1)}{\hat O(x)}$.}
First, we analyze the unnormalized density ratio 
$R_n(x) \equiv \frac{\hat p_h(x|A=1)}{\hat O(x)}$. 
We decompose the error as follows:
\begin{align}
\left| \frac{\hat p_h(x|A=1)}{\hat O(x)} - \frac{p(x|A=1)}{O(x)} \right| &= \left| \frac{\hat p_h(x|A=1) O(x) - p(x|A=1) \hat O(x)}{\hat O(x) O(x)} \right| \nonumber \\
&\le \frac{ \left| \hat p_h(x|A=1) - p(x|A=1) \right| O(x) + p(x|A=1) \left| \hat O(x) - O(x) \right| }{ \hat O(x) O(x) }.
\label{eq::proof::ratio_decomp}
\end{align}

To bound \eqref{eq::proof::ratio_decomp}, we need 
to ensure the denominator $\hat O(x) O(x)$ 
does not cause any problems. 
By assumption (O), the true odds function $O(x)$ is 
continuous and bounded away
from zero on the compact support of $X|A=1$. 
Thus, there exists
$\delta > 0$ such that $O(x) \ge \delta$ for all $x$
in the support of $X|A=1$.

Furthermore, the assumption that 
$\sup_x |\hat O(x) - O(x)| = \mathcal{O}_P(r_n)$ 
implies that $\hat O(x)$ is uniformly consistent. 
Thus, 
$$
P\left(\sup_x |\hat O(x) - O(x)| < \frac{\delta}{2}\right) 
\rightarrow 1.
$$

Consequently, the denominator is uniformly bounded away from zero: $\left( \hat O(x) O(x) \right)^{-1} \le \frac{2}{\delta^2}$ with high probability.
Additionally, because $p(x|A=1)$ is twice 
continuously differentiable on a compact support
(assumption (P)), 
it is uniformly bounded, i.e., 
$\sup_x p(x|A=1) \le M < \infty$
for some constant $M$.

Applying these bounds to equation \eqref{eq::proof::ratio_decomp} yields:
\begin{equation}
\left| \frac{\hat p_h(x|A=1)}{\hat O(x)} - \frac{p(x|A=1)}{O(x)} \right| \le C_1 \left| \hat p_h(x|A=1) - p(x|A=1) \right| + C_2 \left| \hat O(x) - O(x) \right|,
\label{eq::proof::ratio_bound_rigorous}
\end{equation}
for some absolute constants $C_1, C_2 > 0$.

Under the stated KDE regularity conditions (K), 
classical nonparametric theory \cite{chen2017tutorial} establishes 
the pointwise asymptotic MSE rate:
$$
\hat p_h(x|A=1) - p(x|A=1) = \mathcal{O}(h^2) + \mathcal{O}_P\left(\sqrt{\frac{1}{n_1 h^d}}\right).
$$
Substituting this and the uniform bound on the odds estimator into equation \eqref{eq::proof::ratio_bound_rigorous}, we obtain the error for the unnormalized ratio:
\begin{equation}
\frac{\hat p_h(x|A=1)}{\hat O(x)} - \frac{p(x|A=1)}{O(x)} = \mathcal{O}(h^2) + \mathcal{O}_P\left(r_n + \sqrt{\frac{1}{n_1 h^d}}\right).
\label{eq::proof::ratio_rate}
\end{equation}

{\bf Part 2: normalization constant $\hat \Omega$.}
We now bound the error of the normalization constant $\hat \Omega$. 
Because $X$ is assumed to have compact support (assumption (P)), 
the domain of integration is finite. 
By equation \eqref{eq::proof::ratio_bound_rigorous}, 
\begin{align}
|\hat \Omega - \Omega| &\le \int \left| \frac{\hat p_h(x|A=1)}{\hat O(x)} - \frac{p(x|A=1)}{O(x)} \right| dx \nonumber \\
&\le C_1 \int \left| \hat p_h(x|A=1) - p(x|A=1) \right|dx + C_2 \int\left| \hat O(x) - O(x) \right|dx,
\label{eq::proof::omega_rate}
\end{align}
which is the integrated error of the auxiliary
KDE along with the integrated error of the odds model. 
Under (P) and (K), the integrated error of 
the auxiliary KDE is well-known to be
$$
\int \left| \hat p_h(x|A=1) - p(x|A=1) \right|dx
= \mathcal{O}(h^2)+\mathcal{O}_P\left( \sqrt{\frac{1}{n_1 h^d}} \right)
$$
and the assumption (O) along with the compact support
of $X$ implies that 
$$
\int\left| \hat O(x) - O(x) \right|dx  = O_P(r_n).
$$
Thus, equation \eqref{eq::proof::omega_rate} reduces to
$$
|\hat \Omega - \Omega| = O(h^2) + O_P\left(r_n + \sqrt{\frac{1}{n_1 h^d}}\right).
$$

Finally, we analyze the final ASR-KDE estimator by applying a standard Taylor expansion (Delta method) to the ratio of two estimators. 
\begin{align*}
\hat p_{ASR}(x|A=0) - p(x|A=0) &= \frac{\hat p_h(x|A=1) / \hat O(x)}{\hat \Omega} - \frac{p(x|A=1) / O(x)}{\Omega} \\
&= \frac{1}{\Omega} \left( \frac{\hat p_h(x|A=1)}{\hat O(x)} - \frac{p(x|A=1)}{O(x)} \right)\\
 &- \frac{p(x|A=1)/O(x)}{\Omega^2} (\hat \Omega - \Omega) + \mathcal{O}_P\left( (\hat \Omega - \Omega)^2 \right).
\end{align*}
Since the true normalization constant $\Omega > 0$ and the true unnormalized density $p(x|A=1)/O(x)$ is bounded, the leading terms are linear in the errors of the unnormalized ratio and $\hat \Omega$.
Substituting the rates from equations \eqref{eq::proof::ratio_rate} and \eqref{eq::proof::omega_rate} into this expansion, we conclude that the total error is dominated by these identical rates:
$$
\hat p_{ASR}(x|A=0) - p(x|A=0) = \mathcal{O}(h^2) + \mathcal{O}_P\left(r_n + \sqrt{\frac{1}{n_1 h^d}}\right).
$$
\end{proof}

\bibliographystyle{plain}
\bibliography{references}

\end{document}